\newcommand{\red}[1]{\textcolor{red}{#1}}
\newtheorem{assumption}{Assumption}[section]
\newtheorem{criterion}[theorem]{Criterion}
\newtheorem{remark}[theorem]{Remark}
\newcommand{\eat}[1]{}
\newcommand{\bo}[1]{\textcolor{blue}{Bo: #1}}
\newcommand{\YH}[1]{\textcolor{red}{YH: #1}}
\newcommand\sign{\operatorname{sign}}
\DeclareMathOperator{\Tr}{Tr}
\DeclareMathOperator{\Var}{Var}
\renewcommand{\paragraph}[1]{\noindent\textbf{#1}}
\begin{document}
\title{Is Vertical Logistic Regression Privacy-Preserving? \\ A Comprehensive Privacy Analysis and Beyond} 

\author{Yuzheng Hu}
\affiliation{UIUC}
\email{yh46@illinois.edu}

\author{Tianle Cai}
\affiliation{Princeton University}
\email{tianle.cai@princeton.edu}

\author{Jinyong Shan}
\affiliation{Sudo Technology Co.,LTD}
\email{shanjy@sudoprivacy.com}

\author{Shange Tang}
\affiliation{Princeton University}
\email{shangetang@princeton.edu}

\author{Chaochao Cai}
\affiliation{Sudo Technology Co.,LTD}
\email{caic@sudoprivacy.com}

\author{Ethan Song}
\affiliation{Sudo Technology Co.,LTD}
\email{songym@sudoprivacy.com}

\author{Bo Li}
\affiliation{UIUC}
\email{lbo@illinois.edu}

\author{Dawn Song}
\affiliation{UC Berkeley}
\email{dawnsong@cs.berkeley.edu}
\date{}

\begin{abstract}

    We consider vertical logistic regression (VLR) trained with mini-batch gradient descent --- a setting which has attracted growing interest among industries and proven to be useful in a wide range of applications including finance and medical research. We provide a comprehensive and rigorous privacy analysis of VLR in a class of open-source Federated Learning frameworks, where the protocols might differ between one another, yet a procedure of obtaining local gradients is implicitly shared. We first consider the honest-but-curious threat model, in which the detailed implementation of protocol is neglected and only the shared procedure is assumed, which we abstract as an oracle. We find that even under this general setting, single-dimension feature and label can still be recovered from the other party under suitable constraints of batch size, thus demonstrating the potential vulnerability of all frameworks following the same philosophy. Then we look into a popular instantiation of the protocol based on Homomorphic Encryption (HE). We propose an active attack that significantly weaken the constraints on batch size in the previous analysis via generating and compressing auxiliary ciphertext. To address the privacy leakage within the HE-based protocol, 
    we develop a simple-yet-effective countermeasure based on Differential Privacy (DP), and provide both  utility and privacy guarantees for the updated algorithm.  Finally, we empirically verify the effectiveness of our attack and defense on benchmark datasets. Altogether, our findings suggest that all vertical federated learning frameworks that solely depend on HE might contain {severe} privacy risks, and DP, which has already demonstrated its power in horizontal federated learning, can also play a crucial role in the vertical setting, especially when coupled with HE or secure multi-party computation (MPC) techniques. 
    
\end{abstract}

\begin{CCSXML}
<ccs2012>
   <concept>
       <concept_id>10002978.10002991.10002995</concept_id>
       <concept_desc>Security and privacy~Privacy-preserving protocols</concept_desc>
       <concept_significance>500</concept_significance>
       </concept>
   <concept>
       <concept_id>10002978.10003029.10011150</concept_id>
       <concept_desc>Security and privacy~Privacy protections</concept_desc>
       <concept_significance>500</concept_significance>
       </concept>
   <concept>
       <concept_id>10010147.10010257</concept_id>
       <concept_desc>Computing methodologies~Machine learning</concept_desc>
       <concept_significance>500</concept_significance>
       </concept>
 </ccs2012>
\end{CCSXML}

\ccsdesc[500]{Security and privacy~Privacy-preserving protocols}
\ccsdesc[500]{Security and privacy~Privacy protections}
\ccsdesc[500]{Computing methodologies~Machine learning}

\keywords{Vertical logistic regression, federated learning, homomorphic encryption, differential privacy, label recovery attack} 

\maketitle

\section{Introduction} \label{sec_intro}


Recent years have witnessed the advancement of Artificial Intelligence (AI) in a wide range of tasks, such as image recognition \cite{he2016deep}, natural language processing \cite{devlin2018bert}, recommendation \cite{gharibshah2020deep} and game \cite{silver2016mastering}.  These achievements crucially rely on the availability of massive amount of labeled data. Across various industries, however, data are usually distributed over multiple places in different forms, {and} most organizations only possess partial or limited amount of \textit{unlabeled} data due to the heavy cost of annotation. Further, it has been increasingly difficult for organizations to share and cooperatively use their data from a legislative perspective owing to regulations like General Data Protection Regulation (GDPR) \cite{voigt2017eu} and California Consumer Privacy Act (CCPA) \cite{CCPA}, which enforce strict requirements on the use of data in order to protect individuals' privacy. 

To address this issue, the concept of federated learning (FL) was proposed by Google \cite{konevcny2016federated, konevcny2016federated_kmy, mcmahan2016federated} in 2016, aiming to build machine learning models across multiple decentralized parties without revealing their raw data. Over the past few years, there is an increasing concern towards privacy and security within both academia and industry, and FL has been widely adopted in finance \cite{liu2020secure}, medical research \cite{jochems2016distributed}, smart phones \cite{hard2018federated} and so on. Meanwhile, several open source FL systems have been developed, including TensorFlow Federated (TFF) \cite{TFF} by Google, Federated AI Technology Enabler (FATE) \cite{FATE} by Tencent Webank, PySyft \cite{ryffel2018generic} by Openmined, LEAF \cite{caldas2018leaf} by CMU and PaddleFL \cite{PaddleFL} by Baidu. The convenience brought by these frameworks makes FL the mainstream paradigm for proprietary entities to learn jointly across their geographical boundaries.

Based on the distribution characteristics of data, FL can be categorized 
into horizontal federated learning (HFL) and vertical federated learning (VFL) \cite{yang2019federated}. HFL is a system in which all participants share the same feature space, but the sample space could be significantly different. For example, a set of domestic banks may share a similar financial profile, however very few common customers. In such case, HFL can be applied to agglomerate models learnt in each individual banks. In VFL (a.k.a. split learning), participants first align their user ID to ensure a common sample space, then jointly train a model based on the aggregation of different features. For example, banks and fund companies may own different perspectives of a customer’s financial status, so they can perform better risk management leveraging the information from multiple dimensions.

While FL came up as a promising solution to preserve users’ privacy, potential risk of data leakage has been recognized under this new paradigm, which may hinder its widespread adoption. The main observation is that, although the raw data stays untouched, the exchange of intermediate values within the execution of FL framework, might reveal much private information \textit{even with cryptography algorithm}. Zhu and Han~\cite{zhu2020deep} show that it is possible to \textit{learn from gradients} and recover the original input as long as the batch size is small; Zhao et al.~\cite{zhao2020idlg} further show that the label can be directly inferred from the gradients. Other forms of attack include \textit{membership inference} \cite{nasr2019comprehensive}, in which the adversary aims to learn whether a given sample belongs to another participant; \textit{property inference} \cite{melis2019exploiting}, where the adversary intends to learn data properties that are independent of the main task; and \textit{model poisoning/backdoor attack} \cite{bhagoji2019analyzing, bagdasaryan2020backdoor}, with the goal of causing the global model to misclassify a set of chosen inputs. As FL is gaining more popularity within the AI community, privacy analysis of the current FL frameworks has become one of the most important research directions, which could potentially lead to the improvement over existing algorithms and protocols.

It has come to our attention that a major study of the privacy analysis of FL focuses on HFL, whereas the corresponding work on VFL lacks, albeit VFL is being widely used in industry \cite{wang2019interpret, feng2020multi, gu2020federated, Webank1, Webank2}. As far as we know, there are only a few exceptions: Weng et al.~\cite{weng2020privacy} propose \textit{reverse multiplication attack} and \textit{reverse sum attack} for logistic regression and Secure XGBoost \cite{cheng2021secureboost} in the VFL setting and demonstrate the privacy leakage of feature and partial order; Luo et al.~\cite{luo2021feature} study feature attack with similar models in the prediction phase. However, the theoretical analyses are either not rigorous or based on unrealistic assumptions; further, none of these works consider the privacy of \textit{labels} of training instances, which is a crucial part of the raw data. Recently, Liu et al.~\cite{liu2020backdoor} point out that the label can be inferred from the sign of gradients, but this is just a simple extension of ~\cite{zhao2020idlg} in the vertical setting and does not hold when using an approximate version of the cross-entropy loss. Li et al.~\cite{li2022label} consider the model-agnostic split learning, and propose a norm-based approach and a direction-based approach to identify the labels based on the information of per-sample gradient during training; however, their analyses relies on certain heuristics regarding the behavior of the model and is restricted to the two-party case.

Aiming to provide an in-depth understanding of a \textit{specific} model, we focus on a popular setting in VFL, where two (or multiple) parties collaboratively train a logistic regression model using mini-batch gradient descent, with separated features and labels owned by only one of them \cite{hardy2017private}. We refer to the setting as vertical logistic regression (VLR). Though being simple, VLR has attracted growing interest among industries and proven to be useful in many practical settings such as risk control \cite{chen2021homomorphic}, credit scoring, \cite{zheng2020vertical} and clinical diagnosis \cite{wu2013privacy}.  Also, while this setting has been studied by \cite{weng2020privacy, luo2021feature} as mentioned above, our work tends to provide a  \textit{comprehensive} and \textit{rigorous} understanding of the training framework, which could potentially serve as a useful guideline for practice.

Specifically, starting from the two-party scenario and assuming an oracle of obtaining local gradients which is actually being deployed in practice \cite{FATE-HeterLR}, we provide a fine-grained privacy analysis for both parties, and propose attacks on both feature and label levels under suitable conditions of batch size. For feature attack we come up with a corresponding hardness result, and for label recovery attack we show it naturally generalizes to the multi-party scenario. We then take into account the protocol of obtaining gradients based on Homomorphic Encryption (HE), and propose an active attack via generating and compressing auxiliary ciphertext, which allows us to relax the constraints of batch size. See Figure \ref{fig:illustration} for an illustration of our privacy analysis.
We proceed to develop a countermeasure based on Differential Privacy (DP) against the proposed attack, and provide utility and privacy guarantees for the updated algorithm. Finally, we verify the effectiveness of our attack and defense on benchmark datasets, indicating the fundamental weakness in existing frameworks as well as the power of coupling DP with HE techniques in VLR.  

\begin{figure}[htbp]
\includegraphics[width=0.5\textwidth]{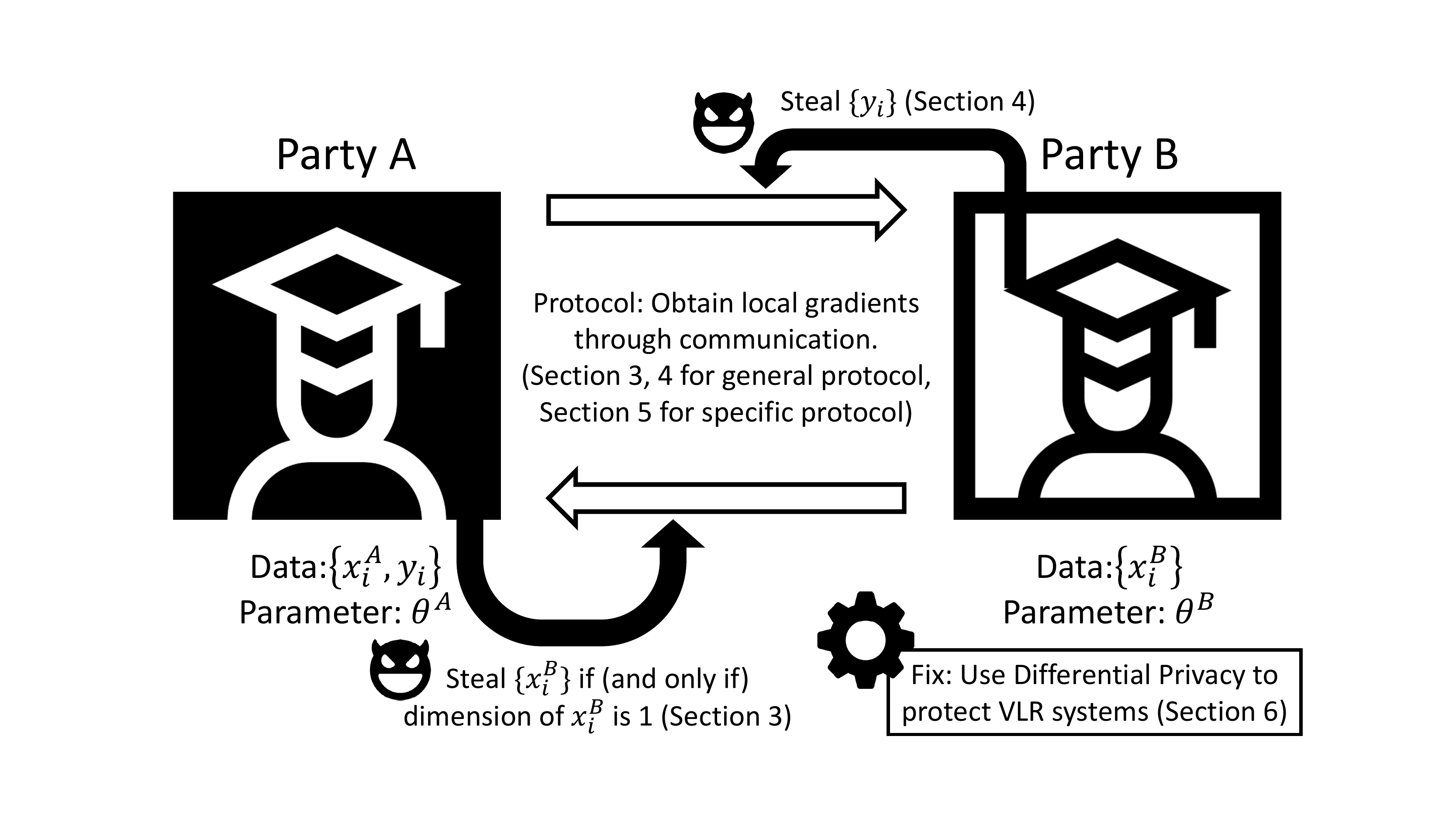}
\caption{Illustration of our privacy analysis. Under a general oracle of obtaining local gradients and some constraints on batch size, Party $A$ can steal $B$'s single-dimension feature, whereas $B$ can steal $A$'s labels. Within a specific HE-based protocol, both parties can perform the same attacks with less constraints on batch size. Differential Privacy serves as a simple-yet-effective countermeasure to the proposed attacks.}
\label{fig:illustration}
\end{figure}

The main contributions of this paper are summarized as follows:
\begin{itemize}
    \item We provide a fine-grained privacy analysis of VLR in the honest-but-curious setting: under suitable conditions of batch size, 1) for Party $A$, we construct a feature attack when $B$ has single-dimension feature and provide a hardness result for the rest case; 2) for Party $B$, we identify a label recovery attack when the initialization is small. We also show that the label recovery attack can naturally generalize when there are multiple parties. 
    {Our attack is strong in the sense that we (i) assume no \textit{prior knowledge of the data}; (ii) make no assumptions on the \textit{model's accuracy}; (iii) do no resort to external tools such as \textit{learning-based algorithm}; it holds (iv) independently of the \textit{existence of the third party} and (v) can be performed at \textit{any time} during training}. It only requires the knowledge of intermediate gradients, and is formalized via a careful dissection of the linear system;
    \item We then consider the active adversarial model: by taking into account the specific implementation of protocol based on Homomorphic Encryption (HE), we propose an active attack via generating and compressing auxiliary ciphertext, which allows us to relax the constraints of batch size and greatly extend the range of applications;
    \item We develop a simple-yet-effective countermeasure for the proposed attack based on Differential Privacy (DP), which is realized by injecting Gaussian noise to the HE-based protocol. While DP appears as a standard building block in HFL \cite{geyer2017differentially, truex2019hybrid, wei2020federated}, to the best of our knowledge, we are the first to couple it with HE techniques in the vertical setting. We also provide both utility and privacy guarantees for the updated algorithm;
    \item Experiment results on benchmark datasets verify the effectiveness of our proposed attack and defense, and showcase both the potential weakness of general VFL frameworks that solely rely on HE, as well as the capability of coupling DP with HE or MPC techniques to achieve better performance in terms of data privacy.  
\end{itemize}

The rest of the paper is organized as follows: Section \ref{sec_prelim} is our preliminary. We provide privacy analysis for Party $B$ and $A$ in Section \ref{sec_security_A} and \ref{sec_security_B} respectively. We then introduce the protocol of obtaining gradients and propose an active attack in Section \ref{sec_discuss}. Section \ref{sec_defense} describes our countermeasure and provides theoretical guarantees for both privacy and utility. In Section \ref{sec_experiment} we conduct experiments on benchmark datasets to evaluate our attack and defense. We conclude our work in Section \ref{sec_conclusion}.

\subsection{Related Works}

\paragraph{Privacy Leakage of Federated Learning.}
Identifying the privacy leakage of FL has seen a growing interest over the past years. Among them, a major set of works consider the setting of HFL, while only a few exceptions involve the privacy analysis of VFL \cite{weng2020privacy, luo2021feature, liu2020backdoor, li2022label}. Within HFL, \textit{data leakage} attracts the most attention: \cite{hitaj2017deep, wang2019beyond} use GAN \cite{NIPS2014_5ca3e9b1} to generate class-representative and synthesized samples; \cite{he2019model, zhu2020deep, geiping2020inverting, yin2021see, erdogan2021unsplit} adopt a optimization framework to invert the gradient and \textit{learn} the data; \cite{aono2017privacy} observes the special structure of gradient with single input and partially recovered the private data; \cite{bhagoji2019analyzing, li2019quantification} show how to reconstruct the data when they are uniformly distributed or binary; \cite{zhao2020idlg} further demonstrates that the sign of cross-entropy loss could directly reveal the label. Meanwhile, other forms of attack have been proposed, including \textit{membership inference} \cite{shokri2017membership, nasr2019comprehensive}, \textit{property inference} \cite{melis2019exploiting} which resembles the class-representative inference in \textit{model inversion} \cite{fredrikson2015model}, and \textit{model poisoning/backdoor attack} \cite{baruch2019little, bagdasaryan2020backdoor}. We refer the reader to \cite{bouacida2021vulnerabilities} for a comprehensive survey of the security vulnerabilities in  FL ecosystem, and \cite{kairouz2019advances} for general open problems in FL.

\medskip

\paragraph{General Defense Strategies.}
\begin{itemize}
    \item \textbf{Homomorphic Encryption.} Homomorphic Encryption (HE) \cite{rivest1978method} is a form of encryption which allows operations on the encrypted data without access to the secret key. Based on the utility of the cryptosystem, HE can be categorized into partially homomorphic encryption (PHE), somewhat homomorphic
    encryption (SWHE) and fully homomorphic encryption (FHE) \cite{acar2018survey}. While FHE enjoys more desirable properties and has been deployed in practical settings such as training with genomic data \cite{chen2018logistic}, we resort to the Paillier cryptosystem for efficiency, which is a standard form of additively homomorphic encryption within PHE and has been used in VLR in many previous works \cite{aono2016scalable, aono2017privacy, hardy2017private}. 

    \item \textbf{Differential Privacy.} Differential Privacy (DP) \cite{dwork2006differential} offers a probabilistic framework to \textit{quantify} privacy and has become a standard building block in machine learning community ever since its birth. DP ensures that the outcome of an algorithm will not be impacted by the presence or absence of a single instance, and is usually realized via \textit{randomization}: making the algorithm stochastic by adding noise.  Based on the structure of noise, there are several types of mechanism to achieve DP: Laplace, Exponential and Gaussian Mechanism \cite{dwork2014algorithmic}. See Section \ref{sec_defense} for formal definitions. 
    
    DP has been applied to a wide range of learning tasks to ensure privacy \cite{chaudhuri2011differentially, song2013stochastic, bassily2014private, abadi2016deep}, in which the noise is injected to the parameter after a step of update. For FL, DP is mostly used in the horizontal setting \cite{geyer2017differentially, truex2019hybrid, wei2020federated}. On the contrary, we apply DP to the VFL framework, carefully couple the noise-injection mechanism with the HE-based protocol and inject noise to the sensitive information before communication.
\end{itemize}

\section{Preliminaries}  \label{sec_prelim}
\subsection{(Vertical) Logistic Regression}\label{subsec:LR}

In binary logistic regression, the goal is to learn a linear model $\bm{\theta}\in \mathbb{R}^d$ that correctly maps the sample $\bm{x}\in \mathbb{R}^d$ to its corresponding label $y\in \{-1,1\}$. To achieve this, we use the training set $D=\{(\bm x_i,y_i)\}_{{i\in [n]}}$ {with $[n]=\{1,2,\ldots,n\}$}, and consider the following unregularized empirical risk minimization:
\begin{align}
    \min_{\bm{\theta}} \ell_{D}(\bm{\theta}) :=  -\frac{1}{n}\sum_{i=1}^n\log S(y_i\bm{\theta}^{\top}\bm{x}_i), \label{eq.emp}
\end{align}
where $S(z)=\frac{1}{1+e^{-z}}$ is the sigmoid function. In practice, we usually resort to mini-batch gradient descent, which is an effective algorithm of solving Eq. (\ref{eq.emp}).

\paragraph{Mini-batch Gradient Descent (MGD).}
The parameter is initialized as $\bm\theta = \bm\theta_0$. At $t$-th iteration, a mini-batch $\mathcal{B}_t$ of the indices of training examples with fixed batch size $s$ is sampled from $D$, and the mini-batch gradient, which is the average gradient over $\mathcal{B}_t$, is computed as follows: 
\begin{align}
\nabla\ell_{\mathcal{B}_t}(\bm\theta_{t-1})=\frac{1}{s}\sum_{i\in \mathcal{B}_t} \left(S(y_i\bm{\theta}_{t-1}^{\top}\bm{x}_i)-1\right)y_i\bm{x}_i.\label{MGD}
\end{align}
The model is then updated as 
\begin{align}\label{SGD}
    \bm\theta_t = \bm\theta_{t-1} - \eta_t \nabla\ell_{\mathcal{B}_t}(\bm\theta_{t-1}),
\end{align}
where $t\geq 1$ and $\eta_t$ represents the {learning rate}. We assume $s$ divides $n$, so there are $m := n/s$ iterations in one epoch. The algorithm is executed for $e$ epochs and there will be $T := me$ iterations in total.

\paragraph{First-order Taylor Approximation.}
Since the sigmoid function is non-polynomial, and the operations of exponential and logarithmic functions are cryptographically computational-inefficient, a natural solution is to use simple functions to approximate the gradient. There are several standard methods to approximate Eq. (\ref{MGD}), such as first-order Taylor approximation \cite{aono2016scalable, hardy2017private}, minimax approximation \cite{chen2018logistic} and piecewise approximation \cite{mohassel2017secureml}. {It turns out that the specific choice of approximation scheme has minimal effect on the privacy analysis. Therefore, we will focus on the first one throughout the main body of this paper, and leave the analyses of other approximation methods to Appendix \ref{app_minimax} and \ref{app_piecewise}.}

 Now, doing Taylor's expansion of $S(z)$ at $z=0$, we have
\begin{align}
   S(z) = \frac{1}{2} + \frac{1}{4}z + O(z^2).    \nonumber 
\end{align}
Therefore, the gradient in Eq. (\ref{MGD}) can be approximated as \begin{align}\label{eq MGD-taylor}
    \nabla\ell_{\mathcal{B}_t}(\bm\theta_{t-1})\approx \frac{1}{4s}\sum_{i\in \mathcal{B}_t} f_{i,t}\bm{x}_i,
\end{align} 
where we define the coefficients 
\begin{align}
f_{i,t} := \bm\theta^{\top}_{t-1}\bm x_i-2y_i, \ \forall t {\in [T]}, \ \forall i\in[n].
\end{align}

\paragraph{Vertical Logistic Regression.} We start with two-party vertical logistic regression (the multi-party case will be discussed at the end of Section \ref{sec_security_B}). Suppose $A$ and $B$ vertically share a group of data --- formally, given a sample $(\bm{x},y)\in \mathbb{R}^d\times\{-1,1\}$, the feature $\bm x$ is split into $\bm{x}^A \in \mathbb{R}^{d_A}$ and $\bm{x}^B \in \mathbb{R}^{d_B}$, where $\bm{x} = \left(\begin{array}{@{}c@{}}
     \bm{x}^A  \\
     \hline
     \bm{x}^B
\end{array}\right)
$, $d = d_A + d_B$, $(\bm x^A,y)$ belongs to Party $A$, and $\bm x^B$ belongs to Party $B$. Now, both parties  want to jointly optimize a binary logistic regression model using mini-batch gradient descent. Since we are considering a linear model, the separation in feature naturally induces a separation in parameter, i.e. the model parameter $\bm \theta\in \mathbb{R}^d$ is also split into $\bm \theta^A\in \mathbb{R}^{d_A}$ and $\bm \theta^B\in \mathbb{R}^{d_B}$, belonging to Party $A$ and Party $B$ respectively. The gradients of both parties at each iteration then write as
\begin{align}\label{eq:own_gradient}
\nabla^A_{t}=\frac{1}{4s}\sum_{i \in \mathcal{B}_t}f_{i, t}\bm x^A_{{i}}\quad \text{and} \quad \nabla^B_{t}=\frac{1}{4s}\sum_{i\in \mathcal{B}_t}f_{i, t}\bm x^B_{{i}}.
\end{align} 

\paragraph{Data Assumption.} We assume that the data $\{\bm x_i\}_{i\in [n]}$ are sampled from a probability distribution which is continuous w.r.t. the Lebesgue measure over $\mathbb{R}^{d\times n}$, and that $\|\bm x_i\|_2\le 1, \ \forall i \in [n]$. These assumptions are canonical in the machine learning literature.

\subsection{{Threat Models}} \label{subsection pf}

In this subsection, we formally state the key ingredients of our threat models. We use two parties $A$ and $B$ for illustration and consider both the \textit{passive} adversarial model (see Section \ref{sec_security_A} and \ref{sec_security_B}) and the \textit{active} adversarial model (see Section \ref{sec_discuss}).

\paragraph{Adversary’s objective.} 
At a time during training, the main goal of the label party (Party $A$) is to identify the feature of the non-label party (Party $B$); whereas the main goal of a non-label Party is to recover the hidden label from the label party.

\paragraph{Passive Adversarial Model.} In the passive adversarial model (a.k.a. the honest-but-curious model), a corrupted party (Party $A$ or Party $B$) correctly follows the protocol specification and the adversary obtains the input and internal state of the corrupted party~\cite{hazay2010efficient}. We make two key assumptions within this threat model: 1) first of all, each party is capable of obtaining his own share of gradient ($\nabla^A_{t}$ or $\nabla^B_{t}$) at each iteration --- we abstract such ability as an oracle and ignore the details of implementation for now; 2) second, the non-label party holds the information about the initialization scheme (not necessarily the precise value), i.e. how $\bm \theta_0$ is initialized. 

\paragraph{Active Adversarial Model.} In the active adversarial model, a corrupted party can arbitrarily deviate from the protocol specification, according to the adversary’s instructions~\cite{hazay2010efficient}. In this adversarial model, we focus on an \textit{instantiation} of computing the gradients $\nabla^A_{t}$ and $\nabla^B_{t}$ based on HE \cite{FATE-HeterLR,hardy2017private}, and show how the adversary can obtain more information by modifying the original protocol. Such active attack is expected to be stronger than the ones proposed in the passive adversarial model, as will be seen in Section \ref{sec_discuss}.

\paragraph{Publicly-available information.} The \textit{mini-batch} and the \textit{learning rate} used in each iteration are known to all participants --- combined with the assumption in the passive adversarial model, this ensures that both parties can do local updates on their own share of model parameters.

\section{Privacy Analysis for Party $B$} \label{sec_security_A}
As an adversary, Party $A$ seeks to obtain information about the feature $\bm x_{B}$. At $t$-th iteration, $A$ receives his own gradient
\begin{align} \label{gradient A}
\nabla^A_{t}=\frac{1}{4s}\sum_{i\in \mathcal{B}_t}f_{i, t}\bm x^A_{i},
\end{align}
which is the linear combination on his own share of feature. We will restrict our discussion to $s\le d_A$, and resort to a useful result in linear algebra.

\begin{theorem}  \label{Theorem Linear}
Suppose $\{\bm z_i\}_{i\in [n]}$ is randomly sampled from a probability distribution which is continuous w.r.t. the Lebesgue measure over $\mathbb{R}^{d'\times n}$ and we want to solve $a_i \in \mathbb{R}$, then $\forall I \in \mathbb{R}^{d'}, $  $\forall S \subseteq D$ such that $|S| = s$, the linear system 
\begin{align}
    \sum_{i\in S} a_i\bm z_i = \bm I 
\end{align}
has at most one solution with probability one if $s \le d'$. 
\end{theorem}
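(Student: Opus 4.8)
The plan is to show that if two distinct coefficient vectors both solve the system, then the sampled points $\{\bm z_i\}_{i\in S}$ must lie in a proper affine/linear subvariety, which happens with probability zero under the continuity assumption. Concretely, suppose $(a_i)_{i\in S}$ and $(a_i')_{i\in S}$ are two distinct solutions of $\sum_{i\in S} a_i \bm z_i = \bm I = \sum_{i\in S} a_i' \bm z_i$. Subtracting gives $\sum_{i\in S} b_i \bm z_i = \bm 0$ with $b_i := a_i - a_i'$ not all zero, i.e. the $s$ vectors $\{\bm z_i\}_{i\in S}$ are linearly dependent. So it suffices to prove that, with probability one, every size-$s$ subset of $\{\bm z_i\}_{i\in [n]}$ is linearly independent whenever $s \le d'$.

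First I would fix an arbitrary $S \subseteq [n]$ with $|S| = s$; since there are only finitely many such subsets, a union bound reduces the claim to: for fixed $S$, $\Pr[\{\bm z_i\}_{i\in S} \text{ linearly dependent}] = 0$. The set of matrices in $\mathbb{R}^{d' \times s}$ (columns $\bm z_i$, $i\in S$) whose columns are linearly dependent is exactly the zero locus of all $s\times s$ minors of the $d'\times s$ matrix — a finite collection of nonzero polynomials in the entries (nonzero as formal polynomials precisely because $s \le d'$, so at least one $s\times s$ minor exists and is a nontrivial polynomial). Hence this bad set is a proper algebraic subvariety of $\mathbb{R}^{d's}$, and any proper algebraic subvariety has Lebesgue measure zero. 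Since the joint law of $(\bm z_i)_{i\in [n]}$ is absolutely continuous with respect to Lebesgue measure on $\mathbb{R}^{d'\times n}$, its marginal on the coordinates indexed by $S$ is absolutely continuous with respect to Lebesgue measure on $\mathbb{R}^{d'\times s}$, so it assigns probability zero to this bad set. Taking the union over all $\binom{n}{s}$ choices of $S$ still gives probability zero, which establishes the theorem.

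The main obstacle — really the only nontrivial point — is justifying that the locus of linearly dependent column tuples is a measure-zero set, and in particular that at least one $s\times s$ minor is a genuinely nonzero polynomial so that the variety is proper; this is where the hypothesis $s\le d'$ is used, and without it (i.e. $s > d'$) the columns are \emph{always} dependent and the conclusion fails, consistent with the deleted part of the original statement. I would state the measure-zero fact for proper algebraic subvarieties as a standard lemma (it follows, e.g., by induction on dimension or from the fact that a nonzero real-analytic function vanishes on a null set) rather than reprove it. Everything else — the subtraction argument and the union bound over subsets — is routine.
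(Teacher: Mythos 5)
Your proposal is correct and follows essentially the same route as the paper's proof: reduce "at most one solution" to linear independence of the sampled columns, take a union bound over the finitely many size-$s$ subsets, and observe that for a fixed subset the dependent configurations form a measure-zero set because they lie in the zero locus of an $s\times s$ minor (the paper phrases this as "singular matrices have zero Lebesgue measure"). Your version is in fact slightly more careful, since you make explicit both the subtraction step linking uniqueness to independence and the passage from the joint law to its marginal on the coordinates indexed by $S$.
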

{The proof is mainly based on the fact that degenerate matrix has zero Lebesgue measure, and the details are deferred to Appendix \ref{app_sub_A}.} As a direct consequence, we immediately have the following corollary. 

\begin{corollary} \label{corollary unique}
If $s \le d_A$, then $\forall t$, Party $A$ can uniquely determine the values of $f_{i, t}$ for all $i \in {\mathcal{B}}_t$ with probability one.
\end{corollary}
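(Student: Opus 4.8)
The plan is to apply Theorem~\ref{Theorem Linear} with $d' = d_A$ and the feature vectors $\{\bm x^A_i\}_{i\in[n]}$ playing the role of $\{\bm z_i\}_{i\in[n]}$. First I would observe that, by the Data Assumption, the full feature vectors $\{\bm x_i\}_{i\in[n]}$ are drawn from a distribution that is absolutely continuous with respect to the Lebesgue measure on $\mathbb{R}^{d\times n}$; since projecting onto the first $d_A$ coordinates is a (coordinate) linear surjection, the marginal law of $\{\bm x^A_i\}_{i\in[n]}$ is absolutely continuous with respect to the Lebesgue measure on $\mathbb{R}^{d_A\times n}$, so the hypothesis of Theorem~\ref{Theorem Linear} is met with $d' = d_A$.

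Next, fix any iteration $t$. The vector $\nabla^A_t = \frac{1}{4s}\sum_{i\in\mathcal{B}_t} f_{i,t}\,\bm x^A_i$ is known to Party $A$, hence so is $\bm I := 4s\,\nabla^A_t = \sum_{i\in\mathcal{B}_t} f_{i,t}\,\bm x^A_i$. Party $A$ also knows the index set $\mathcal{B}_t$ (it is publicly available). Thus Party $A$ is faced with the linear system $\sum_{i\in\mathcal{B}_t} a_i\,\bm x^A_i = \bm I$ in the unknowns $a_i\in\mathbb{R}$, $i\in\mathcal{B}_t$, with $|\mathcal{B}_t| = s \le d_A = d'$. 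The true coefficients $a_i = f_{i,t}$ form one solution, and by Theorem~\ref{Theorem Linear} this system has at most one solution with probability one. Therefore the unique solution must be $a_i = f_{i,t}$, so Party $A$ recovers every $f_{i,t}$ for $i\in\mathcal{B}_t$ with probability one.

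Finally I would take a union bound over the finitely many iterations $t \in [T]$ (and, if one wants it for all subsets $S$ simultaneously rather than the realized $\mathcal{B}_t$, over the finitely many subsets $S\subseteq D$ with $|S|=s$): a finite union of probability-zero events is probability zero, so with probability one Party $A$ can uniquely determine $f_{i,t}$ for all $i\in\mathcal{B}_t$ and all $t$. The only mildly delicate point is the marginalization step — verifying that absolute continuity of the joint law on $\mathbb{R}^{d\times n}$ descends to absolute continuity of the coordinate-projected law on $\mathbb{R}^{d_A\times n}$; this is immediate since the pushforward of an absolutely continuous measure under a linear surjection is absolutely continuous, but it is worth stating so that Theorem~\ref{Theorem Linear} applies verbatim. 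Everything else is a direct instantiation of the theorem.
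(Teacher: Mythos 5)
Your proposal is correct and follows the same route as the paper: invoke Theorem~\ref{Theorem Linear} with $d'=d_A$ on the marginal law of $\{\bm x^A_i\}_{i\in[n]}$ (which remains absolutely continuous), note the true coefficients $f_{i,t}$ give existence, and conclude uniqueness with probability one. Your added remarks on the marginalization step and the union bound over the finitely many batches are sound elaborations of details the paper leaves implicit.
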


\begin{proof}
The uniqueness follows from Theorem \ref{Theorem Linear}, and the fact that $\{\bm x_{A,i}\}_{i\in[n]}$ also follows a continuous probability distribution. The existence is trivial.
\end{proof}

{Saying in another way, Corollary \ref{corollary unique} allows Party $A$ to determine the coefficients of the linear system Eq. (\ref{gradient A})}. Now denote 
\begin{align}
    g_{i,t}:= \bm\theta_{t-1}^{\top}\bm x_{i}
\end{align}
and decompose
\begin{align}
    g_{i,t} = \underbrace{(\bm\theta_{t-1}^A)^{\top}\bm x^A_{i}}_{g^A_{i,t}} + \underbrace{(\bm\theta_{t-1}^B)^{\top}\bm x^B_{i}}_{g^B_{i,t}}.
\end{align}
After obtaining the coefficient
\begin{align*}
    f_{i, t}=g_{i,t}-2y_i,
\end{align*}
Party $A$ can extract the information known to himself (which is $ g^A_{i,t}-2y_i$), and obtain the values of $g_{i,t}^B$, which is exactly the inner product of the data $\bm x$ and the parameter $\bm\theta$ on $B$'s feature. Therefore, in the special case $d_B = 1$, Party $A$ can determine the \textit{order} of the data according to the $x_B$ feature in each batch. 

\begin{lemma}\label{lem order}
Suppose $s \le d_A$ and $d_B = 1$, then Party $A$ can learn the ratio of the data in ${\mathcal{B}}_t$ in terms of feature $x^B$. 
\end{lemma}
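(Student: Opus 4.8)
The plan is to leverage Corollary~\ref{corollary unique} to extract, for each $i \in \mathcal{B}_t$, the quantity $g_{i,t}^B = \theta_{t-1}^B \cdot x_i^B$, and then observe that in the one-dimensional case this quantity is an affine function of the scalar feature $x_i^B$ with the \emph{same} slope and intercept across all samples in the batch. First I would invoke Corollary~\ref{corollary unique}: since $s \le d_A$, Party $A$ solves the linear system in Eq.~(\ref{gradient A}) and recovers $f_{i,t}$ for every $i \in \mathcal{B}_t$ with probability one. Next, using $f_{i,t} = g_{i,t} - 2y_i$ and the decomposition $g_{i,t} = g^A_{i,t} + g^B_{i,t}$, Party $A$ subtracts off the contribution $g^A_{i,t} = (\theta^A_{t-1})^\top x^A_i - 2y_i$ that he knows entirely (he owns $\theta^A_{t-1}$, $x^A_i$, and $y_i$), thereby isolating $g^B_{i,t} = (\theta^B_{t-1})^\top x^B_i$.

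The key step is then the observation that when $d_B = 1$, we have $\theta^B_{t-1} \in \mathbb{R}$ and $x^B_i \in \mathbb{R}$, so $g^B_{i,t} = \theta^B_{t-1} \cdot x^B_i$ is simply a scalar multiple of $x^B_i$ by a factor that does not depend on $i$. Consequently, for any two indices $i, j \in \mathcal{B}_t$, the ratio $g^B_{i,t} / g^B_{j,t} = x^B_i / x^B_j$ (whenever $x^B_j \neq 0$, which holds with probability one under the continuity assumption on the data, and also requires $\theta^B_{t-1} \neq 0$, which again holds with probability one for a continuous initialization scheme and is preserved under the MGD updates generically). Hence Party $A$ learns all pairwise ratios of the feature $x^B$ within the batch $\mathcal{B}_t$, which is precisely the claimed statement.

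I would also briefly note why the genericity caveats ($\theta^B_{t-1} \neq 0$, $x^B_j \neq 0$) do not cause trouble: the data assumption guarantees $x^B_j \neq 0$ almost surely, and if the initialization $\theta_0$ is drawn from a continuous distribution then $\theta^B_0 \neq 0$ almost surely; since the subsequent iterates are obtained by finitely many affine updates, $\theta^B_{t-1} = 0$ occurs only on a measure-zero set of the randomness. The main (mild) obstacle is simply being careful about these degenerate cases and about stating exactly what "ratio" means — strictly the ordered tuple of pairwise ratios $(x^B_i / x^B_j)_{i,j \in \mathcal{B}_t}$, equivalently the point $(x^B_i)_{i \in \mathcal{B}_t}$ in projective space — rather than doing any substantive computation; the algebraic content is entirely contained in Corollary~\ref{corollary unique} and the triviality that a one-dimensional inner product is scalar multiplication.
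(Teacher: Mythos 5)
Your proposal is correct and follows essentially the same route as the paper: recover $f_{i,t}$ via Corollary~\ref{corollary unique}, strip off the part known to Party $A$ to isolate $g^B_{i,t} = \theta^B_{t-1}x^B_i$, and take pairwise quotients $g^B_{i,t}/g^B_{j,t} = x^B_i/x^B_j$. The only difference is that you explicitly flag the measure-zero degeneracies ($\theta^B_{t-1}=0$ or $x^B_j=0$), which the paper leaves implicit.
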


\begin{proof}
For $i, j\in {\mathcal{B}}_t$, Party $A$ can compute the ratio of $x_{B,i}$ and $x_{B,j}$ via calculating $g_{i,t}^B/g_{j,t}^B$. Therefore, he will be able to sort the data in ${\mathcal{B}}_t$ in terms of feature $x^B$.
\end{proof}

There are two special cases of Lemma \ref{lem order}: when $s=n$, Party $A$ directly obtains the order of the full dataset $D$; when $s=1$, there is essentially \textit{no} sorting and thus Party $A$ can obtain \textit{no} information about the feature $x_B$. When $2 \le s \le n-1$, in order to obtain the full order, we need a few epochs to make different batches `intertwine' with each other, as shown in the following lemma.

\begin{lemma} \label{lem inter}
    Under the same condition as in Lemma \ref{lem order}, and we further assume $s \ge 2$. Then Party $A$ can learn the ratio of the dataset $D$ in terms of feature $x_B$ in at most $\lfloor\log_2(m)\rfloor + 1$ epochs with probability at least $1-\mathcal{O}(\frac{1}{m^{s-1}})$. 
\end{lemma}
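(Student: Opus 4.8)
The plan is to think of the problem as connecting a graph. By Lemma \ref{lem order}, within each batch $\mathcal{B}_t$ Party $A$ learns all pairwise ratios $x_{B,i}/x_{B,j}$ for $i,j\in\mathcal{B}_t$, so if we build a graph on the $n$ data points in which we place an edge between $i$ and $j$ whenever they co-occur in some batch seen so far, then Party $A$ can determine the ratio of $x_{B,i}$ to $x_{B,j}$ for every pair $i,j$ lying in the same connected component (ratios compose along paths). Hence it suffices to show that after $\lfloor\log_2 m\rfloor+1$ epochs the graph is connected with the claimed probability. Equivalently, working at the level of batches rather than points: form the ``batch graph'' whose vertices are $m$ super-nodes (one per batch), with an edge between two super-nodes if in some epoch the corresponding batches share at least one data point; connectivity of this batch graph implies connectivity of the point graph.

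First I would set up the random model: in each epoch the $n$ points are partitioned into $m$ blocks of size $s$ by a uniformly random permutation, independently across epochs. Over $k$ epochs we get $k$ independent random partitions. The key step is to show that adding one fresh random partition can roughly halve the number of connected components: if before a given epoch the point graph has components, then I would track the ``block structure'' — the components of the batch graph — and argue that a new random partition, restricted to the current super-nodes, merges them in pairs (or better) with high probability, so the number of super-components drops from $r$ to at most $\lceil r/2\rceil$ with failure probability $\mathcal{O}(1/m^{s-1})$ per epoch. Concretely, two blocks of size $s$ drawn into a fresh random partition fail to ever land a common point into a common new block only with probability that I would bound by a quantity of order $m^{-(s-1)}$ using the fact that two specified points land in the same new block with probability roughly $(s-1)/(n-1)$ and we have $s$ points per old block and $m$ old blocks to play with; a union bound over the $\le m$ pairs in one epoch keeps the per-epoch failure at $\mathcal{O}(1/m^{s-1})$. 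Iterating, after $\lfloor\log_2 m\rfloor+1$ epochs the component count is driven below $1$, i.e.\ to exactly one, and a final union bound over the $\le \lfloor\log_2 m\rfloor+1 = \mathcal{O}(\log m)$ epochs absorbs into $\mathcal{O}(1/m^{s-1})$ (the logarithmic factor is swallowed since $s\ge 2$).

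The main obstacle I anticipate is making the ``halving'' step rigorous: the events ``super-node $X$ and super-node $Y$ get merged'' across many pairs $(X,Y)$ in the same epoch are not independent, and the super-nodes themselves are random (they depend on earlier epochs), so one has to condition carefully on the filtration generated by the first $j$ partitions and then argue the merging only using the fresh $(j{+}1)$-th partition. I would handle this by fixing the component structure after $j$ epochs, pairing up the current components arbitrarily, and for each pair exhibiting a single explicit ``good event'' — some chosen point of one component shares its new block with some chosen point of the other — whose probability is $1-\mathcal{O}(1/m^{s-1})$; since I only need one such merge per pair and there are at most $m/2$ pairs, a union bound suffices and independence across pairs is not actually needed. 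A secondary point to get right is the base/rounding arithmetic: starting from $m$ components and halving $\lfloor\log_2 m\rfloor+1$ times reaches $1$, so the epoch count in the statement is exactly what this recursion gives.
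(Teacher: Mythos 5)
Your overall architecture matches the paper's: track the connected components (the paper calls them ``sortable sets''), argue that one fresh epoch at least halves their number except on a bad event of probability $\mathcal{O}(m^{-(s-1)})$, iterate $\lfloor\log_2 m\rfloor+1$ times, and union-bound. However, the central probabilistic step as you describe it does not work. You propose to pair up the current components arbitrarily and, for each designated pair $(X,Y)$, to use the good event ``some chosen point of $X$ lands in the same new block as some chosen point of $Y$.'' For two specific points this collision has probability $\frac{s-1}{n-1}\approx\frac{1}{m}$, so even taking all $s$ points of a block of $X$ against all $s$ points of a block of $Y$, the designated pair merges with probability only $\mathcal{O}(s^2/m)$ --- the good event is \emph{rare}, not overwhelmingly likely, and its failure probability is $1-\mathcal{O}(s^2/m)$, nowhere near the $\mathcal{O}(m^{-(s-1)})$ you claim. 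A union bound over the $m/2$ designated pairs therefore yields nothing: under a fresh uniformly random partition it is typical that most pre-assigned pairs of components do \emph{not} merge with each other.

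The fix --- and what the paper actually does --- is to drop the pairing and analyze the complementary event per component: a component $G$ consisting of $k$ old batches fails to merge with \emph{anything} only if the new partition regroups the $ks$ points of $G$ exactly into $k$ complete new blocks, and a counting argument bounds this probability by $1/\binom{ms}{s}\le m^{-s}$. If every component merges with at least one other (unspecified) component, then every new component contains at least two old ones, so $n_{t+1}\le n_t/2$; union-bounding over the at most $n_t\le m$ components gives per-epoch failure probability at most $n_t\, m^{-s}$, and summing over epochs gives $\sum_t n_t m^{-s}\le 2m\cdot m^{-s}=2/m^{s-1}$ (no logarithmic factor appears because $n_t$ decays geometrically, so your worry about absorbing a $\log m$ factor is moot). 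Your skeleton is thus salvageable, but the event driving the recursion must be ``$G$ is not exactly re-blocked'' rather than ``$G$ merges with its assigned partner.''
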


The proof of Lemma \ref{lem inter} is deferred to Appendix \ref{app_sub_A}. Now, after $\lfloor\log_2(m)\rfloor + 1$ epochs, Party $A$ will be able to learn the ratio between different data points on $B$'s feature: 
\begin{align}\label{eq ri}
    \frac{x^B_{i}}{x^B_{1}} := r_i, \ 2 \le i\le n.
\end{align}
Without loss of generality, assume $1 \in \mathcal{B}_1$ and $1 \in \mathcal{B}_k$ for some $m+1 \le k \le 2m$. By comparing $g_{1,k}$ and $g_{1,1}$ --- {the inner products of the same feature $x_1^B$ and $B$'s parameter $\theta^B$ at different iterations}, we immediately have the following theorem. 

\begin{theorem}\label{theorem precise}
Suppose $s \le d_A$ and $d_B = 1$, then Party $A$ can learn the precise value of the data in terms of feature $x^B$ (up to a difference in sign) in at most $\max\{\lfloor\log_2(m)\rfloor + 1,2\}$ epochs with probability at least $1-\mathcal{O}(\frac{1}{m^{s-1}})$. 
\end{theorem}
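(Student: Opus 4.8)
The plan is to build on Lemma~\ref{lem inter}, which (after $\lfloor\log_2(m)\rfloor+1$ epochs, with the stated failure probability) hands Party $A$ the ratios $r_i = x^B_i/x^B_1$ for all $2\le i\le n$. This recovers $B$'s feature vector \emph{up to a single unknown scalar} $x^B_1$, so the remaining task is to pin down $|x^B_1|$. The key observation I would exploit is the one flagged just before the statement: the quantity $g^B_{i,t} = (\theta^B_{t-1})^\top x^B_i = \theta^B_{t-1}\,x^B_i$ (a scalar product since $d_B=1$) is already known to Party $A$ for every $i\in\mathcal B_t$ and every $t$, because by Corollary~\ref{corollary unique} he recovers $f_{i,t}$, hence $g_{i,t}$, hence $g^B_{i,t}=g_{i,t}-g^A_{i,t}$. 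Simultaneously, Party $A$ knows exactly how $\theta^B$ evolves: the update $\theta^B_t = \theta^B_{t-1} - \eta_t\nabla^B_t$ uses the publicly known learning rates $\eta_t$ and the mini-batch gradient $\nabla^B_t = \frac{1}{4s}\sum_{i\in\mathcal B_t} f_{i,t}\,x^B_i$, whose coefficients $f_{i,t}$ are already recovered and whose features $x^B_i$ are known up to the common factor $x^B_1$.

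First I would fix an index, WLOG $1$, and track it across two iterations in which it participates: $1\in\mathcal B_1$ and $1\in\mathcal B_k$ for some $m+1\le k\le 2m$ (such a $k$ exists since index $1$ appears exactly once per epoch). Then
\begin{align}
g^B_{1,1} &= \theta^B_0\, x^B_1, \\
g^B_{1,k} &= \theta^B_{k-1}\, x^B_1 = \Bigl(\theta^B_0 - \sum_{t=1}^{k-1}\eta_t \nabla^B_t\Bigr) x^B_1.
\end{align}
Next, I would substitute $\nabla^B_t = \frac{1}{4s}\sum_{i\in\mathcal B_t} f_{i,t}\,x^B_i = \frac{x^B_1}{4s}\sum_{i\in\mathcal B_t} f_{i,t}\,r_i$ (with $r_1=1$), so that $\theta^B_{k-1} = \theta^B_0 - (x^B_1)\,c$ where $c := \frac{1}{4s}\sum_{t=1}^{k-1}\eta_t\sum_{i\in\mathcal B_t} f_{i,t}\,r_i$ is a fully known scalar. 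Hence
\begin{align}
g^B_{1,k} = \theta^B_0\, x^B_1 - c\,(x^B_1)^2 = g^B_{1,1} - c\,(x^B_1)^2,
\end{align}
and since $g^B_{1,1}$, $g^B_{1,k}$, $c$ are all known to Party $A$, he solves $(x^B_1)^2 = (g^B_{1,1}-g^B_{1,k})/c$, recovering $|x^B_1|$ and therefore, via the ratios $r_i$, the value of every $x^B_i$ up to a global sign flip. This accounts for the "$\max\{\cdot,2\}$" in the bound: even if $m=1$ (one batch per epoch, $\lfloor\log_2 m\rfloor+1=1$) we need a second epoch to obtain the iteration $k$, and in general Lemma~\ref{lem inter} already costs $\lfloor\log_2 m\rfloor+1\ge 2$ epochs whenever $m\ge 2$.

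The main obstacle is ensuring the final equation is actually solvable, i.e.\ that $c\neq 0$. I would argue this holds with probability one under the continuous-data assumption: $c$ is a fixed nonzero-coefficient polynomial in the data and the (nonzero) learning rates — in particular it is not identically zero as a function of the $x_i$'s — so its vanishing set has Lebesgue measure zero, exactly in the spirit of Theorem~\ref{Theorem Linear}'s degenerate-matrix argument. A secondary point to check is that the two events "Lemma~\ref{lem inter} succeeds" and "$c\neq 0$" can be combined by a union bound, which is immediate since the second has probability one; thus the overall success probability is still $1-\mathcal O(1/m^{s-1})$, and the epoch count is $\max\{\lfloor\log_2 m\rfloor+1,2\}$ as claimed. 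I would also remark that the residual sign ambiguity is genuinely irremovable — flipping the sign of $x^B_1$ and simultaneously of $\theta^B$ leaves all of $A$'s observables $g^B_{i,t}$ invariant — so "up to a difference in sign" is the best possible conclusion.
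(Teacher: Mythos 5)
Your proposal is correct and follows essentially the same route as the paper: both compare $g^B_{1,1}$ and $g^B_{1,k}$ across the two iterations containing index $1$, express the difference as a known scalar times $(x^B_1)^2$ using the recovered coefficients $f_{j,t}$ and ratios $r_j$, and solve for $x^B_1$ up to sign. Your additional measure-zero argument that the denominator $c$ is nonzero (and the remark that the sign ambiguity is unavoidable) are points the paper's proof silently glosses over, so they strengthen rather than alter the argument.
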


\begin{proof}
By the update formula of $\theta^B_{k}$, we have
\begin{align}
    g_{1,k} - g_{1,1} &= \theta^B_{k-1}x^B_{1} - \theta^B_{0}x^B_{1} \nonumber \\
    &=-\sum_{t=1}^{k-1} \frac{\eta_t}{4s}\sum_{j \in \mathcal{B}_t}(\theta_{t-1}{x}_j - 2y_j)x^B_{j}x^B_{1} \nonumber \\
    &=-\sum_{t=1}^{k-1} \frac{\eta_t}{4s}\sum_{j \in \mathcal{B}_t} f_{j,t}r_j(x^B_{1})^2 \nonumber \\
    &=-\left[\sum_{t=1}^{k-1}\left( \frac{\eta_t}{4s}\sum_{i \in D_t} f_{j,t}r_j\right)\right](x^B_{1})^2.
\end{align}
Therefore,
\begin{align}
     x^B_{1} = \pm\sqrt{\frac{4s(g_{1,1}-g_{1,k})}{\sum\limits_{t=1}^{k-1}\sum\limits_{j \in \mathcal{B}_t} f_{j,t}r_j\eta_t}}
\end{align}
and $x^B_i = r_ix^B_{1}$, where $r_i$ is defined in Eq. (\ref{eq ri}).
\end{proof}

When $d_B \ge 2$, we provide a hardness result: by simply observing $\{g_{i,t}^B\}_{i \in \mathcal{B}_t, t \in [T]}$, Party $A$ cannot learn $B$'s feature in general. {The main observation is that an orthogonal transformation would not change the coefficients obtained during training.} 
\begin{theorem} \label{theorem hardness}
Assume Party $A$ does not know the precise value of $\bm \theta^B_{0}$ and $d_B \ge 2$. Fix the mini-batch and learning rate used in each iteration, then there are infinitely many choices of $\{\bm x_i^B\}_{i \in [n]}$ that yield the same set of $\{g_{i,t}^B\}_{i \in \mathcal{B}_t, t \in [T]}$ through the optimization process. As a consequence, $A$ cannot determine the precise value of $B$'s feature.
\end{theorem}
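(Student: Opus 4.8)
The plan is to exhibit an explicit family of alternative features $\{\tilde{\bm x}_i^B\}_{i\in[n]}$ together with an alternative initialization $\tilde{\bm\theta}_0^B$, obtained by applying a fixed orthogonal transformation $Q\in O(d_B)$, and to check that this substitution leaves every quantity $g_{i,t}^B = (\bm\theta_{t-1}^B)^\top \bm x_i^B$ invariant along the entire trajectory. Concretely, set $\tilde{\bm x}_i^B := Q\bm x_i^B$ for all $i\in[n]$ and $\tilde{\bm\theta}_0^B := Q\bm\theta_0^B$. Since $Q$ is orthogonal, $\|\tilde{\bm x}_i^B\|_2 = \|\bm x_i^B\|_2 \le \|\bm x_i\|_2 \le 1$, so the data assumption is preserved, and because $O(d_B)$ is a manifold of positive dimension when $d_B\ge 2$, there are infinitely many such $Q$ (and they yield genuinely different features unless $\bm x_i^B$ all lie in a common eigenspace, a non-generic event we can rule out, or simply avoid by noting that distinct $Q$'s acting on a nonzero vector give distinct images).

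The key step is an induction on $t$ showing that the $A$-side iterates $\bm\theta_{t}^A$ are completely unchanged, while the $B$-side iterates transform as $\tilde{\bm\theta}_t^B = Q\bm\theta_t^B$. The base case is the definition of $\tilde{\bm\theta}_0^B$. For the inductive step: the coefficients $f_{i,t} = \bm\theta_{t-1}^\top\bm x_i - 2y_i$ depend on $\bm x_i^B$ and $\bm\theta_{t-1}^B$ only through $g_{i,t}^B = (\bm\theta_{t-1}^B)^\top\bm x_i^B$, and by the induction hypothesis $(\tilde{\bm\theta}_{t-1}^B)^\top\tilde{\bm x}_i^B = (Q\bm\theta_{t-1}^B)^\top(Q\bm x_i^B) = (\bm\theta_{t-1}^B)^\top\bm x_i^B$, so all $f_{i,t}$ are unchanged; hence $g_{i,t}^B$ are unchanged, which is the stated conclusion. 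Since the mini-batch $\mathcal{B}_t$, the learning rate $\eta_t$, the labels, and the $A$-features are all fixed, the update $\nabla_t^A = \frac{1}{4s}\sum_{i\in\mathcal{B}_t}f_{i,t}\bm x_i^A$ is unchanged, so $\bm\theta_t^A$ is unchanged; and $\tilde\nabla_t^B = \frac{1}{4s}\sum_{i\in\mathcal{B}_t}f_{i,t}\tilde{\bm x}_i^B = Q\nabla_t^B$, so $\tilde{\bm\theta}_t^B = \tilde{\bm\theta}_{t-1}^B - \eta_t\tilde\nabla_t^B = Q(\bm\theta_{t-1}^B - \eta_t\nabla_t^B) = Q\bm\theta_t^B$, closing the induction. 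This shows the entire observable transcript available to $A$ (his own gradients, hence the $f_{i,t}$, hence the $g_{i,t}^B$) is identical under the two configurations, so $A$ cannot distinguish $\{\bm x_i^B\}$ from $\{Q\bm x_i^B\}$ and in particular cannot pin down the precise feature values.

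I do not anticipate a serious obstacle — the argument is essentially the equivariance of gradient descent under an orthogonal change of coordinates on the $B$-block. The only points that need a little care are: (i) arguing that infinitely many of these $Q$ give genuinely distinct feature sets (handled by the fact that the orbit of any nonzero vector under $O(d_B)$ is infinite for $d_B\ge 2$, and at least one $\bm x_i^B$ is nonzero with probability one under the continuity assumption); and (ii) making sure the norm constraint $\|\tilde{\bm x}_i\|_2\le 1$ still holds, which is immediate since orthogonal maps are isometries and the $A$-block is untouched. One should also remark that this does not contradict Theorem~\ref{theorem precise}: when $d_B=1$ the only orthogonal transformations are $\pm 1$, which is exactly the sign ambiguity appearing there.
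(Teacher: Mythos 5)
Your proposal is correct and follows essentially the same route as the paper: pick an arbitrary $Q\in O(d_B,\mathbb{R})$, transform $\bm\theta^B_{t-1}\mapsto Q\bm\theta^B_{t-1}$ and $\bm x^B_i\mapsto Q\bm x^B_i$, verify the update rule is preserved so that all inner products $g^B_{i,t}$ are unchanged, and invoke the infinite cardinality of the orthogonal group for $d_B\ge 2$. Your additional care about the norm constraint, the genuinely distinct orbits, and the consistency with the sign ambiguity of Theorem~\ref{theorem precise} when $d_B=1$ is a nice touch but does not change the argument.
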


\begin{proof}
Denote $O(d_B, \mathbb{R})$ as the orthogonal group in dimension $d_B$. Since $A$ does not hold the information of $\bm \theta_0^B$, we can pick an arbitrary $\bm Q \in O(d_B, \mathbb{R})$ and consider the following transformation
\begin{align*}
     \bm \theta_{t-1}^B \mapsto  \bm Q \bm \theta_{t-1}^B, \  \forall t\in [T] \ \  \text{and} \ \ \bm x^B_{i} \mapsto \bm Q\bm x^B_{i}, \  \forall i\in[n].
\end{align*}
By the update formula on $\bm\theta^B$, we have
\begin{align}
    \bm Q\bm\theta_t^B = \bm Q\bm\theta_{t-1}^B - \frac{1}{4s}\sum_{i \in \mathcal{B}_t}(\bm \theta_{t-1}^{\top} \bm x_i - 2y_i) \bm Q\bm x_i^B,
\end{align}
which is equivalent to
\begin{align}
    \bm Q\bm\theta_t^B = \bm Q\bm\theta_{t-1}^B - \frac{1}{4s}&\sum_{i \in \mathcal{B}_t}\big((\bm Q\bm\theta_{t-1}^B)^{\top} (\bm Q\bm x_i^B) \nonumber \\
    &+ (\bm \theta_{t-1}^A)^{\top}\bm x_i^A - 2y_i\big) \bm Q\bm x_i^B,
\end{align}
i.e. the mini-batch gradient descent acting on a new dataset with $
\hat{\bm {x}}_i = \left(\begin{array}{@{}c@{}}
     \bm x_i^A  \\
     \hline
     \bm Q \bm x_i^B
\end{array}\right)$ and starting from $\bm Q\bm \theta_0$.
Party $A$ cannot distinguish the two processes by observing $g_{i,t}^B$, since the inner product
\begin{align}
    \left<\bm Q \bm \theta_{t-1}^B, \bm Q\bm x^B_{i}\right> = \left<\bm \theta_{t-1}^B, \bm x_i^B\right>
\end{align}
remains unchanged. Combining the fact that $O(d_B, \mathbb{R})$ has infinite cardinality when $d_B \ge 2$ gives the result as desired.
\end{proof}

\section{Privacy Analysis for Party $A$}  \label{sec_security_B}
We shall see that Party $A$ and $B$ are symmetric except that $B$ does not hold the label. Therefore, the main goal of an adversarial $B$ is to obtain information about the label $y$. At $t$-th iteration, $B$ receives his own gradient
\begin{align} \label{gradient B}
   \nabla^B_{t}=\frac{1}{4s}\sum_{i\in \mathcal{B}_t}f_{i, t}\bm x^B_{i}
\end{align}
which is still the linear combination on his own share of feature. Similar to the previous section, we will restrict our discussion to $s\le d_B$, implying that the linear system given by Eq. (\ref{gradient B}) is not underdetermined. As a corollary of Theorem \ref{Theorem Linear}, Party $B$ also holds the value of 
\begin{align}
    f_{i,t} = (\bm\theta_{t-1}^A)^{\top}\bm x^A_{i}+(\bm\theta_{t-1}^B)^{\top}\bm x^B_{i}-2y_i
\end{align}
for $t \in [T]$ and $i\in \mathcal{B}_t$. In the right hand side, Party $B$ does not know about the first and the third term, so at first sight it seems unlikely for him to gain information about the label. However, we will show in the following analysis that label attack would still be possible under certain circumstances.

{We first consider a special case where Party $A$ \textit{only} holds the label, meaning that Party $B$ have full control of the model.
\begin{theorem}   \label{theorem extreme}
Suppose $d_A = 0$, then Party $B$ can uniquely determine the label $\{y_i\}_{i\in S_t}$ for arbitrary sampled batch $\mathcal{B}_t$ with probability one.  
\end{theorem}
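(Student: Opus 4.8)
The plan is to reduce the statement to Corollary~\ref{corollary unique} together with the simple observation that, once $d_A=0$, Party $B$ possesses the \emph{entire} model parameter at every iteration. First I would note that $d_A=0$ forces $\bm\theta_t=\bm\theta_t^B$ for all $t$: there is no $A$-block, so the parameter that Party $B$ initializes, locally updates (which he can do since the mini-batch and learning rate are public and he follows the protocol), and stores is exactly the global $\bm\theta_t$. Consequently $g_{i,t}=\bm\theta_{t-1}^{\top}\bm x_i=(\bm\theta_{t-1}^B)^{\top}\bm x_i^B$ is a quantity Party $B$ can compute by himself for every $i\in[n]$ and every $t$, and the two terms $(\bm\theta_{t-1}^A)^{\top}\bm x_i^A$ that block label inference in the general case simply vanish.

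Next I would invoke the $B$-side analogue of Corollary~\ref{corollary unique}. Since $\{\bm x_i^B\}_{i\in[n]}=\{\bm x_i\}_{i\in[n]}$ is drawn from a distribution continuous with respect to the Lebesgue measure and $s\le d_B$ is the standing assumption of this section, Theorem~\ref{Theorem Linear} applied to the linear system $\nabla^B_t=\frac{1}{4s}\sum_{i\in\mathcal{B}_t}f_{i,t}\bm x_i^B$ shows that the coefficients $\{f_{i,t}\}_{i\in\mathcal{B}_t}$ are uniquely determined with probability one; existence is trivial because the true coefficients solve the system.

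Finally I would combine the two facts. For $i\in\mathcal{B}_t$ we have $f_{i,t}=g_{i,t}-2y_i$, hence $y_i=\frac{1}{2}\left(g_{i,t}-f_{i,t}\right)$, and both $g_{i,t}$ (first step) and $f_{i,t}$ (second step) are known to Party $B$ with probability one; this recovers $y_i$ exactly, and in particular determines whether $y_i=+1$ or $y_i=-1$. I do not expect a real obstacle here — the argument is essentially a corollary — so the only points meriting care are making explicit that $d_A=0$ is precisely what removes the unknown-to-$B$ terms from $f_{i,t}$, and noting that the ``with probability one'' qualifier is inherited verbatim from Theorem~\ref{Theorem Linear} through the general-position assumption on the data.
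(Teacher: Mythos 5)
Your argument is correct under the standing assumption $s\le d_B$ of that section, but it is genuinely different from — and strictly weaker than — the paper's proof. You treat $\{f_{i,t}\}_{i\in\mathcal{B}_t}$ as unconstrained real unknowns in the linear system $\nabla^B_t=\frac{1}{4s}\sum_{i\in\mathcal{B}_t}f_{i,t}\bm x^B_i$ and invoke Theorem~\ref{Theorem Linear}, which is exactly where the batch-size restriction $s\le d_B$ enters. The paper instead first subtracts the part of the gradient that $B$ can compute himself (possible since $d_A=0$ gives $\bm\theta_{t-1}=\bm\theta^B_{t-1}$), reducing everything to knowledge of $\bm I_t=\sum_{i\in\mathcal{B}_t}y_i\bm x^B_i$, and then analyzes the \emph{constrained} system $\sum_{i\in\mathcal{B}_t}z_i\bm x^B_i=\bm I_t$ with $z_i^2=1$. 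Uniqueness there follows by taking the difference of two hypothetical solutions to get $\sum_{i\in\mathcal{B}_t}w_i\bm x^B_i=0$ with $\bm w\ne\bm 0$, $w_i\in\{-1,0,1\}$, and union-bounding over the finitely many ($3^s-1$) choices of $\bm w$, each of which defines a measure-zero event. Because this exploits the discreteness of the labels rather than the rank of the design matrix, it holds for \emph{arbitrary} batch size — the paper explicitly relies on this later ("note this does not require $s\le d_B$" in Appendices~\ref{app_minimax} and~\ref{app_piecewise}). So your proof buys brevity (a one-line corollary of Corollary~\ref{corollary unique}) at the cost of an unnecessary hypothesis; if you want the theorem in its intended strength, you need the constrained-system argument rather than Theorem~\ref{Theorem Linear}.
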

}

{
The proof of Theorem \ref{theorem extreme} is deferred to Appendix \ref{app_sub_B}. Some might argue for a simpler attack in this special case: since $B$ holds both the data and the model, he can directly feed the data to the trained model and use the output to `guess' the label. However, there are two drawbacks of this approach: 1) it heavily relies on the model's training accuracy, and do not have any \textit{guarantees} on the predictions; 2) the attack can only be performed after (or at the end of) training.}       

Now we will consider the more realistic case $d_A \ge 1$. The intuition here is that $|2y_i| = 2$, so if Party $B$ can convince himself that the full inner product satisfies $|\bm\theta_{t-1}^{\top}\bm x_i| < 2$, then 
\begin{align}
    \sign(f_{i,t}) = \sign(-2y_i) = -y_i,
\end{align}
so he can use the opposite sign of $f_{i,t}$ to determine the value of $y_i$ in the $t$-th iteration. This can be formulated into the following criterion.

\begin{criterion}  \label{criterion}
For $t=1,\cdots, T^*$, Party $B$ uses $-\sign(f_{i,t})$ to determine $y_i, \  \forall i\in \mathcal{B}_t$. 
\end{criterion}

\begin{remark}
If Criterion \ref{criterion} is valid for $T^* \ge m$, then Party $B$ can obtain the label of the full dataset $D$. 
\end{remark}

\noindent Now the analysis naturally breaks into two parts:
\begin{itemize}
    \item Show when Criterion \ref{criterion} is valid for $T^*=1$;
    \item Determine the largest $T^*$ that we can \textit{safely} apply the criterion.
\end{itemize}
We will first deal with the second part. Define $h_t = \max\limits_{i} \{|\bm \theta_{t-1}^{\top}\bm x_i| + 2 \}$ and we will establish a recursive relation of $h_t$ in the following lemma. 

\begin{lemma}  \label{lemma}
Suppose we use a constant learning rate $\eta_t = \eta$ in mini-batch gradient descent. Then we have
\begin{align}
    h_{t+1} \le \left(1+\frac{\eta}{4}\right) h_{t}.
    \end{align}
\end{lemma}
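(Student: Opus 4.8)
## Proof Plan

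The plan is to control the growth of $h_t = \max_i\{|\bm\theta_{t-1}^\top \bm x_i| + 2\}$ by tracking how much a single step of mini-batch gradient descent can change the inner products $\bm\theta_{t-1}^\top \bm x_i$. The key identity to exploit is the update rule $\bm\theta_t = \bm\theta_{t-1} - \frac{\eta}{4s}\sum_{j\in\mathcal{B}_t} f_{j,t}\bm x_j$ (from Eq.~\eqref{eq MGD-taylor} and Eq.~\eqref{SGD} with constant learning rate), where $f_{j,t} = \bm\theta_{t-1}^\top\bm x_j - 2y_j$. First I would fix an arbitrary index $i$ and write
\begin{align}
    \bm\theta_t^\top \bm x_i = \bm\theta_{t-1}^\top\bm x_i - \frac{\eta}{4s}\sum_{j\in\mathcal{B}_t} f_{j,t}\, \bm x_j^\top \bm x_i. \nonumber
\end{align}
Then I would bound $|\bm\theta_t^\top\bm x_i|$ by the triangle inequality: the first term is at most $h_t - 2$, and for the sum I would use $|\bm x_j^\top \bm x_i| \le \|\bm x_j\|\,\|\bm x_i\| \le 1$ (Cauchy--Schwarz plus the data assumption $\|\bm x_i\|_2 \le 1$), together with the observation that $|f_{j,t}| = |\bm\theta_{t-1}^\top\bm x_j - 2y_j| \le |\bm\theta_{t-1}^\top\bm x_j| + 2 \le h_t$ for every $j$ (since $y_j \in \{-1,1\}$).

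Putting these together, the sum $\frac{\eta}{4s}\sum_{j\in\mathcal{B}_t}|f_{j,t}||\bm x_j^\top\bm x_i|$ has $s$ terms, each at most $\frac{\eta}{4s}\cdot h_t \cdot 1$, so the whole sum is at most $\frac{\eta}{4} h_t$. Hence
\begin{align}
    |\bm\theta_t^\top \bm x_i| \le (h_t - 2) + \frac{\eta}{4} h_t, \nonumber
\end{align}
and adding $2$ to both sides and taking the maximum over $i$ gives $h_{t+1} \le h_t + \frac{\eta}{4} h_t = (1 + \frac{\eta}{4}) h_t$, which is the claimed recursion. I would also note that $h_{t+1}$ is defined as a max over $i$ of $|\bm\theta_t^\top\bm x_i| + 2$, so the bound holding for each $i$ uniformly is exactly what is needed.

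The only mild subtlety — and the step I would be most careful about — is the bound $|f_{j,t}| \le h_t$: one must observe that $h_t$ is defined using $\bm\theta_{t-1}$ (the iterate at the \emph{start} of step $t$), which is precisely the parameter appearing in $f_{j,t}$, so there is no off-by-one mismatch; the factor of $2$ in the definition of $h_t$ is exactly what absorbs the $-2y_j$ term. Everything else is a routine application of Cauchy--Schwarz and the triangle inequality, and the counting argument that the mini-batch has exactly $s$ elements so the $\frac{1}{s}$ normalization cancels cleanly. No probabilistic argument is needed here since the bound is deterministic given any realization of the mini-batches.
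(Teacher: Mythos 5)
Your proof is correct and follows essentially the same route as the paper's: expand the update rule for the inner product, apply Cauchy--Schwarz with the data assumption $\|\bm x_i\|_2\le 1$ and the triangle inequality, absorb the $-2y_j$ term into the $+2$ in the definition of $h_t$, and take the maximum over indices. The only cosmetic difference is that you bound $|\bm\theta_{t-1}^{\top}\bm x_i|$ by $h_t-2$ up front, whereas the paper carries the $+2$ through to the final line; the content is identical.
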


\begin{proof}
By the update formula Eq. (\ref{SGD}), for $1\leq j \leq n$ we have
\begin{align}
    \bm\theta_t^{\top}\bm x_j = \bm \theta_{t-1}^{\top}\bm x_j - \frac{\eta}{4s} \sum_{i \in \mathcal{B}_t} (\bm \theta_{t-1}^{\top}\bm x_i - 2y_i)\bm x_i^{\top}\bm x_j.
\end{align}
Applying Cauchy-Schwarz and triangle inequality, we have
\begin{align}
    \big|\bm \theta_t^{\top}\bm x_j\big| \le \big|\bm \theta_{t-1}^{\top}\bm x_j\big| + \frac{\eta}{4}\cdot \frac{1}{s}\sum_{i \in \mathcal{B}_t}\left(\big|\bm \theta_{t-1}^{\top}\bm x_i\big|+2\right),
\end{align}
implying
\begin{align}
    \left(\big|\bm\theta_t^{\top}\bm x_j\big|+2\right) \le \left(\big|\bm\theta_{t-1}^{\top}\bm x_j\big|+2\right) + \frac{\eta}{4s}\sum_{i \in \mathcal{B}_t}\left(\big|\bm\theta_{t-1}^{\top}\bm x_i\big|+2\right).
\end{align}
Taking maximum over $1\leq j \leq n$ on both side, we immediately have
\begin{align}
    h_{t+1} \le \left(1+\frac{\eta}{4}\right) h_{t}.
\end{align}
\end{proof}

As a corollary, we immediately obtain a possible choice of $T$ that will be safe for the implementation of the criterion. 
\begin{corollary}  \label{corollary T}
Suppose $\max\limits_{i} \{|\bm\theta_0^{\top}\bm x_i|\} = \epsilon < 2$. Then Criterion \ref{criterion} is valid for
\begin{align}
    T_\epsilon = \bigg\lceil \log_{\left(1+\frac{\eta}{4}\right)}\left(\frac{4}{2+\epsilon}\right) \bigg\rceil.
\end{align}
\end{corollary}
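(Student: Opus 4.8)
The plan is to reduce the validity of Criterion~\ref{criterion} at a given iteration to the single scalar inequality $h_t < 4$, and then control the growth of $h_t$ using Lemma~\ref{lemma}.

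First I would make the elementary observation that Criterion~\ref{criterion} is valid at iteration $t$ whenever $|\bm\theta_{t-1}^{\top}\bm x_i| < 2$ for every $i \in \mathcal{B}_t$: if $y_i = 1$ then $f_{i,t} = \bm\theta_{t-1}^{\top}\bm x_i - 2 < 0$, and if $y_i = -1$ then $f_{i,t} = \bm\theta_{t-1}^{\top}\bm x_i + 2 > 0$, so in either case $\sign(f_{i,t}) = -y_i$, which is exactly what the Criterion requires. Since $h_t = \max_i\{|\bm\theta_{t-1}^{\top}\bm x_i| + 2\}$, the requirement ``$|\bm\theta_{t-1}^{\top}\bm x_i| < 2$ for all $i$'' is equivalent to $h_t < 4$. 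Thus it suffices to prove $h_t < 4$ for all $t = 1,\dots,T_\epsilon$.

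Second, I would unroll the recursion in Lemma~\ref{lemma}. Starting from $h_1 = \max_i\{|\bm\theta_0^{\top}\bm x_i| + 2\} = 2 + \epsilon$, iterating $h_{t+1} \le (1+\tfrac{\eta}{4}) h_t$ gives $h_t \le (1+\tfrac{\eta}{4})^{t-1}(2+\epsilon)$ for all $t \ge 1$. Because $1 + \tfrac{\eta}{4} \ge 1$, the sequence $(h_t)$ is non-decreasing, so it is enough to check the bound at the last index, i.e. $h_{T_\epsilon} < 4$.

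Finally, by the definition of the ceiling we have $T_\epsilon - 1 < \log_{(1+\eta/4)}\!\big(\tfrac{4}{2+\epsilon}\big)$ (using $\lceil L \rceil < L + 1$ for every real $L$, and noting $\tfrac{4}{2+\epsilon} > 1$ since $\epsilon < 2$); exponentiating with base $1 + \tfrac{\eta}{4} > 1$ yields $(1+\tfrac{\eta}{4})^{T_\epsilon - 1} < \tfrac{4}{2+\epsilon}$, hence $h_{T_\epsilon} \le (1+\tfrac{\eta}{4})^{T_\epsilon - 1}(2+\epsilon) < 4$, as desired. This argument is a routine consequence of Lemma~\ref{lemma}, so I do not anticipate a genuine obstacle; the only points requiring care are the sign bookkeeping that identifies ``Criterion valid'' with ``$h_t < 4$'', the monotonicity of $h_t$ that reduces the claim to the final iteration, and keeping the inequality strict as it passes through the ceiling.
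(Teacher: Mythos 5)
Your proof is correct and follows essentially the same route as the paper's (the paper's own proof simply observes $h_1 = 2+\epsilon$, notes the failure threshold is $4$, and invokes Lemma~\ref{lemma}); you are just filling in the details the paper leaves implicit, including the careful handling of the ceiling. One tiny wording slip: the recursion $h_{t+1}\le(1+\tfrac{\eta}{4})h_t$ does not make the sequence $(h_t)$ itself non-decreasing --- what is non-decreasing is the unrolled upper bound $(1+\tfrac{\eta}{4})^{t-1}(2+\epsilon)$, which is all your argument actually uses, so the conclusion stands unchanged.
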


\begin{proof}
We have $h_1 = 2+\epsilon$ and the criterion fails with threshold $4$. By Lemma \ref{lemma}, the criterion is safe for the first $\bigg\lceil \log_{\left(1+\frac{\eta}{4}\right)}\left(\frac{4}{2+\epsilon}\right) \bigg\rceil$ iterations.
\end{proof}

We now come back to the first part. Party $B$ need to determine whether Criterion \ref{criterion} is valid for $t=1$, \textit{i.e.} $\max\limits_{i} \{|\bm\theta_0^{\top}\bm x_i|\} < 2$ based on his knowledge of the initialization scheme. We will discuss several possible choices of initialization.

\medskip

    \paragraph{Deterministic Initialization.}
    Suppose $\bm\theta_0$ is initialized to be a fixed vector. If  
    \begin{align}
        \max\limits_{i} \{|\bm\theta_0^{\top}\bm x_i|\} = \epsilon < 2, 
    \end{align}
    then by Corollary \ref{corollary T}, we can set $T^*=T_\epsilon$ in Criterion \ref{criterion}. As a special case, we have
    \begin{theorem}
    If we apply \textit{Zero Initialization} \emph{\cite{tripathi2018zero}} to $\bm\theta_0$, then Criterion \ref{criterion} is valid for 
    \begin{align}
         T_0 = \bigg\lceil \log_{\left(1+\frac{\eta}{4}\right)} 2  \bigg\rceil.
    \end{align}

    \end{theorem}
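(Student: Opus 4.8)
The plan is to observe that this is nothing but the instantiation of the deterministic‑initialization discussion at the degenerate value $\epsilon = 0$. Zero Initialization sets $\bm\theta_0 = \bm 0$, so for every $i \in [n]$ we have $\bm\theta_0^{\top}\bm x_i = 0$, and therefore $\max_{i}\{|\bm\theta_0^{\top}\bm x_i|\} = 0 =: \epsilon$. In particular $\epsilon = 0 < 2$, so the hypothesis of Corollary~\ref{corollary T} is satisfied (equivalently, $h_1 = 2 + \epsilon = 2$, which is strictly below the failure threshold $4$).

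Invoking Corollary~\ref{corollary T} with this $\epsilon$, Criterion~\ref{criterion} is valid for
\begin{align}
    T_\epsilon = \bigg\lceil \log_{\left(1+\frac{\eta}{4}\right)}\left(\frac{4}{2+\epsilon}\right) \bigg\rceil = \bigg\lceil \log_{\left(1+\frac{\eta}{4}\right)}\left(\frac{4}{2}\right) \bigg\rceil = \bigg\lceil \log_{\left(1+\frac{\eta}{4}\right)} 2  \bigg\rceil = T_0,
\end{align}
which is exactly the claimed bound. (Implicitly we use $\eta > 0$, so that $1 + \eta/4 > 1$ and the logarithm is well defined and increasing; this is already assumed in Corollary~\ref{corollary T}.)

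There is essentially no obstacle here: the only thing to check is that the boundary case $\epsilon = 0$ still meets the strict inequality $\epsilon < 2$ required by Corollary~\ref{corollary T}, which it does, and then the stated $T_0$ is obtained by direct substitution and simplification $4/(2+0) = 2$. The content of the statement lies entirely in Lemma~\ref{lemma} and Corollary~\ref{corollary T}; this theorem merely records the cleanest special case, where the admissible horizon depends only on the learning rate $\eta$ and not on the data.
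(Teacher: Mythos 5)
Your proof is correct and matches the paper's own argument: the theorem is stated there precisely as the special case $\epsilon = 0$ of the deterministic-initialization discussion, obtained by plugging $\epsilon = 0$ into Corollary~\ref{corollary T} and simplifying $4/(2+0)=2$. Nothing further is needed.
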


  \paragraph{Small Random Initialization.} This is a common strategy in modern machine learning problems to ensure an $O(1)$ output, so that optimization can be significantly boosted. Common initialization schemes such as $\textit{Xavier Initialization}$ \cite{glorot2010understanding} and $\textit{Kaiming Initialization}$ \cite{he2015delving} all fall into this category. Formally, since $\|\bm x_i\|_2 \le 1, \ \forall i\in[n]$, we need to ensure that $\mathbb{E}\|\bm \theta_0\|_2 = O(1)$, $i.e.$ \ $\mathbb{E}\theta_{0,j}^2 = O\left(\frac{1}{d}\right), \ \ \forall j \in [d]$. To achieve this, we set $\gamma$ to be a mean-zero sub-exponential random variable
  and initialized each element of $\bm\theta_0$ \textit{i.i.d.} through $\gamma/\sqrt{d}$. We then have the following lemma.
        \begin{lemma} \label{lemma init}
            Suppose $\|\gamma\|_{\psi_1} = l$, then with probability at least $1-2ne^{-\sqrt{d}}$, we have
            \begin{align}
                \big|\bm\theta_0^{\top}\bm x_i\big| < \frac{l}{c}d^{-\frac{1}{4}}, \ \ \ \ \ \forall i\in [n],
            \end{align}
            where $c \le 1$ is an absolute constant.
        \end{lemma}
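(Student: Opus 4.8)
The plan is to prove a high-probability bound on a \emph{single} inner product $\bm\theta_0^{\top}\bm x_i$ via a Bernstein-type concentration inequality, and then lift it to the uniform statement over $i\in[n]$ by a union bound. Write $\bm\gamma=(\gamma_1,\dots,\gamma_d)^{\top}$ for the vector of i.i.d.\ copies of $\gamma$ used to build $\bm\theta_0=\bm\gamma/\sqrt d$, so that $\bm\theta_0^{\top}\bm x_i=\frac{1}{\sqrt d}\sum_{j=1}^{d}x_{i,j}\gamma_j$. This is a weighted sum of $d$ independent, mean-zero, sub-exponential random variables with weights $a_j=x_{i,j}$; the data assumption $\|\bm x_i\|_2\le 1$ gives $\sum_j a_j^2\le 1$ and $\max_j|a_j|\le 1$, while $\|x_{i,j}\gamma_j\|_{\psi_1}=|x_{i,j}|\,l\le l$. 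Note that the crude estimate $|\bm\theta_0^{\top}\bm x_i|\le\|\bm\theta_0\|_2\|\bm x_i\|_2\le\|\bm\gamma\|_2/\sqrt d$ only yields an $O(1)$ bound, so one genuinely has to exploit the cancellation in the signed sum.

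The main step is to invoke Bernstein's inequality for sums of independent sub-exponential random variables: there is an absolute constant $c_0>0$ such that, for all $t>0$,
\[
  \Prob\left[\Big|\sum_{j=1}^{d}x_{i,j}\gamma_j\Big|\ge t\right]\ \le\ 2\exp\left(-c_0\min\Big(\tfrac{t^2}{l^2},\ \tfrac{t}{l}\Big)\right).
\]
I would then take the deviation level $t$ of order $l\,d^{1/4}$ — so that, after dividing by $\sqrt d$, the event becomes $\{\,|\bm\theta_0^{\top}\bm x_i|\ge \tfrac{l}{c}d^{-1/4}\,\}$ for the appropriate absolute constant $c\le 1$ — and calibrate $c$ so that the exponent on the right-hand side is at least $\sqrt d$, giving $\Prob[\,|\bm\theta_0^{\top}\bm x_i|\ge \tfrac{l}{c}d^{-1/4}\,]\le 2e^{-\sqrt d}$. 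A union bound over $i\in[n]$ then gives
\[
  \Prob\left[\exists\, i\in[n]:\ \big|\bm\theta_0^{\top}\bm x_i\big|\ge \tfrac{l}{c}\,d^{-1/4}\right]\ \le\ 2n\,e^{-\sqrt d},
\]
which is exactly the claim of the lemma.

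I expect the delicate point to be the scale-versus-tail bookkeeping inside Bernstein's inequality. Because $\gamma$ is only assumed sub-exponential (not sub-Gaussian), the tail bound splits into a Gaussian (quadratic) branch $t^2/l^2$ and a heavier exponential (linear) branch $t/l$, and one has to track which branch governs at the chosen level $t=\Theta(l\,d^{1/4})$ and pick $c$ accordingly, so that the claimed $d^{-1/4}$ scale remains compatible with the claimed $e^{-\sqrt d}$ tail; this is precisely where the cancellation among the summands $x_{i,j}\gamma_j$, controlled through $\|\bm x_i\|_2\le 1$, is used. Everything else — rewriting the inner product and the union bound over the $n$ samples — is routine.
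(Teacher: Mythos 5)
Your proposal follows exactly the route of the paper's own proof: write $\bm\theta_0^{\top}\bm x_i$ as a weighted sum of i.i.d.\ mean-zero sub-exponential variables, invoke Bernstein's inequality using $\|\bm x_i\|_2\le 1$ and $\|\bm x_i\|_{\infty}\le 1$, and finish with a union bound over $i\in[n]$. The architecture is right. However, the one step you explicitly defer --- ``calibrate $c$ so that the exponent is at least $\sqrt d$'' --- is precisely the step that does not go through. At deviation level $t=\tfrac{l}{c}d^{1/4}$ (equivalently $\epsilon=\tfrac{l}{c}d^{-1/4}$ for the inner product) you have $t/l=d^{1/4}/c>1$, so the minimum in Bernstein's bound is attained by the \emph{linear} branch, and the tail you obtain is $2\exp(-c_0\,d^{1/4}/c)$, not $2\exp(-c_0\,d^{1/2}/c^2)$. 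Forcing this to be at most $2e^{-\sqrt d}$ would require $c\le c_0 d^{-1/4}$, which is not an absolute constant; with an absolute $c$ the honest conclusion is a failure probability of order $2n e^{-\Omega(d^{1/4})}$ rather than $2ne^{-\sqrt d}$. The $e^{-\sqrt d}$ tail at scale $d^{-1/4}$ is what you would get if $\gamma$ were sub-Gaussian and the quadratic branch applied throughout; sub-exponentiality genuinely caps the exponent at $O(d^{1/4})$ here.

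In fairness, the paper's own proof commits the same oversight: after plugging $\epsilon=\tfrac{l}{c}d^{-1/4}$ into $2\exp\bigl[-c\min\bigl(\tfrac{d\epsilon^2}{l^2},\tfrac{\sqrt d\,\epsilon}{l}\bigr)\bigr]$ it silently reads off the quadratic branch, even though $\min\bigl(\tfrac{\sqrt d}{c^2},\tfrac{d^{1/4}}{c}\bigr)=\tfrac{d^{1/4}}{c}$. So your proposal is faithful to the paper, flaw included. The repair is harmless for the downstream use in Theorem~\ref{theorem small}, which only needs a high-probability bound: either restate the conclusion with probability $1-2ne^{-d^{1/4}}$, or strengthen the hypothesis to $\gamma$ sub-Gaussian (which covers Xavier/Kaiming initialization) and use Hoeffding's inequality. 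Either way, carry out the branch bookkeeping explicitly rather than asserting that a suitable $c$ exists.
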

        
        \begin{proof}
            For a fixed $i \in [n]$, note $\|\bm x_i\|_2 \le 1$ and $\|\bm x_i\|_{\infty} \le 1$. Applying Bernstein's inequality (Theorem 2.8.1 in \cite{vershynin2018high}), we have
            \begin{align}
& \ \ \  \Pr\left(\bigg|\sum_{j=1}^d \theta_{0,j}x_{i, j}\bigg|\ge \epsilon\right) \nonumber \\
&\le 2\exp\left[-c\min\left(\frac{\epsilon^2}{L^2\|\bm x_i\|_2^2},\frac{\epsilon}{L\|\bm x_i\|_{\infty}}\right)\right] \nonumber \\
&\le 2\exp\left[-c\min\left(\frac{d\epsilon^2}{l^2},\frac{\sqrt{d}\epsilon}{l}\right)\right].  
\end{align}
Setting $\epsilon = \frac{l}{c}d^{-\frac{1}{4}}$, we have with probability at most $2e^{-\sqrt{d}}$, 
\begin{align}
    |\bm\theta_0^{\top}\bm x_i| \ge \frac{l}{c}d^{-\frac{1}{4}}.
\end{align}
Taking a union bound over $i \in [n]$ yields the result as desired. 
        \end{proof}

        Together with Corollary $\ref{corollary T}$, we immediately have
        \begin{theorem}  \label{theorem small}
        Suppose each element of $\bm\theta_0$ is initialized \textit{i.i.d.} through $\gamma/\sqrt{d}$, where $\gamma$ is a mean-zero sub-exponential random variable with $\|\gamma\|_{\psi_1} = l$. Then with probability at least $1-2ne^{-\sqrt{d}}$, Criterion \ref{criterion} is valid for 
        \begin{align}
            T_{l} = \bigg\lceil \log_{\left(1+\frac{\eta}{4}\right)}\left(\frac{4c}{2c+l d^{-\frac{1}{4}}}\right) \bigg\rceil,
        \end{align}
         where $c \le 1$ is an absolute constant.
        \end{theorem}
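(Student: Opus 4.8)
The plan is to chain Lemma~\ref{lemma init} with Corollary~\ref{corollary T}: Lemma~\ref{lemma init} gives high-probability control of the initial margin $\max_i |\bm\theta_0^{\top}\bm x_i|$ under the stochastic initialization, and Corollary~\ref{corollary T} turns any deterministic bound on that margin into a horizon over which Criterion~\ref{criterion} is provably safe. So the whole argument is just one concentration event followed by a deterministic implication, and I do not expect any genuine obstacle.

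First I would invoke Lemma~\ref{lemma init} with $\|\gamma\|_{\psi_1}=l$: on an event $\mathcal{E}$ of probability at least $1-2ne^{-\sqrt d}$ one has the uniform bound $\max_{i\in[n]}|\bm\theta_0^{\top}\bm x_i| < \tfrac{l}{c}\,d^{-1/4}=:\epsilon^{*}$. I would note here that we are implicitly in the regime where $d$ is large enough that $\epsilon^{*}<2$ (otherwise $T_l$ as written is not a meaningful horizon); this is precisely what is needed to apply Corollary~\ref{corollary T}, and we inherit its standing assumption of a constant learning rate $\eta_t\equiv\eta$.

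Next, on the event $\mathcal{E}$ I would set $\epsilon:=\max_i|\bm\theta_0^{\top}\bm x_i| < \epsilon^{*} < 2$ and apply Corollary~\ref{corollary T}, so that Criterion~\ref{criterion} is valid for $T_\epsilon=\big\lceil \log_{1+\eta/4}\!\big(\tfrac{4}{2+\epsilon}\big)\big\rceil$ iterations. The final step is the monotonicity observation that makes the one-sided bound $\epsilon<\epsilon^{*}$ usable: since $1+\eta/4>1$ and $\epsilon\mapsto \tfrac{4}{2+\epsilon}$ is decreasing (and exceeds $1$ for $\epsilon<2$), the map $\epsilon\mapsto \log_{1+\eta/4}\!\big(\tfrac{4}{2+\epsilon}\big)$ is decreasing, and the ceiling preserves this; hence $T_\epsilon \ge T_{\epsilon^{*}}$. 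Substituting $\epsilon^{*}=\tfrac{l}{c}d^{-1/4}$ gives $T_{\epsilon^{*}}=\big\lceil\log_{1+\eta/4}\!\big(\tfrac{4}{2+ld^{-1/4}/c}\big)\big\rceil=\big\lceil\log_{1+\eta/4}\!\big(\tfrac{4c}{2c+ld^{-1/4}}\big)\big\rceil=T_l$, so on $\mathcal{E}$ Criterion~\ref{criterion} is valid for at least $T_l$ iterations, which is the claim.

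The only points requiring a moment's care are therefore (i) the monotonicity argument just described, which is what lets a high-probability \emph{upper} bound on the margin translate into a valid \emph{conservative lower} bound on the safe horizon, and (ii) the tacit hypothesis $\tfrac{l}{c}d^{-1/4}<2$, i.e.\ that the feature dimension is large enough for the initialization to be ``small'' in the sense relevant to Criterion~\ref{criterion}. All the real work is already done in Lemma~\ref{lemma init} (the sub-exponential/Bernstein concentration) and in Lemma~\ref{lemma}/Corollary~\ref{corollary T} (the geometric growth of $h_t$).
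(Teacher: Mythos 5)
Your proposal is correct and follows exactly the paper's route: the paper likewise obtains the theorem by combining Lemma~\ref{lemma init} with Corollary~\ref{corollary T} (stating only ``Together with Corollary~\ref{corollary T}, we immediately have''). Your added care about monotonicity of $\epsilon\mapsto T_\epsilon$ and the tacit requirement $\tfrac{l}{c}d^{-1/4}<2$ makes explicit two details the paper leaves implicit, but the argument is the same.
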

        
    \begin{remark}
    Suppose Party $B$ holds the label of the full dataset $D$ in the first epoch, then he can repeat the procedure as in Section \ref{sec_security_A} and obtains $g_{i,t}^A = (\bm\theta_{t-1}^A)^{\top}\bm x^A_{i}$ for every $t \ge 1$ and $i \in S_t$. In particular, when $d_A = 1$, he can learn the precise value of the data in terms of feature $x_A$ (up to a difference in sign) in at most $\max\{\lfloor\log_2(m)\rfloor + 1,2\}$ epochs with probability at least $1-\mathcal{O}(\frac{1}{m^{s-1}})$.
    \end{remark} 

    \paragraph{Large Random Initialization.} The strategy is to initialize each element of $\bm \theta_0$ to be $O(1)$, $e.g. \ $ $\mathcal{N}(0,1)$. This is less common in modern machine learning as it makes the output dependent on the dimension, $e.g.$  $\bm \theta_0^{\top}\bm x_i = O(d)$ in logistic regression. Although less favorable from the perspective of optimization, it is suitable for the vertical FL framework since now $\bm\theta_{t-1}^{\top}\bm x_i$ dominates the observable signal ${f}_{i,t}$, and therefore makes Criterion \ref{criterion} invalid. In some sense, one must \textit{trade efficiency for privacy}.

\medskip

\paragraph{Discussion of the Multi-party Scenario.}
{
Finally, we will generalize the label recovery attack to the multi-party scenario. The main observation is that the proposed attack only relies on the understanding of the \textit{full} inner product through the initialization scheme, and does not depend on the number of parties or how the features are split between them. }

{Specifically, suppose there are $p$ parties $U_1, \cdots, U_p$ with $p \ge 3$, $U_1$ is the only party that holds the label, and they own a disjoint set of the features $\bm x^{U_i}$ where $\bm x = \bm x^{U_1} \parallel ... \parallel \bm x^{U_p}$. Denote $d_i = \text{dim}(\bm x^{U_i}), \ i \in [p]$ and $d = \sum_{i=1}^p d_i$. Similar to the proof of Corollary \ref{corollary T}, we have the following theorem for Party $U_i,\  2 \le i \le p$. 
\begin{theorem}  \label{theorem multiple}
Suppose $s \le d_{U_i}$ for some $2 \le i \le p$. If $\bm \theta_0$ is initialized (with high probability) such that
\begin{align}
    \max_i \{|\bm \theta_0^{\top}\bm x_i|\} = \epsilon <2,
\end{align}
then Criterion \ref{criterion} is valid for $U_i$ for 
\begin{align}
    T_{\epsilon} = \bigg\lceil \log_{\left(1+\frac{\eta}{4}\right)}\left(\frac{4}{2+\epsilon}\right) \bigg\rceil
\end{align}
(with high probability).
\end{theorem}}

\section{Protocol-aware Active Attack}  \label{sec_discuss}

For the privacy analysis in Section \ref{sec_security_A} and \ref{sec_security_B}, we assume the existence of an oracle such that both parties can obtain their local gradients, and ignore the details of practical implementation. In this section, we will describe the protocol through which both parties can cooperatively learn their own share of gradient, without violating the privacy of individual instance \textit{directly}. Further, by modifying the protocol maliciously, both parties can relax the constraints of batch size as appears in the previous sections, and therefore are able to perform the same attack in more general scenarios. We refer to this as `\textit{active attack}'.

\subsection{Protocol of Obtaining Gradient}  \label{subsub protocol}
We will discuss how both parties collaboratively obtain $\nabla^A_{t}$, and the procedure for computing $\nabla^{B}_{t}$ shares a similar philosophy. Aside from the information known to himself, Party $A$ needs to compute $\bm H^A_{t}:=\sum_{i\in \mathcal{B}_t} g_{i,t}^B\bm x^A_{i}$ via the assistance from Party $B$, where $g_{i,t}^B =  (\bm \theta_{t-1}^B)^\top \bm x^B_{i}$. Also, Party $B$ cannot directly send the value of $g_{i,t}^B$ to Party $A$ as it will violate the privacy of individual $i$. 

Note $\bm H^A_{t}$ is essentially a linear combination of the $\bm x^A$ feature, with the coefficients being unknown. Therefore, a natural solution to the above problem is (additively) homomorphic encryption, such as Paillier \cite{paillier1999public} and OU \cite{okamoto1998new}. Here we will resort to the former one. In Paillier cryptosystem, there exists a pair of keys $pk_B$ and $sk_B$, where both parties can use $pk_B$ to encrypt a plaintext $a$ as its corresponding ciphertext $[\![a]\!]$, but only Party $B$ can use $sk_B$ to decrypt $[\![a]\!]$ and get $a$, which we denote as $D([\![a]\!]) = a$. For vector $\bm v$, we will use $[\![\bm v]\!]$ to denote element-wise encryption and similarly $\bm D$ for decryption. 

The most notable property of $[\![\cdot]\!]$ is additively homomorphic, i.e. $[\![a]\!]+[\![b]\!]=[\![a+b]\!]$ and $k[\![a]\!]=[\![ka]\!]$ for all plaintext $a,b$ and $k \in \mathbb{Z}$. We are now ready to present the protocol{\footnote{The protocol is standard and resembles a few works in the literature \cite{FATE-HeterLR,hardy2017private} --- the only difference here is that we do not assume the existence of a trusted third party $C$. Also, we assume all the quantities presented in this protocol appear in the form of integers and ignore the process of encoding and decoding. }}

\begin{algorithm}[htbp]
	\floatname{algorithm}{Protocol}
	\renewcommand{\algorithmicrequire}{\textbf{Input:}}
	\renewcommand{\algorithmicensure}{\textbf{Output:}}
	\caption{Party $A$ and $B$ compute the gradient $\nabla^A_{t}$}
	\label{alg1}
	\begin{algorithmic}[1]
		\REQUIRE The public key $pk_B$ and $d_A, d_B$ for both parties \\
		\hspace{6mm} The secret key $sk_B$ for $B$
		\STATE \textbf{Initialize} $\bm S^A_{t}\gets [\![\bm 0_{d_A}]\!]$
		\FOR{$i\in \mathcal{B}_t$}
		\STATE $B$ computes $g_{i,t}^B =  (\bm \theta_{t-1}^B)^\top \bm x^B_{i}$ and encrypts it as $[\![g_{i,t}^B]\!]$
		\ENDFOR
		\STATE $B$ sends $[\![g_{i,t}^B]\!]$, $i\in \mathcal{B}_t$ to $A$
		\FOR{$i\in \mathcal{B}_t$}
		\STATE $A$ computes $\bm S^A_{t}\gets \bm S^A_{t}+[\![g_{i,t}^B]\!]\bm x^A_{i}$  \label{steal}
		\ENDFOR
		\STATE $A$ generates a random vector $\bm r \in \mathbb{Z}^{d_A}$ and computes $[\![\bm r]\!]$
		\STATE $A$ masks the summation $\widetilde{\bm S}^A_t := \bm S^A_{t}-[\![\bm r]\!]$ and sends it to $B$
		\STATE $B$ decrypts to get $\bm D(\widetilde{\bm S}^A_t)$ and sends it back to $A$
		\STATE $A$ computes $\bm G^A_{t} := \bm D(\widetilde{\bm S}^A_t)+\sum_{i\in \mathcal{B}_t}(g_{i,t}^A-2y_i)\bm x^A_{i}+ \bm r$, where $g_{i,t}^A=(\bm \theta_{t-1}^A)^\top \bm x^A_{i}$  
		\ENSURE The gradient for $A$ at iteration $t$: $\nabla^A_{t}=\frac{1}{4s}\bm G^A_{t}$ 
	\end{algorithmic}  
\end{algorithm}

To see the correctness of Protocol \ref{alg1}, note Party $B$ receives
\begin{align}
    \widetilde{\bm S}^A_t &= [\![\bm 0_{d_A}]\!] + \sum_{i \in \mathcal{B}_t} [\![g_{i,t}^B]\!]\bm x^A_{i} - [\![\bm r]\!] \\
    &= \bigg[\!\!\bigg[\sum_{i \in \mathcal{B}_t}g_{i,t}^B\bm x^A_{i} - \bm r\bigg]\!\!\bigg] = [\![\bm H^A_{t} - \bm r]\!],  \label{Eq. homo}
\end{align}
where we use additive homomorphic in Eq. (\ref{Eq. homo}). Therefore, Party $A$ can recover $\bm H^A_{t}$ by adding $\bm r$ to the plaintext $\bm D(\widetilde{\bm S}^A_t)$.

\subsection{Active Attack}  \label{subsub AA}

While constructing attacks in Section \ref{sec_security_A} and \ref{sec_security_B}, we assume the batch size is no more than the dimension of Party $A$'s (or Party $B$'s) features, i.e. $s\leq d_A$ (or $s\leq d_B$). Unfortunately, such attacks could be easily invalidated by increasing the batch size. To further strengthen our results, we propose generating auxiliary ciphertext during the execution of information transfer. Similarly, we will use Party $A$ as an example to illustrate our strategy and the results can be easily generalized to Party $B$.

The idea is that Party $A$ receives the encrypted coefficient $[\![g_{i,t}^B]\!]$ from Party $B$ and uses it to multiply $\bm x^A_{i}$ in line \ref{steal} of Protocol \ref{alg1}, so intuitively he can gain more information about $g_{i,t}^B$ by increasing the dimension of $\bm x_{A}$. To this end, Party $A$ extends his share of feature on the training samples to $\bm z^A_{i}:= \left(\begin{array}{@{}c@{}}
     \bm x^A_{i}  \\
     \hline
     \widetilde{\bm x}^A_{i}
\end{array}\right)
$ for $i \in [n]$, where $\{\widetilde{\bm x}^A_{i}\}_{i\in[n]}$ is randomly sampled from a continuous probability distribution and $\dim(\widetilde{\bm x}^{A}) = \widetilde{d}_A$. We also assume $\widetilde{d}_A = (u-1)d_A$ for some $u \in \mathbb{Z}^+$ and therefore $\dim(\bm z^{A}) = ud_A$.

Now, we modify line \ref{steal} of Protocol \ref{alg1} by multiplying each $[\![g_{i,t}^B]\!]$ with $\bm z^A_{i}$, and Party $A$ obtains
\begin{align}
    \bm E^A_{t}:=\bigg[\!\!\bigg[\sum_{i\in\mathcal{B}_t}g_{i,t}^B\bm z^A_{i}\bigg]\!\!\bigg] 
\end{align}
after the loop. But now $\text{dim}(\bm E^A_{t}) = ud_A$, so it will not be accepted by Party $B$. Therefore, we compress multiple ciphertexts into one ciphertext with longer plaintext length. Specifically, we use $\bm v[j]$ to denote the $j$-th element of vector $\bm v$, and assume each plaintext has been encoded as an $\iota$-bit integer. Now define
\begin{align}
    \bm S^A_{t}[i] := \sum_{s=0}^{u-1} \bm E^A_{t}[i+sd_A] * 2^{s\iota}, \ \ \ \forall i \in [d_A],   \label{eq compress}
\end{align}
then the dimension of $\bm D\left(\bm S^A_{t}\right)$ will be $d_A$. Further, for each $i \in [d_A]$, $ D\left(\bm S^A_{t}[i]\right)$ can be \textit{uniquely} split into $u$ integers by $\iota$-bit since 
\begin{align}
    D\left(\bm S^A_{t}[i]\right) = \sum_{s=0}^{u-1} D\left(\bm E^A_{t}[i+sd_A]\right) * 2^{s\iota}.
\end{align}
Therefore, $\bm D(\bm E^A_{t})$ can be recovered from $\bm D(\bm S^A_{t})$. Finally, before sending $\bm S^A_{t}$ to Party $B$ for decryption, Party $A$ will similarly generate a noise vector $\bm r$ and mask the object. This step ensures that Party $B$ cannot distinguish whether Party $A$ has hidden any information in $\bm S^A_{t}$. We now present the complete algorithm as follow.

\begin{algorithm}[htbp]
	\floatname{algorithm}{Algorithm}
	\renewcommand{\algorithmicrequire}{\textbf{Input:}}
	\renewcommand{\algorithmicensure}{\textbf{Output:}}
	\caption{Active attack of Party $A$}
	\label{alg2}
	\begin{algorithmic}[1]
		\REQUIRE $[\![g_{i,t}^B]\!], i \in \mathcal{B}_t$ for $A$
		\STATE \textbf{Initialize} $\bm E^A_{t}\gets [\![\bm 0_{ud_A}]\!]$
		\FOR{$i\in \mathcal{B}_t$}
		\STATE $A$ computes $\bm E^A_{t}\gets \bm E^A_{t}+[\![g_{i,t}^B]\!]\bm z_{A,i}$ 
		\ENDFOR
		\STATE $A$ compresses $\bm E^A_{t}$ into $\bm S^A_{t}$ via Eq. (\ref{eq compress})
		\STATE $A$ generates a random vector $\bm r \in \mathbb{Z}^{{d_A}}$ and computes $[\![\bm r]\!]$
		\STATE $A$ masks the object $\widetilde{\bm S}^A_{t} := \bm S^A_{t}-[\![\bm r]\!]$ and sends it to $B$
		\STATE $B$ decrypts to get $\bm D(\widetilde{\bm S}^A_{t})$ and sends it back to $A$
		\STATE $A$ computes $\bm D(\widetilde{\bm S}^A_{t}) + \bm r$, then splits it by $\iota$-bit

		\ENSURE The plaintext of $\bm E^A_{t}$: $\bm D(\bm E^A_{t})$ 
	\end{algorithmic}  
\end{algorithm}

Note that the output of Algorithm \ref{alg2} essentially provides Party $A$ a linear system
\begin{align}
    \sum_{i\in\mathcal{B}_t}g_{i,t}^B\bm z^A_{i} = \bm D(\bm E^A_{t}),  \label{eq leakage}
\end{align}
which has $s$ variables $g_{i,t}^B, \ i \in \mathcal{B}_t$ and $ud_A$ linear equations. Compared to the original protocol which gives $d_A$ linear equations, Party $A$ obtains $(u-1)d_A$ additional equations via the generation of auxiliary ciphertext. Further, since $\{\widetilde{\bm x}_{A,i}\}_{i\in[n]}$ is sampled from a continuous probability distribution, we can apply Theorem \ref{Theorem Linear} and conclude the uniqueness of $g_{i,t}^B, \ i \in \mathcal{B}_t$ when $s \le ud_A$. This further allows us to perform the same attack as in Section \ref{sec_security_A}. Analogously, Party $B$ can adopt the same strategy and perform the same attack in Section \ref{sec_security_B} when $s \le ud_B$.

Finally, we will pick a suitable $u$ so that the Paillier cryptosystem can work properly. Denote $N$ as the maximum number that we can apply encryption on and the length of $N$ is $\kappa$-bit. Note in Eq. (\ref{eq compress}) we produce a ciphertext whose plaintext has length at most $u\iota$. Therefore, to ensure the correctness of encryption, we can pick $u =\lfloor\frac{{\kappa-1}}{\iota}\rfloor$ and Algorithm \ref{alg2} can still function properly. In practice, a $\iota=64$-bit encoding scheme on the plaintext can guarantee good model performance
and the length of {$N$} is $\kappa=2048$-bit or even more, implying $u \geq {31}$. {Therefore, unless $s > 31\max\{d_A,d_B\}$, at least one of the attacks in Section \ref{sec_security_A} or \ref{sec_security_B} will still be effective and therefore can be applied to more general scenarios in practice.} This is a significant improvement over the original protocol. {Additionally, if $d_A$ and $d_B$ is not publicly available, then both party can maliciously claim a larger `$d_A$' or `$d_B$' to obtain more linear equations from the other side. The constraints on batch size will then be negligible.}

In all, the proposed active attack is mainly based on \textit{generating} and \textit{compressing} auxiliary ciphertext. While the idea of creating linear equations is straightforward and does appear in a previous work \cite{li2021privacy}, in context of the HE-based protocol where the linear equations essentially come from the decoding party, we cannot simply attach the auxilliary ciphertext to the original ones, which would result in denial of decoding. Instead, we need to project the high-dimensional vector into the original domain to permit decoding in a \textit{lossless} manner. This is guaranteed by our compressing mechanism, which leverages the power-of-$2$ summation to ensure unique decodability. Together with the capacity of the Paillier cryptosystem as well as the encoding scheme on the plaintext, we can obtain a lower bound on the number of linear equations guaranteed by our active attack.

\section{Countermeasure}  \label{sec_defense}

So far we've been mostly discussing privacy attacks in vertical logistic regression. In this section, we will complement the previous results by proposing a countermeasure. Note merely increasing the batch size might not be adequate due to the active attack demonstrated in Subsection \ref{subsub AA}.

Take Party $A$ as an example. It is straightforward to see from Algorithm \ref{alg2} that the privacy breach arises from the linear system Eq. (\ref{eq leakage}), through which Party $A$ can learn the precise value of $g_{i,t}^B$. To fix this issue, a natural approach\footnote{Aside from the current framework, another straightforward solution is to use secret sharing \cite{beimel2011secret}, so that both parties can only obtain random shares of the gradient throughout the training process. See \cite{mohassel2017secureml} for an example.} for Party $B$ is to modify Protocol \ref{alg1} and mask the plaintext with random noise before encryption, so that the solution to the linear system deviates from the true value $g_{i,t}^B$. This strategy corresponds to \textit{differential privacy} \cite{dwork2006differential}, which is a classic tool in secure multi-party computation and machine learning. 

\subsection{DP-Paillier-MGD}


Let us begin with the formal definition of Differential Privacy (DP). 

\begin{definition}[Differential Privacy]

A randomized algorithm $\mathcal{A}$ with domain $\mathcal{D}$ and range $\mathcal{R}$ is $(\epsilon, \delta)$-differentially private if for all $\mathcal{S} \in \mathcal{R}$ and
for all $D_1, D_2 \in \mathcal{D}$ that differ by at most one instance,
\begin{align*}
    \Pr(\mathcal{A}(D_1)\in \mathcal{S}) \le \exp{(\epsilon)}\Pr(\mathcal{A}(D_2)\in \mathcal{S}) + \delta.
\end{align*}
\end{definition}

DP is usually realized via \textit{randomization}: making the output undeterministic by adding noise. Here we set the noise vector to be Gaussian, which has been widely adopted in practice \cite{abadi2016deep, bassily2014private}.

\begin{definition}[Gaussian Mechanism]
Given $f: \mathcal{D} \to \mathbb{R}^d$, the Gaussian Mechanism  with parameter $\sigma$ is defined as:
\begin{align}
    M_G\left(D, f(\cdot), \sigma\right) := f(D) + (Y_1, \cdots, Y_d)^{\top},
\end{align}
where $D \in \mathcal{D}$ and $Y_i$ are i.i.d. random variables drawn from $\mathcal{N}(0, \sigma^2)$.
\end{definition}

{An important caveat here is that we will not resort to the original definition of DP --- which is interpreted as a privacy guarantee regarding the \textit{membership inference attack} --- because Party $A$ and $B$ need to align their user ID before initializing the procedure of VFL and hence hold the data records in the training set. Instead, we seek to leverage the effect of \textit{obfuscation} brought along by the noise-adding mechanism, to ensure that DP also serves as a good countermeasure for the attacks proposed in the previous sections.  
}

Specifically, the strategy for Party $B$ is to mask the sensitive information, and therefore Party $A$ cannot learn the \textit{precise} value of $g_{i,t}^B$ after solving the linear system. Similarly, Party $A$ will also mask the sensitive information $g_{i,t}^A - 2y_i$ before encryption. Denote $\bm{SV}^B_t$ as the sensitive vector concatenated by all $\{g_{i,t}^B\}_{i \in \mathcal{B}_t}$ and similarly define $\bm{SV}^A_t$. Both parties adopt Gaussian Mechanism for masking, and the updated algorithm, which we name as DP-Paillier-MGD, is given as follow.

\begin{algorithm}[htbp]
	\floatname{algorithm}{Algorithm}
	\renewcommand{\algorithmicrequire}{\textbf{Input:}}
	\renewcommand{\algorithmicensure}{\textbf{Output:}}
	\caption{DP-Paillier-MGD}
	\label{alg3}
	\begin{algorithmic}[1]
		\REQUIRE mini-batch $\mathcal{B}_t$, learning rate $\eta_t = \eta$, two Paillier cryptosystems $\mathscr{P}_1, \mathscr{P}_2$  \\
		\hspace{6mm} variance $\sigma_A^2, \sigma_B^2$ of noise vector for $A, B$
		\STATE \textbf{Initialize} $\bm \theta_0 = \bm \theta_0^A \parallel \bm \theta_0^B$
		\FOR{$t = 1: T$}
		\STATE $B$ generates $\bm Z^B \sim \mathcal{N}(\bm 0, \sigma_B^2\bm I_s)$, computes $Sec(\bm{SV}^B_t) := \bm{SV}^B_t + \bm Z^B$ and encrypts it in $\mathscr{P}_1$
		
		\STATE both parties follow the procedure of Protocol \ref{alg1} using $\mathcal{P}_1$ until $A$ obtains $\bm X_{\mathcal{B}_t}^A Sec(\bm{SV}_t^B)$, where $\bm X_{\mathcal{B}_t}^A$ is formed via stacking $\{\bm x_i^A\}_{i\in\mathcal{B}_t}$ by column 
		\STATE $A$ computes $\widetilde{\bm G}_t^A := \bm X_{\mathcal{B}_t}^A\bm{SV}_{t}^A+ \bm X_{\mathcal{B}_t}^ASec(\bm{SV}_t^B)$
		
		\STATE $A$ generates $\bm Z^A \sim \mathcal{N}(\bm 0, \sigma_A^2\bm I_s)$, computes $Sec(\bm{SV}^A_t) := \bm{SV}^A_t + \bm Z^A$ and encrypts it in $\mathscr{P}_2$
		
		\STATE both parties follow the procedure of Protocol \ref{alg1} using $\mathcal{P}_2$ until $B$ obtains $\bm X_{\mathcal{B}_t}^B Sec(\bm{SV}_{t}^A))$, where $\bm X_{\mathcal{B}_t}^B$ is similarly defined as $\bm X_{\mathcal{B}_t}^A$
		\STATE $B$ computes $\widetilde{\bm G}_t^B := \bm X_{\mathcal{B}_t}^B\bm{SV}_{t}^B+ \bm X_{\mathcal{B}_t}^B Sec(\bm{SV}_{t}^A)$

		\STATE $A$ updates $\bm \theta_t^A = \bm \theta_{t-1}^A - \frac{\eta}{4s}\widetilde{\bm G}_t^A$, $B$ updates $\bm \theta_t^B = \bm \theta_{t-1}^B - \frac{\eta}{4s}\widetilde{\bm G}_t^B$
		\ENDFOR

		\ENSURE $\{Sec(\bm{SV}_t^A)\}_{t \in [T]}, \{Sec(\bm{SV}_t^B)\}_{t \in [T]}$
	\end{algorithmic}  
\end{algorithm}

\subsection{Utility and Privacy Guarantees} \label{subsec PU}
In this subsection, we will present utility and {(worst-case)} privacy guarantees for Algorithm \ref{alg3}. We assume 
\begin{align}
    \max_{t \in [T]} \|\bm\theta_{t-1}\|_2 \le 4G-2
\end{align}
for some $G > \frac{1}{2}$, so that
{\begin{align}
    \max\limits_{i \in [n], t \in [T]}\|\nabla\mathcal{L}_{i}(\bm\theta_{t-1})\|_2 &\le \frac{\max\limits_{i \in [n], t \in [T]} |\bm\theta_{t-1}^{\top}\bm x_i - 2y_i|}{4} \le G,
\end{align}}
where $\nabla\mathcal{L}_{i}(\bm\theta_{t-1})$ is the gradient on instance $i$.

\subsubsection{Utility Analysis}
We begin ourselves on the utility side and provide a convergence analysis for Algorithm \ref{alg3} under the noise level $\sigma_A$ and $\sigma_B$. Note that under the first-order approximation scheme, the original update on $\bm \theta$ can be treated as mini-batch gradient descent of the following loss function:
\begin{align}
    \mathcal{L}(\bm \theta) := \frac{1}{8n}\sum_{i=1}^n (\bm \theta^{\top}\bm x_i - 2y_i)^2,    \label{Eq approx loss}
\end{align}
with the mini-batch gradient $\nabla\mathcal{L}_{\mathcal{B}_t}(\bm\theta_{t-1})$ identical to the one defined in Eq. (\ref{eq MGD-taylor}). After injecting noise in Algorithm \ref{alg3}, the update on $\bm \theta$ becomes
\begin{align}
    \bm \theta_t = \bm \theta_{t-1} - \eta_t \left(\nabla\mathcal{L}_{\mathcal{B}_t}(\bm\theta_{t-1}) + {\bm e_t}\right),   \label{defense SGD}
\end{align}
where the gradient error is
\begin{align}
    \bm e_t := \frac{1}{4s}\left(\begin{array}{c}
         \bm X_{\mathcal{B}_t}^A\bm Z^B  \\
         \bm X_{\mathcal{B}_t}^B\bm Z^A 
    \end{array}\right).
\end{align}

Our first step is to control the magnitude of $\bm e_t$, which is mainly done by a standard concentration inequality and deferred to Appendix \ref{app_sub_C}. 
\begin{lemma} \label{lemma gradient error}
Denote $\sigma := \max\{\sigma_A, \sigma_B\}$. Given $k \ge 1$, we have $\|\bm e_t\|_2 \le \sqrt{\frac{k}{s}}\sigma$ for $t \in [T]$ with probability at least $1 - 2T\exp(-ck)$,  where $c$ is an absolute constant.  
\end{lemma}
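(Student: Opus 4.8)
The plan is to bound $\|\bm e_t\|_2$ for a fixed $t$ and then take a union bound over $t \in [T]$. Recall $\bm e_t = \frac{1}{4s}\left(\bm X_{\mathcal{B}_t}^A\bm Z^B;\ \bm X_{\mathcal{B}_t}^B\bm Z^A\right)$, where $\bm Z^A \sim \mathcal{N}(\bm 0, \sigma_A^2 \bm I_s)$ and $\bm Z^B \sim \mathcal{N}(\bm 0, \sigma_B^2 \bm I_s)$ are independent, and $\bm X_{\mathcal{B}_t}^A, \bm X_{\mathcal{B}_t}^B$ have columns $\{\bm x_i^A\}_{i\in\mathcal{B}_t}$, $\{\bm x_i^B\}_{i\in\mathcal{B}_t}$ respectively. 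The first observation is that $\bm X_{\mathcal{B}_t}^A\bm Z^B$ is a Gaussian vector in $\mathbb{R}^{d_A}$ with covariance $\sigma_B^2\, \bm X_{\mathcal{B}_t}^A (\bm X_{\mathcal{B}_t}^A)^\top$, and likewise for the $B$-block. Since $\|\bm x_i\|_2 \le 1$ implies $\|\bm x_i^A\|_2 \le 1$ and $\|\bm x_i^B\|_2 \le 1$, the operator norm of $\bm X_{\mathcal{B}_t}^A (\bm X_{\mathcal{B}_t}^A)^\top$ is at most $\sum_{i\in\mathcal{B}_t}\|\bm x_i^A\|_2^2 \le s$, and similarly for $B$. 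Hence the stacked vector $\left(\bm X_{\mathcal{B}_t}^A\bm Z^B;\ \bm X_{\mathcal{B}_t}^B\bm Z^A\right)$ is sub-Gaussian with parameter at most $\sqrt{s}\,\sigma$ where $\sigma = \max\{\sigma_A,\sigma_B\}$.

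The second step is to apply a standard concentration bound for the norm of a sub-Gaussian (or, more concretely, Gaussian) vector. Writing the stacked noise vector as $\bm w_t := (\bm X_{\mathcal{B}_t}^A\bm Z^B;\ \bm X_{\mathcal{B}_t}^B\bm Z^A) \in \mathbb{R}^{d_A+d_B}$, one can represent $\bm w_t = \bm M_t \bm Z$ where $\bm Z = (\bm Z^B; \bm Z^A)$ (up to block reordering) is a $2s$-dimensional standard-normal-scaled vector and $\|\bm M_t\|_{\mathrm{op}} \le \sqrt{s}\,\sigma$. Then $\|\bm w_t\|_2 = \|\bm M_t \bm Z\|_2$ is a Lipschitz function of a Gaussian with Lipschitz constant $\le \sqrt{s}\,\sigma$, so by Gaussian concentration (or by the Hanson–Wright/Bernstein-type bound for $\|\bm M_t \bm Z\|_2^2$) we get, for $k \ge 1$,
\begin{align}
    \Pr\!\left(\|\bm w_t\|_2 \ge \sqrt{k s}\,\sigma\right) \le 2\exp(-c' k)
\end{align}
for an absolute constant $c'$; dividing by $4s$ shows $\|\bm e_t\|_2 \le \frac{1}{4s}\sqrt{ks}\,\sigma = \frac{1}{4}\sqrt{k/s}\,\sigma \le \sqrt{k/s}\,\sigma$ on this event. (The factor $\tfrac14$ only helps, so the stated bound is comfortably implied; any clean absolute constant $c$ works after absorbing constants.)

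The final step is a union bound: the event $\{\|\bm e_t\|_2 \le \sqrt{k/s}\,\sigma\}$ fails for some $t \in [T]$ with probability at most $\sum_{t=1}^T 2\exp(-ck) = 2T\exp(-ck)$, which gives the claim. I do not expect any step to be a genuine obstacle — this is a routine concentration argument — but the one place to be careful is the operator-norm bound $\|\bm X_{\mathcal{B}_t}^A(\bm X_{\mathcal{B}_t}^A)^\top\|_{\mathrm{op}} \le s$: it follows from $\|A A^\top\|_{\mathrm{op}} = \|A\|_{\mathrm{op}}^2 \le \|A\|_F^2 = \sum_i \|\text{col}_i\|_2^2$, and one must make sure the relevant norms are of the \emph{sub}-vectors $\bm x_i^A$, which are bounded by the norm bound on the full $\bm x_i$. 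The only modeling subtlety is the independence and dimensions of $\bm Z^A, \bm Z^B$ (each is $s$-dimensional, matching the batch size), so that the combined noise lives in $\mathbb{R}^{2s}$ and the effective variance proxy is $\sigma^2$ per coordinate.
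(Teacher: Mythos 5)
Your proof is correct and takes essentially the same route as the paper's: fix $t$, control $\bm X_{\mathcal{B}_t}^A$ and $\bm X_{\mathcal{B}_t}^B$ through $\|\bm X\|_{\mathrm{op}}^2\le\|\bm X\|_F^2\le s$ (from $\|\bm x_i\|_2\le 1$), apply Gaussian concentration to the resulting quadratic form, and union-bound over $t\in[T]$ --- the paper merely splits the two blocks by the triangle inequality and invokes Hanson--Wright on each, while you treat the stacked vector in one shot via Lipschitz concentration. The only step to make explicit is that Gaussian Lipschitz concentration bounds the deviation from $\mathbb{E}\|\bm w_t\|_2$, so you also need
\begin{align}
\mathbb{E}\|\bm w_t\|_2\le\sqrt{\sigma_B^2\Tr\big((\bm X_{\mathcal{B}_t}^A)^{\top}\bm X_{\mathcal{B}_t}^A\big)+\sigma_A^2\Tr\big((\bm X_{\mathcal{B}_t}^B)^{\top}\bm X_{\mathcal{B}_t}^B\big)}\le\sqrt{2s}\,\sigma,
\end{align}
which is exactly the trace bound the paper uses inside Hanson--Wright, before absorbing constants into $c$.
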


With Lemma \ref{lemma gradient error} at hand, we are now ready to state the utility guarantee of Algorithm \ref{alg3}. It is basically a combination of the standard mirror descent argument, as well as a martingale trick to control the stacking of noise over the training horizon. 

\begin{theorem}   \label{theorem utility}
Let $\bm \theta^*$ be an arbitrary reference point. Assume 
\begin{align}
    \max\limits_{t \in [T]} \|\bm \theta_{t-1} - \bm \theta^*\|_2 \le C,
\end{align}
then with step size $\eta = {\mathcal{O}}(\frac{1}{\sqrt{T\ln T}})$, Algorithm \ref{alg3} satisfies
\begin{align}
    \mathcal{L}\left(\frac{1}{T}\sum_{t \in [T]} \bm \theta_{t-1} \right) \le \mathcal{L}(\bm \theta^*) + \mathcal{O}\left(\frac{u+1}{\sqrt{T}}\right)
\end{align}
with probability at least $1-2\delta$, where $u := \sqrt{\frac{\ln(2T/\delta)}{cs}}\sigma$.
\end{theorem}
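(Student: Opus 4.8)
The plan is to run a standard online-to-batch / mirror-descent style argument for the noisy mini-batch iteration \eqref{defense SGD}, treating $\mathcal{L}$ as a convex (in fact quadratic) objective and bounding the cumulative effect of the gradient error $\bm e_t$ over the horizon $[T]$. First I would note that $\mathcal{L}$ defined in \eqref{Eq approx loss} is convex, so for any reference point $\bm\theta^*$ and any $t$ we have the first-order bound $\mathcal{L}(\bm\theta_{t-1}) - \mathcal{L}(\bm\theta^*) \le \langle \nabla\mathcal{L}(\bm\theta_{t-1}), \bm\theta_{t-1}-\bm\theta^*\rangle$. The update is $\bm\theta_t = \bm\theta_{t-1} - \eta(\nabla\mathcal{L}_{\mathcal{B}_t}(\bm\theta_{t-1}) + \bm e_t)$, so the usual ``three-point'' expansion of $\|\bm\theta_t - \bm\theta^*\|_2^2$ gives
\begin{align*}
 \langle \nabla\mathcal{L}_{\mathcal{B}_t}(\bm\theta_{t-1}) + \bm e_t,\ \bm\theta_{t-1}-\bm\theta^*\rangle \le \frac{\|\bm\theta_{t-1}-\bm\theta^*\|_2^2 - \|\bm\theta_t-\bm\theta^*\|_2^2}{2\eta} + \frac{\eta}{2}\|\nabla\mathcal{L}_{\mathcal{B}_t}(\bm\theta_{t-1}) + \bm e_t\|_2^2.
\end{align*}
Summing over $t\in[T]$, telescoping the first term, and dividing by $T$ yields a bound on $\frac{1}{T}\sum_t \langle \nabla\mathcal{L}_{\mathcal{B}_t}(\bm\theta_{t-1}) + \bm e_t, \bm\theta_{t-1}-\bm\theta^*\rangle$ of order $\frac{C^2}{\eta T} + \eta(G^2 + \sigma^2 k/s)$ on the event from Lemma~\ref{lemma gradient error}, using $\|\nabla\mathcal{L}_{\mathcal{B}_t}\|_2\le G$ and $\|\bm e_t\|_2 \le \sqrt{k/s}\,\sigma$.

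Next I would reconcile the mini-batch gradient $\nabla\mathcal{L}_{\mathcal{B}_t}(\bm\theta_{t-1})$ with the full gradient $\nabla\mathcal{L}(\bm\theta_{t-1})$ by writing $\nabla\mathcal{L}_{\mathcal{B}_t}(\bm\theta_{t-1}) = \nabla\mathcal{L}(\bm\theta_{t-1}) + \bm\xi_t$, where $\bm\xi_t$ is a mean-zero (conditional on $\bm\theta_{t-1}$) martingale-difference term, and likewise split $\bm e_t$ into its conditional mean (zero, since $\bm Z^A, \bm Z^B$ are mean-zero and independent of the batch) plus a martingale-difference term. Thus $\frac{1}{T}\sum_t \langle \nabla\mathcal{L}_{\mathcal{B}_t}(\bm\theta_{t-1}) + \bm e_t - \nabla\mathcal{L}(\bm\theta_{t-1}),\ \bm\theta_{t-1}-\bm\theta^*\rangle$ is an average of martingale differences, each bounded in magnitude by $C\cdot(\|\bm\xi_t\|_2 + \|\bm e_t\|_2) = \mathcal{O}(C(G + \sigma\sqrt{k/s}))$ on the good event; an Azuma--Hoeffding bound then controls it by $\mathcal{O}\big(\frac{C(G+\sigma\sqrt{k/s})\sqrt{\ln(1/\delta)}}{\sqrt{T}}\big)$ with probability $\ge 1-\delta$. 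Combining with the previous display and using Jensen's inequality on the convex $\mathcal{L}$ to move the average inside, $\mathcal{L}(\frac{1}{T}\sum_t \bm\theta_{t-1}) - \mathcal{L}(\bm\theta^*) \le \frac{1}{T}\sum_t(\mathcal{L}(\bm\theta_{t-1}) - \mathcal{L}(\bm\theta^*))$, which is bounded by $\frac{C^2}{2\eta T} + \frac{\eta}{2}(G^2 + \sigma^2 k/s) + \mathcal{O}\big(\frac{C(G+\sigma\sqrt{k/s})\sqrt{\ln(1/\delta)}}{\sqrt{T}}\big)$. Finally, choosing $k = \tfrac{1}{c}\ln(2T/\delta)$ in Lemma~\ref{lemma gradient error} so that its failure probability is $\le \delta$ and $\sqrt{k/s}\,\sigma = u$, and picking $\eta = \Theta\big(\frac{1}{\sqrt{T\ln T}}\big)$, balances the $\frac{1}{\eta T}$ and $\eta$ terms to give the claimed $\mathcal{O}\big(\frac{u+1}{\sqrt{T}}\big)$ rate; the two $\delta$'s (Lemma and Azuma) combine by a union bound into the stated probability $1-2\delta$.

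The main obstacle I anticipate is the bookkeeping around the two intertwined noise sources — the mini-batch sampling noise $\bm\xi_t$ and the injected Gaussian noise in $\bm e_t$ — and making sure the filtration is set up so that both are genuine martingale differences (in particular that $\bm e_t$ is conditionally mean-zero given the batch and the past, which holds because $\bm Z^A_t, \bm Z^B_t$ are fresh and independent each round). A secondary subtlety is that $G$ and $C$ are treated as $\mathcal{O}(1)$ constants absorbed into the big-$\mathcal{O}$, and that the $\ln T$ inside $\eta$ is precisely what is needed to kill the extra $\sqrt{\ln(2T/\delta)}$ factor coming from the Azuma step and from the choice of $k$ — so the claimed $\mathcal{O}((u+1)/\sqrt{T})$ already has the logarithm folded into $u$. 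Everything else (convexity, Cauchy--Schwarz, telescoping) is routine.
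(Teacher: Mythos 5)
Your proposal is correct and follows essentially the same route as the paper's proof: a mirror-descent expansion of $\|\bm\theta_t-\bm\theta^*\|_2^2$, convexity of $\mathcal{L}$ plus Jensen, the event from Lemma~\ref{lemma gradient error} with $k=\frac{1}{c}\ln(2T/\delta)$ to bound $\|\bm e_t\|_2\le u$, Azuma--Hoeffding on the martingale difference $\langle\bm\theta^*-\bm\theta_{t-1},\,\bm g_t+\bm e_t-\bm G_t\rangle$, and a union bound giving $1-2\delta$. The only cosmetic difference is that the paper fixes $\eta=\|\bm\theta_0-\bm\theta^*\|_2/((G+u)\sqrt{T})$ explicitly rather than $\Theta(1/\sqrt{T\ln T})$, which is the same order since $u=\Theta(\sqrt{\ln(2T/\delta)})$.
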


\begin{proof}
Let $\delta = 2Te^{-ck}$ in Lemma \ref{lemma gradient error}, we have with failure probability at most $\delta$, 
\begin{align}
    \|\bm e_t\|_2 \le \sqrt{\frac{k}{s}}\sigma = \sqrt{\frac{\ln(2T/\delta)}{cs}} = u
\end{align}
for $t\in [T]$. Now denote
\begin{align}
    \bm g_t := \nabla\mathcal{L}_{\mathcal{B}_t}(\bm\theta_{t-1}) \ \ \ \text{and} \ \ \ \bm G_t := \nabla\mathcal{L}(\bm\theta_{t-1}).
\end{align}
Since $\|\nabla\mathcal{L}_{i}(\bm\theta_{t-1})\|_2 \le G$, we have $\|\bm g_t\|_2, \|\bm G_t\|_2 \le G$ by triangle inequality. Now for $t \in [T]$, consider
\begin{align}
    & \ \ \ \ \|\bm \theta_t - \bm \theta^*\|_2^2 \\
    &=\|\bm \theta_{t-1}- \bm \theta^*\|_2^2  + 2\eta\langle \bm \theta^*- \bm \theta_{t-1}, \bm g_t + \bm e_t - \bm G_t  \rangle \nonumber \\
    & \ \ \ + 2\eta\langle \bm \theta^*- \bm \theta_{t-1}, \bm G_t  \rangle + \eta^2\|\bm g_t + \bm e_t\|_2^2 \\
    &\le \|\bm \theta_{t-1}- \bm \theta^*\|_2^2 + 2\eta\underbrace{\langle \bm \theta^*- \bm \theta_{t-1}, \bm g_t + \bm e_t - \bm G_t  \rangle}_{\epsilon_t} \nonumber \\ 
    & \ \ \ + 2\eta(\mathcal{L}(\bm \theta^*)-\mathcal{L}(\bm \theta_{t-1})) + 2\eta^2\left(G + \|\bm e_t\|_2\right)^2,
\end{align}
where we use the convexity of $\mathcal{L}$ in the first inequality. Rearranging both side, we have
\begin{align}
    & \ \ \ \ \mathcal{L}(\bm \theta_{t-1})  \\
    &\le \mathcal{L}(\bm \theta^*) + \frac{\|\bm \theta_{t-1}- \bm \theta^*\|_2^2 - \|\bm \theta_{t}- \bm \theta^*\|_2^2}{2\eta} +  \frac{\eta}{2}\left(G + u\right)^2 + \epsilon_t.
\end{align}
Now apply $\frac{1}{T}\sum\limits_{t \in [T]}$ to both sides, and note the convexity of $\mathcal{L}$ again, we have
\begin{align}
    & \ \ \ \ \mathcal{L}\left(\frac{1}{T}\sum_{t \in [T]} \bm \theta_{t-1} \right) \\ 
    &\le \frac{1}{T}\sum\limits_{t \in [T]} \mathcal{L}(\bm \theta_{t-1}) \\
    &\le \mathcal{L}(\bm \theta^*)+ \frac{\|\bm \theta_{0}- \bm \theta^*\|_2^2}{2\eta T} + \frac{\eta}{2}\left(G + u\right)^2 + \frac{1}{T}\sum_{t \in [T]} \epsilon_t.
\end{align}
Setting $\eta = \frac{\|\bm \theta_{0}- \bm \theta^*\|_2}{(G+u)\sqrt{T}}$, we have
\begin{align}
    \mathcal{L}\left(\frac{1}{T}\sum_{t \in [T]} \bm \theta_{t-1} \right) &\le \mathcal{L}(\bm \theta^*) + \frac{(G+u)C}{\sqrt{T}} + \frac{1}{T}\sum_{t \in [T]} \epsilon_t.  \label{eq utility bound}
\end{align}
Finally, it suffices to control the sum of error term $\sum_{t \in [T]} \epsilon_t$. Note
\begin{align}
    \mathbb{E}[\epsilon_t \mid \bm \theta_{< t}] &= \mathbb{E}[\langle \bm \theta^*- \bm \theta_{t-1}, \bm g_t + \bm e_t - \bm G_t  \rangle \mid \bm \theta_{< t}]  \\
    &= \big<\mathbb{E}[\bm g_t + \bm e_t - \bm G_t \mid \bm \theta_{< t}],\bm \theta^*- \bm \theta_{t-1}\big> \\
    &=0,
\end{align}
so $\{\epsilon_t\}_{t \in [T]}$ is a  martingale difference sequence. Further, by Cauchy-Schwarz and triangle inequality we have
\begin{align}
    \mathbb{E} |\epsilon_t| &\le \|\bm \theta^* - \bm \theta_{t-1}\|_2 (\|\bm g_t\|_2+\|\bm G_t\|_2 + \|\bm e_t\|_2) \\
    &\le (2G+u)C.
\end{align}
Therefore, by Azuma-Hoeffding inequality (Corollary 2.20 in \cite{wainwright2019high}) we have
\begin{align}
    \sum_{t \in [T]} \epsilon_t \le (2G+u)C\sqrt{2T\ln(1/\delta)}  \label{eq azuma}
\end{align}
with probability at least $1-\delta$. Plugging Eq. (\ref{eq azuma}) into Eq. (\ref{eq utility bound}) gives the result as desired.
\end{proof}

\begin{remark}
The error in the right hand side consists of two parts: the first term is $\mathcal{L}(\bm \theta^*)$, which can be made small by considering an optimal reference point; the second term is $\widetilde{\mathcal{O}}(\frac{1}{\sqrt{T}})$, which matches the standard convergence rate of mini-batch gradient descent in the convex setting \emph{\cite{duchi2018introductory}} up to a logarithmic factor, and achieves the exact rate when there is no noise, i.e. $\sigma = 0$. 
\end{remark}

\subsubsection{Privacy Analysis}
{
We will now turn to the privacy side. As mentioned at the beginning of this section, we will not argue that Algorithm \ref{alg3} is $(\epsilon, \delta)$-differentially private, mainly because this is usually seen as a privacy guarantee regarding membership inference attack. Therefore, we choose the upper bound on $T^*$ that we can safely apply Criterion \ref{criterion} as a \textit{measurement} of privacy leakage. We focus on the label recovery attack as the feature attack is expected to fail, since it relies on the precise value of the sensitive information and therefore can be easily broken by noise, whereas the label recovery attack only requires the sensitive information to fall in a certain range. Note such upper bound might not accurately reflect practice as we are actually considering the \textit{worst case} scenario and there could be cancellation of noise in reality; nevertheless, compared to the similar result presented in Corollary \ref{corollary T}, this upper bound still sheds light on the effectiveness of Gaussian mechanism. We will use experiments to provide stronger support in the next section.}

\begin{theorem}  \label{theorem defense ub}
Suppose $\max\limits_{i} \{|\bm\theta_0^{\top}\bm x_i|\} = \epsilon < 2$, then
Criterion \ref{criterion} is valid for Algorithm \ref{alg3} with
\begin{align}
    \widetilde{T_{\epsilon}} = \bigg\lceil \log_{\left(1+\frac{\eta}{4}\right)}\left(\frac{4u+4}{4u+2+\epsilon}\right) \bigg\rceil < {T_{\epsilon}}
\end{align}
with probability at least $1 - \delta$, where $u$ is defined in Theorem \ref{theorem utility}.
\end{theorem}

\begin{proof}
Define $q_t = \max\limits_{i} \{|\bm \theta_{t-1}^{\top}\bm x_i| + p \}$ where $p = 4u+2$. By the update formula of Algorithm \ref{alg3}, which is given by Eq. (\ref{defense SGD}), for $1\leq j \leq n$ we have
\begin{align}
    \bm\theta_t^{\top}\bm x_j = \bm \theta_{t-1}^{\top}\bm x_j - \frac{\eta}{4s} \sum_{i \in \mathcal{B}_t} (\bm \theta_{t-1}^{\top}\bm x_i - 2y_i)\bm x_i^{\top}\bm x_j - \eta \bm e_t^{\top}\bm x_j.
\end{align}
Discarding the failure probability and applying Cauchy-Schwarz and triangle inequality, we have
\begin{align}
    \big|\bm \theta_t^{\top}\bm x_j\big| \le \big|\bm \theta_{t-1}^{\top}\bm x_j\big| + \frac{\eta}{4s}\sum_{i \in \mathcal{B}_t}\left(\big|\bm \theta_{t-1}^{\top}\bm x_i\big|+2\right) + \eta u,
\end{align}
with probability at least $1-\delta$, implying
\begin{align}
    \left(\big|\bm\theta_t^{\top}\bm x_j\big|+p\right) \le \left(\big|\bm\theta_{t-1}^{\top}\bm x_j\big|+p\right) + \frac{\eta}{4s}\sum_{i \in \mathcal{B}_t}\left(\big|\bm\theta_{t-1}^{\top}\bm x_i\big|+p\right),
\end{align}
Taking maximum over $1\leq j \leq n$ on both side, we immediately have
\begin{align}
    q_{t+1} \le \left(1+\frac{\eta}{4}\right) q_{t}.
\end{align}
Finally, note $q_1 = 4u + 2 + \epsilon$ and the threshold of failure is $4u + 4 + \epsilon$. Therefore, Criterion \ref{criterion} is valid for
\begin{align}
    \widetilde{T_\epsilon} = \bigg\lceil \log_{\left(1+\frac{\eta}{4}\right)}\left(\frac{4u+4}{4u+2+\epsilon}\right) \bigg\rceil.
\end{align}
\end{proof}

\section{Experiments} \label{sec_experiment}

In this section, we implement vertical logistic regression on practical problems including classifying digits 0 and 1 in MNIST dataset \cite{lecun2010mnist} and predicting default payments on UCL default of credit card clients Data Set \cite{yeh2009comparisons} to showcase the effectiveness of our label attack and defense strategy.

\paragraph{Data.} MNIST is a dataset of the images of handwritten digits from 0 to 9. For evaluating vertical logistic regression, we take all the 0, 1 from MNIST and construct a binary classification problem. We assign label -1 to digit 0 and label 1 to digit 1 in the binary classification problem. There are 5923 training samples and 980 test samples for digit 0; 6742 training samples and 1135 test samples for digit 1, respectively. UCL default of credit card clients Data Set (abbreviated as credit) contains information on default payments, demographic factors, credit data, history of payment, and bill statements of credit card clients in Taiwan from April 2005 to September 2005. There are 30000 samples and we randomly split them into a training set of 27000 samples and a test set of 3000 samples. Each sample has a 23-dimensional feature and a binary label. We standardize the features of the credit dataset before processing logistic regression.

\paragraph{Learning Setting.} For both datasets, we consider the vertical logistic regression where Party A can access the first half of the features and Party B can access the last half. The batch size is set to 256. {Note here we focus on the validity of Criterion \ref{criterion}, which only involves the value of the full inner product, hence independent of the choices of batch size or the splitting ratio.} We mildly tune the learning rate from $[0.001, 0.003, 0.01]$ and choose 0.01 as the final learning rate. We evaluate 3 commonly-used initialization schemes including zero initialization (set all parameters to 0), Xavier Initialization~\cite{glorot2010understanding} Kaiming Initialization~\cite{he2015delving}. We set the total number of epochs to 30 and 10 for MNIST and credit dataset, respectively. For each setting, we repeat the experiment with 5 different random seeds and plot both mean (solid curve) and standard error (shadow).

\paragraph{Attack and Defense.} We evaluate our attack by calculating the alignment of the label obtained from Criterion~\ref{criterion} with the true label. For defense, we implement Algorithm~\ref{alg3} with different $\epsilon\in[0.1, 0.2, 0.5]$ and fix $\delta$ to 0.1. {Although we don't consider $(\epsilon,\delta)$-DP as a \textit{formal} privacy guarantee, we can still heuristically estimate the $\ell_2$-sensitivity iteratively and demonstrate that Algorithm \ref{alg3} satisfies $(\epsilon, \delta)$-DP with}
\begin{align}  \label{DP_A}
    \sigma_A = \sqrt{2\ln\left(\frac{5}{4\delta}\right)}\frac{\sqrt{\frac{8G^2e^2T\eta^2}{s} + 64G^2e}}{\epsilon},
\end{align}
\begin{align}  \label{DP_B}
    \sigma_B = \sqrt{2\ln\left(\frac{5}{4\delta}\right)}\frac{\sqrt{\frac{8G^2e^2T\eta^2}{s} + (8G-4)^2e}}{\epsilon}.
\end{align}
The detailed derivation is deferred to Appendix \ref{app.exp}. We then calculate the standard deviations $\sigma_A,\sigma_B$ according to the above equations where $T$ is chosen to be $30$ and $G$ is chosen to be $1$.

\paragraph{Results on MNIST.} In Figure~\ref{fig:exp_acc},~\ref{fig:exp_stats}, we report the experiment results on MNIST under the setting where the initialization follows Xavier Initialization. We observe that the performance of VFL under our defense is comparative to that without defense, especially for larger $\epsilon$. Meanwhile, the success rate of the label attack is significantly reduced. Though the theory of differential privacy encourages the choice of smaller $\epsilon$, we find that, in practice, even for $\epsilon$ as large as $0.5$, the success rate of label attack is as small as a random guess. This is because $\sigma_A, \sigma_B$ are already very large (larger than 10) for $\epsilon=0.5$. The results under Kaiming Initialization are similar. For zero initialization, the success rate of label attack stays $100\%$ for 10 more epochs than Xavier Initialization while the accuracy increases faster.

\begin{figure}[h]
\includegraphics[width=0.5\textwidth]{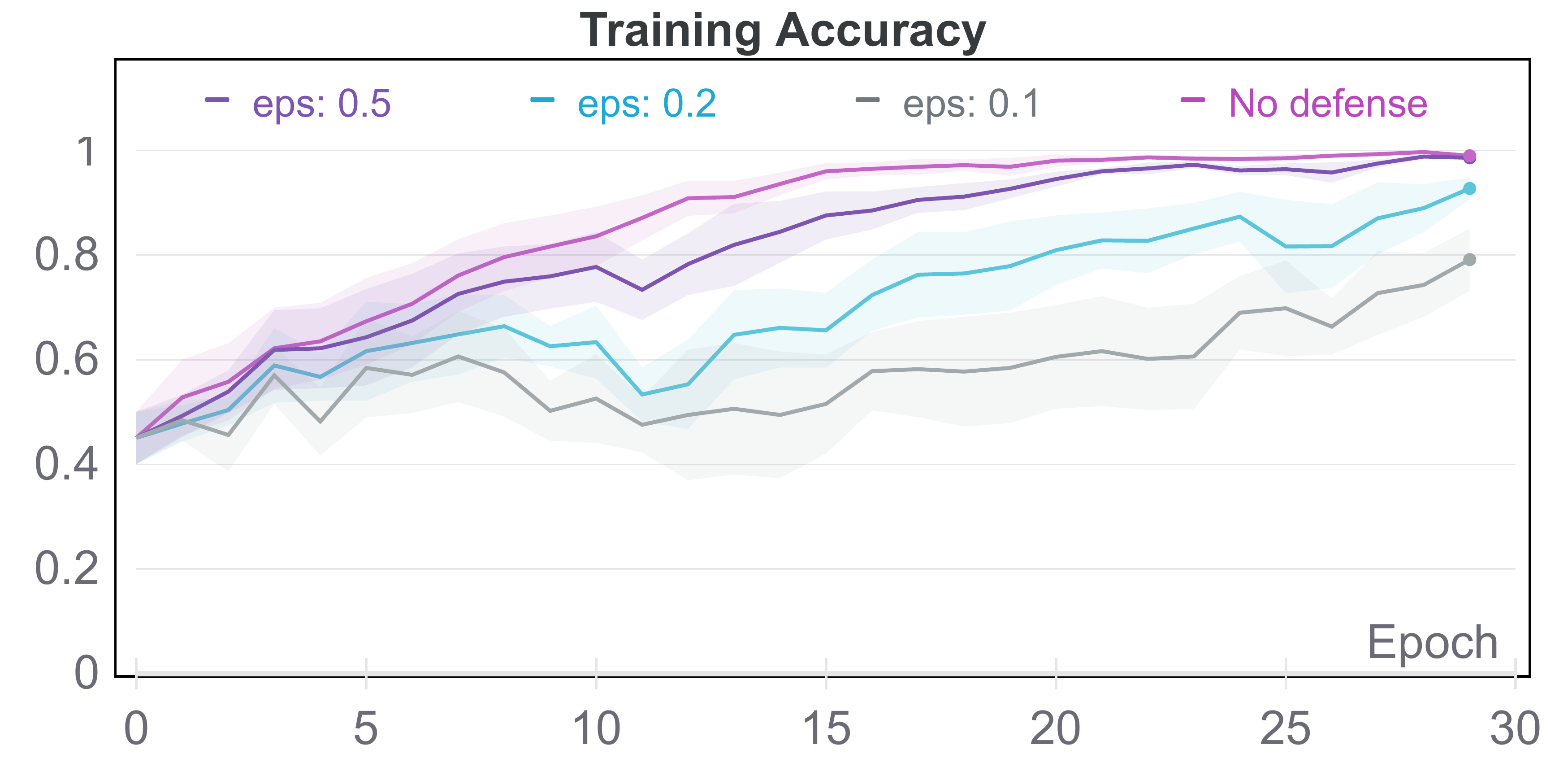}
\includegraphics[width=0.5\textwidth]{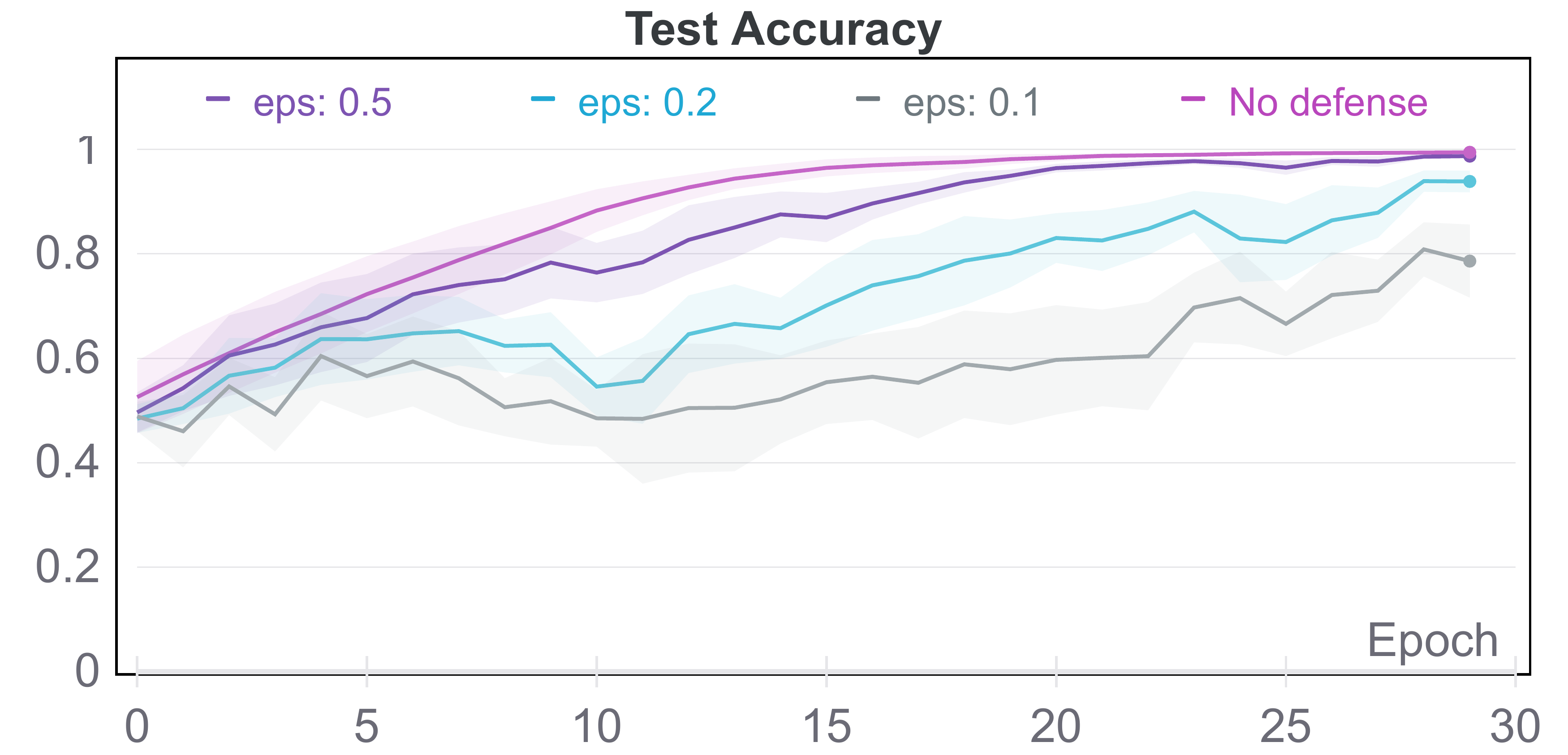}
\caption{Training and test accuracy under different settings on MNIST. VFL can achieve good performance, i.e., more than $90\%$ test accuracy, even for $\epsilon$ as small as $0.2$.}
\label{fig:exp_acc}
\end{figure}
\begin{figure}[h]
\includegraphics[width=0.5\textwidth]{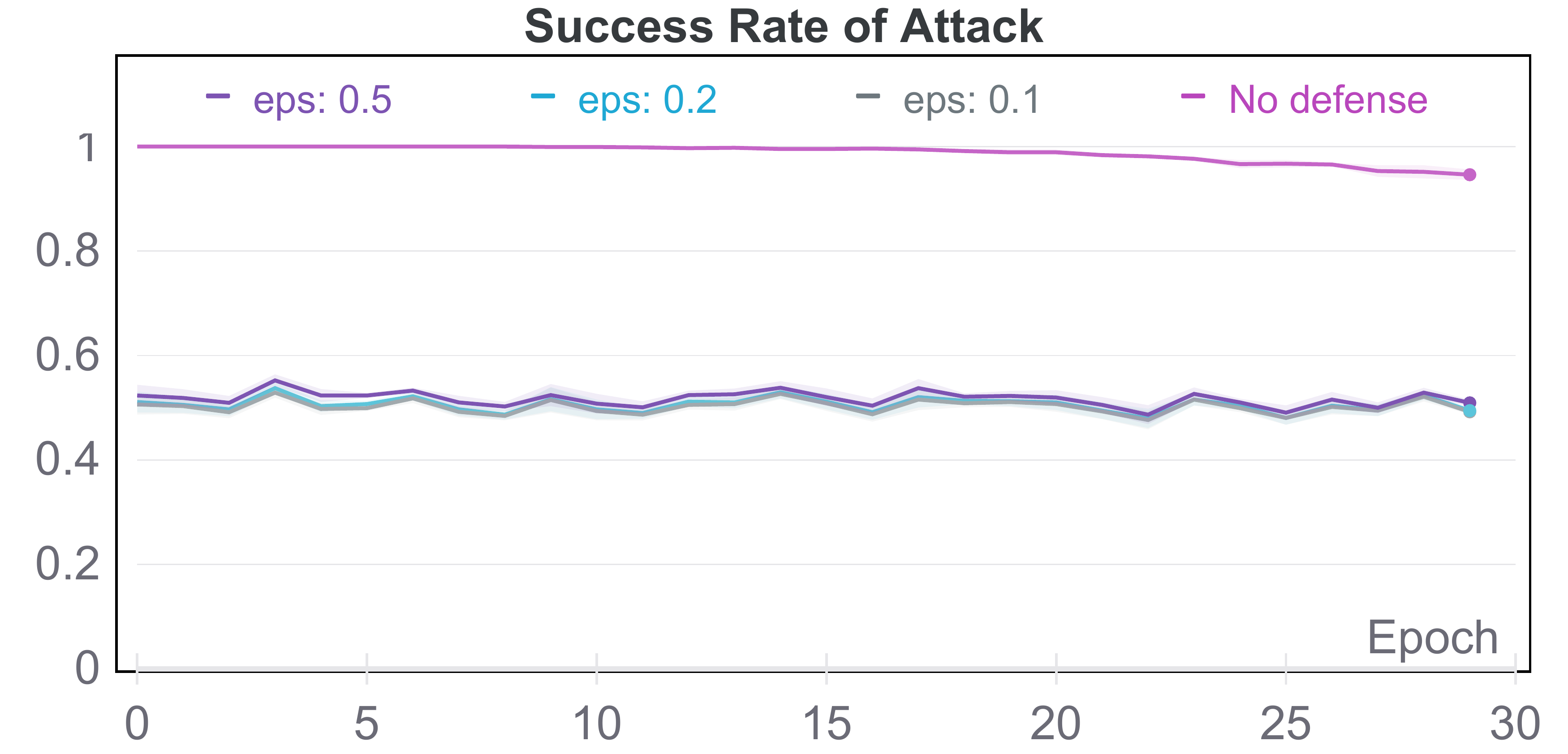}
\includegraphics[width=0.5\textwidth]{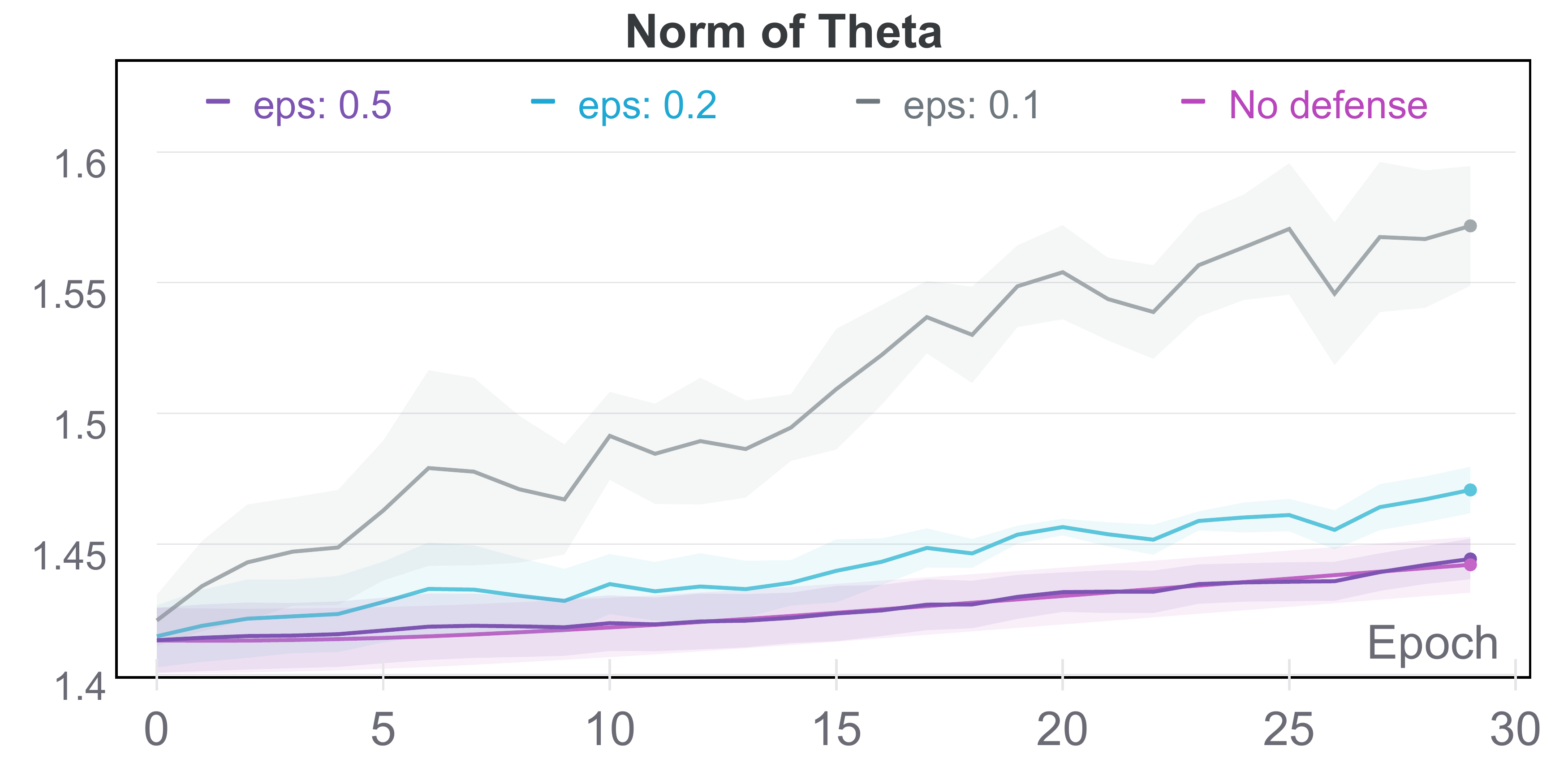}
\caption{Other results on MNIST. Top: For all choices of $\epsilon$, the success rate of label attack is reduced to about $50$\%, which is same to random guess. In contrast, for VFL without defense, the success rate of label attack is $100\%$ for several epochs. This validates the effectiveness of our attack and defense. Bottom: Norm of $\bm\theta$ is bounded by 2 throughout training, which verify the choice of $G=1$.}
\label{fig:exp_stats}
\end{figure}

\paragraph{Results on credit dataset.} In Figure~\ref{fig:exp_credit} we report the experiment results on credit dataset. The results are similar to MNIST. Since the dimension of the features in credit dataset is small, i.e., 23, the scales of Xavier Initialization and Kaiming Initialization are large. We thus use this dataset to provide a comparison between different initialization schemes. As shown in Figure~\ref{fig:exp_credit}, Xavier Initialization has slower convergence speed while the success rate of attack is lower than 1. This phenomenon partially explains why most of existing frameworks of VLR prefer to use zero initialization and verifies our analysis in Section~\ref{sec_security_A}.
\begin{figure}[h]
\includegraphics[width=0.5\textwidth]{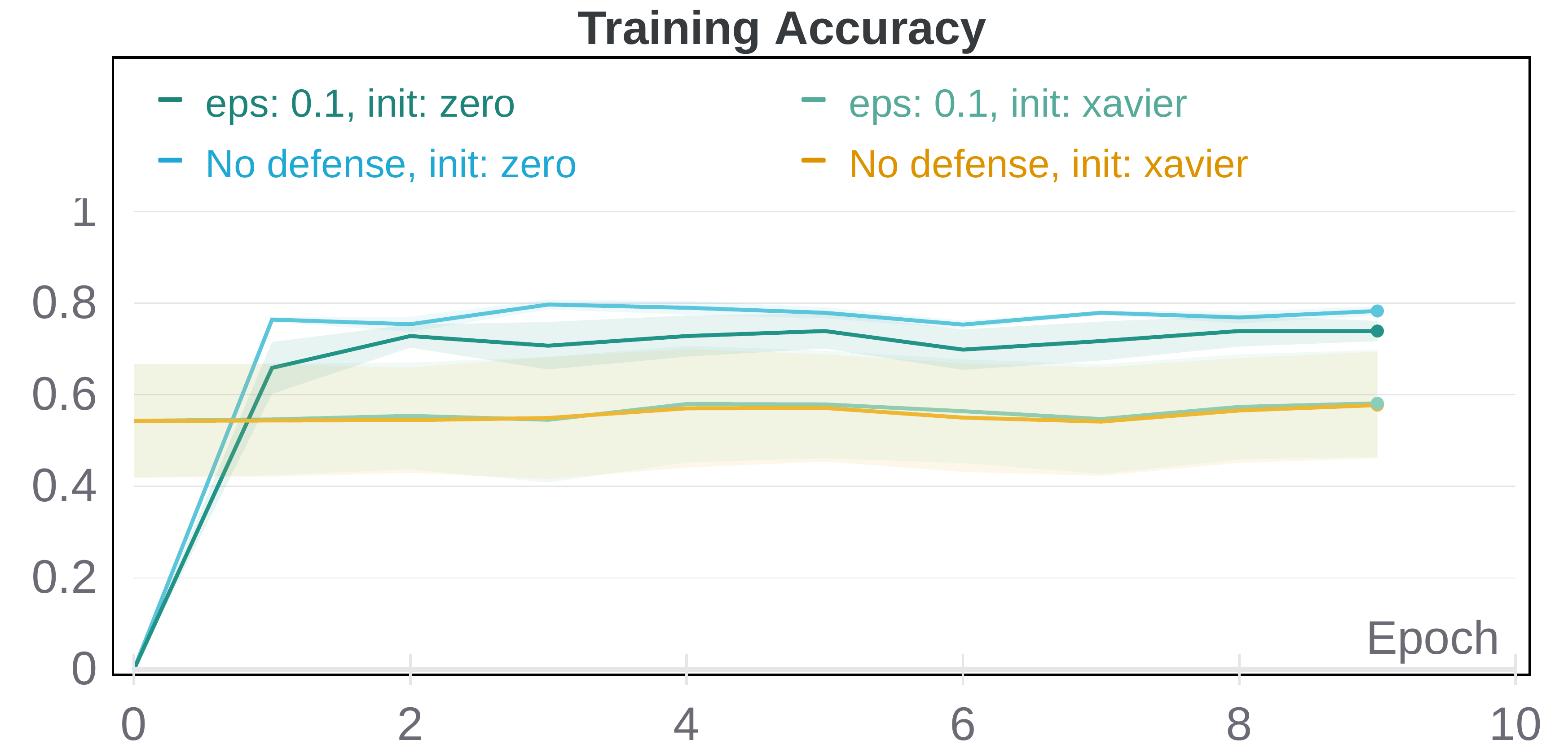}
\includegraphics[width=0.5\textwidth]{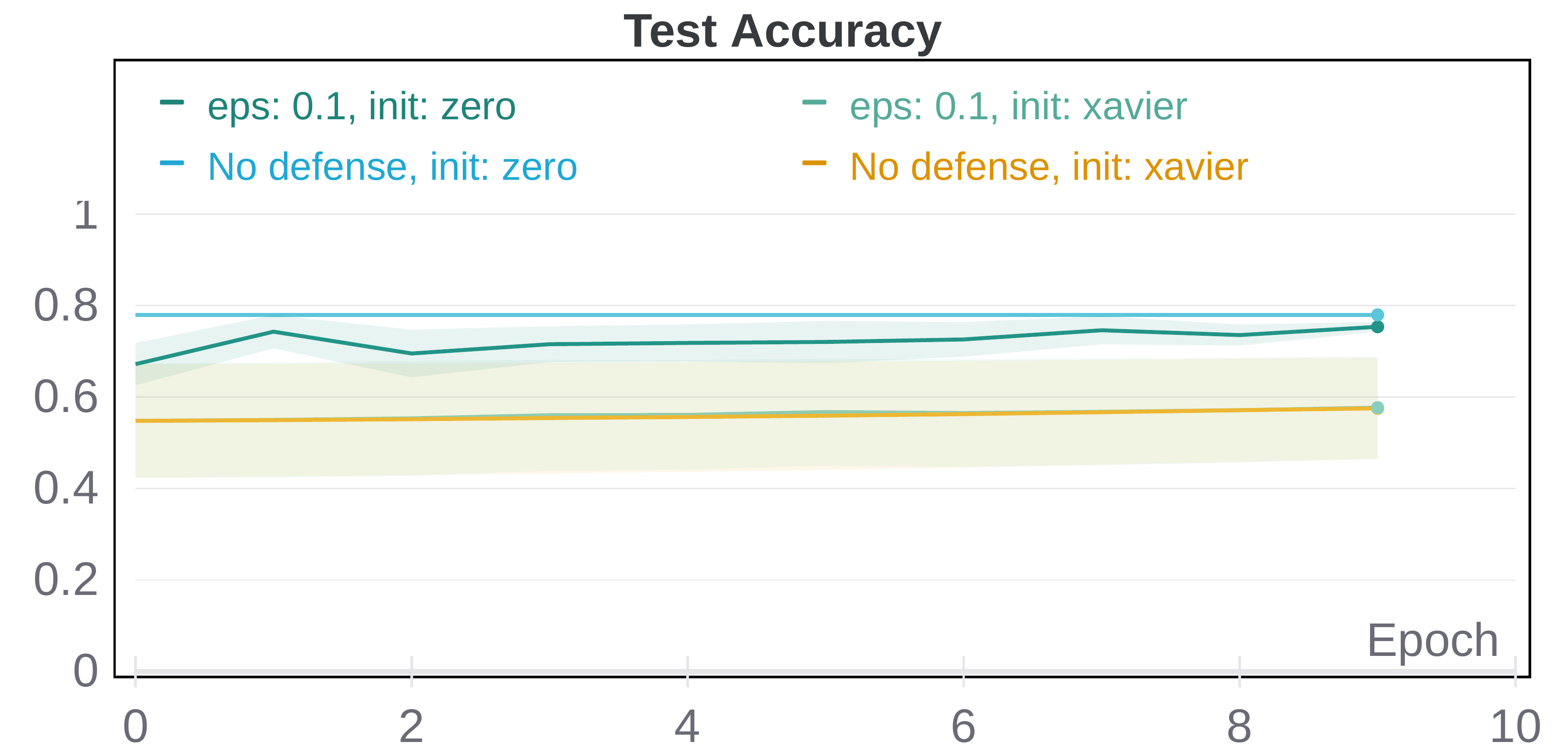}
\includegraphics[width=0.5\textwidth]{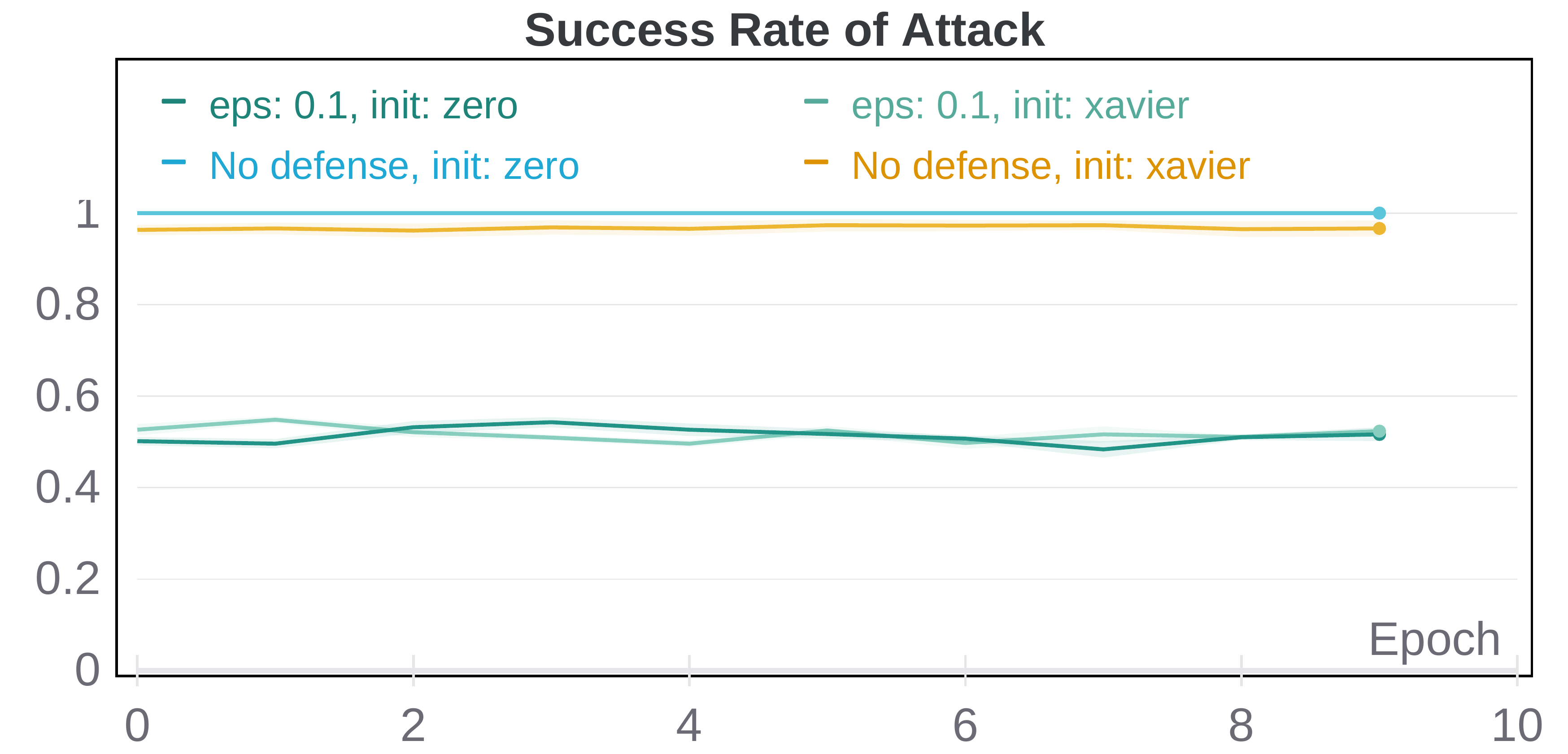}
\caption{Experiment results on credit dataset with a highlight on the comparison between different initialization schemes. Models with Xavier Initialization converge slower than those with zero initialization as shown in Top and Middle figures. The success rate of label attack is less than 1 for Xavier Initialization on credit dataset because the feature dimension is small (23 dimensions) so that the scale of Xavier Initialization is large.}
\label{fig:exp_credit}
\end{figure}



\section{Conclusion}  \label{sec_conclusion}
In this paper, we provide a formal privacy analysis for VLR, under an oracle of obtaining local gradients at each iteration within the honest-but-curious threat model. When the linear system is uniquely solvable, i.e. $s \le d_A$ or $s\le d_B$, we 1) identify feature attack for Party $A$ when $d_B = 1$ and provide hardness result when $d_B \ge 2$; 2) construct label recovery attack for Party $B$ when the initialization is small. By taking into account the protocol of obtaining gradient, which is mainly based on Homomorphic Encryption, we propose an active attack via generating and compressing auxiliary ciphertext and thereby relax the constraints of batch size. Since increasing the batch size might not be adequate to addressing the privacy leakage, we adopt a simple-yet-effective countermeasure based on DP, which injects noise to the sensitive information before communication. We provide both utility and privacy guarantees for the updated algorithm. Experiment results on benchmark datasets demonstrate the effectiveness of our attack and defense, indicating the vulnerability of existing frameworks  as well as the power of coupling DP with HE techniques. 

{
To be fair, other forms of attacks certainly exist (either simpler or being much more sophisticated) beyond those proposed in this paper. For example, it is possible that when the feature dimension is small, inference attacks using model's predictions could reveal partial information about the data. It is also possible that some advanced techniques (such as gradient inversion \cite{yin2021see}) could allow the adversary to recover much more information about the data. However, our goal in this paper is to analyze the existence of privacy leakage in a \textit{provable, mathematically-rigorous} manner. A desirable answer should either be `Yes’, as suggested in Theorem \ref{theorem precise}, \ref{theorem extreme} and \ref{theorem small}, or `No’, as pointed out by Theorem \ref{theorem hardness}. Saying in another way, we are trying to uncover the most intrinsic and fundamental vulnerability of the VLR systems that resort to the gradient oracle. We believe this is exactly what separates our paper from previous works.}

In all, our work suggests that all VFL frameworks that solely rely on HE might contain severe privacy risks, and DP, being an important building block of horizontal federated learning, can also exert its power in the vertical setting. We hypothesize that a \textit{hybrid} protocol that makes use of both HE (or MPC) and DP would lead to the best practical performance.
From a broader perspective, we hope this work would inspire more works to focus on the (rigorous) privacy analysis of general VFL beyond logistic regression, which could lead to substantial improvement over existing algorithms and protocols.


\bibliographystyle{ACM-Reference-Format}
\bibliography{ccs-sample}

\appendix

\section{Omitted Proofs from Section 3} \label{app_sub_A}

\begin{proof}[Proof of Theorem \ref{Theorem Linear}]
We use $m(E)$ to denote the Lebesgue measure of set $E \in \mathbb{R}^{d' \times n}$. Since the data are randomly sampled from a continuous probability distribution, it suffices to show the following set
\begin{align}
    L:= \bigg\{\exists \mathcal{B} \subseteq D \ \  s.t. \ \ |\mathcal{B}|= s \wedge \{\bm z_j\}_{j\in \mathcal{B}} \ \ \text{are linear dependent} \bigg\}
\end{align}
is of Lebesgue measure zero. Note 
\begin{align}
    C := \bigg\{\mathcal{B}\subseteq D: |\mathcal{B}| = s\bigg\}
\end{align}
is finite, so by taking a union bound, it suffices to show: for a \textit{fixed} set $\mathcal{B}$ satisfying $|\mathcal{B}|=s$,
\begin{align}
    E_\mathcal{B} = \bigg\{\{\bm z_j\}_{j\in \mathcal{B}} \ \ \text{are linear dependent}\bigg\}
\end{align}
is of Lebesgue measure zero. Denote the matrix concatenated by $\{\bm z_j\}_{j\in \mathcal{B}}$ in column as $\bm M \in \mathbb{R}^{d' \times s}$ and the $i$-th row of $\bm M$ to be $\bm m_{i}^{\top}$. Note that $\{\bm z_j\}_{j\in \mathcal{B}}$ are linear independent is equivalent to
\begin{align*}
     \forall 1\le i_1< \cdots < i_s \le d', \ \ s.t. \ \ \det\left(\bigg\{\bm m_{i_1}, \cdots, \bm m_{i_s}\bigg\}\right) = 0.
\end{align*}
Further, singular matrix has zero Lebesgue measure, so
\begin{align}
    m(E_\mathcal{B}) \le  m\bigg(\det\left(\bigg\{\bm m_{i_1}, \cdots, \bm m_{i_s}\bigg\}\right) = 0\bigg) = 0,
\end{align}
which gives the result as desired.
\end{proof}

\begin{proof}[Proof of Lemma \ref{lem inter}]
We say a set $G \subset \{1, 2, \cdots, n\}$ is sortable if for $i, j \in G$, we hold the value of $x_i^B/x_j^B$. The idea is to count the number of sortable set at the end of each epoch $t$, which we denote as $n_t$. Note for two sets $G_1$ and $G_2$, if $p \in G_1, q\in G_2$, and both $p,q$ appears in the same batch, then we can merge $G_1$ and $G_2$ into a larger sortable set. Eventually our goal is to obtain a single set $\{1, 2,\cdots, n\}$ via merging.

After the first epoch, there are $m$ different sets $\mathcal{B}_1, \cdots, \mathcal{B}_m$, so $n_1 = m$. Now, suppose at the end of epoch $t$, there is a set $G$ constituted of $k$ different $\mathcal{B}_i$ and $1\le k \le m-1$, then the probability of the following event
\begin{align}
    E_G = \{\forall j \in [tm+1, tm+m], \ S_j \subset G \ \vee \ S_j \cap G = \emptyset \}
\end{align}
(i.e. $G$ is regrouped into $k$ new batches in $\{\mathcal{B}_j\}_{j \in [tm+1, tm+m]}$) can be bounded as
\begin{align}
    \frac{\prod_{i=1}^k C_{is}^s \prod_{j=1}^{m-k} C_{js}^s}{\prod_{l=1}^m C_{ls}^s} \le \frac{1}{C_{ms}^s} \le \frac{1}{m^s}.
\end{align}
Here, the randomness comes from the random draw of the mini-batches. Also, when $E_G$ does not happen, by definition the set $G$ will intertwine with an element $q \notin G$. Suppose $q \in \mathcal{B}_l$ for some $l \in [m]$, then $G$ will merge with a larger group that contains $\mathcal{B}_l$ as its subset. Therefore, with failure probability at most $\frac{n_t}{m^s}$, each of the set will merge with another in the following epoch, leading to $n_{t+1} \le n_t /2$. Hence, we can obtain a single set $\{1, 2,\cdots, n\}$ in at most $T = \lfloor\log_2(m)\rfloor + 1$ epochs, and by union bound the failure probability is at most
\begin{align}
    \sum_{t=1}^T \frac{n_t}{m^s} \le \frac{2n_1}{m^s} = \frac{2}{m^{s-1}}.
\end{align}
\end{proof}

\section{Omitted proofs from Section 4} \label{app_sub_B}
\begin{proof}[Proof of Theorem \ref{theorem extreme}]
Since $d_A = 0$, we have
\begin{align}
    \bm\theta_{t-1}^{\top}\bm x_i = \left(\bm\theta^B_{ t-1}\right)^{\top}\bm x^B_{i}.
\end{align}
Therefore, by extracting the information known to himself, Party $B$ can deduce the information of
\begin{align}
    \bm I_{t} = \sum_{i \in \mathcal{B}_t} y_i \bm x^B_{i}.  
\end{align}
We will now show the \textit{constrained} linear system 
\begin{equation}
\label{eq constrain}
\left\{
\begin{aligned}
\sum_{i \in \mathcal{B}_t} z_i \bm x^B_{i} &= \bm I_{t} \\
s.t. \ \ \  z_i^2 &=1, \ \ \ \forall i\in \mathcal{B}_t
\end{aligned}
\right.
\end{equation}
with respect to $\{z_i\}_{i \in \mathcal{B}_t}$ has unique solution with probability one. The existence is trivial since $z_i = y_i, \ \forall i \in \mathcal{B}_t$ is a solution. To see the uniqueness, suppose there are two different solutions $\{z_i\}_{i \in \mathcal{B}_t}$ and $\{\tilde{z}_i\}_{i \in \mathcal{B}_t}$ satisfying Eq. (\ref{eq constrain}). Taking the difference, we have
\begin{align}
    \sum_{i \in \mathcal{B}_t}w_i\bm x^B_{i} = 0,   \label{eq homo}
\end{align}
where 
\begin{align}
   \bm w \neq\bm  0 \ \wedge \ w_i \in \{-1, 0, 1\}, \ \ \ \ \forall i \in \mathcal{B}_t. \label{viable}
\end{align}
Denote the set that contains all possible $\bm w$ in (\ref{viable}) as $T^s$. Similar to the proof in Theorem \ref{Theorem Linear}, we only need to show for a fixed set $\mathcal{B} \in D$ with $|\mathcal{B}| = s$,
\begin{align}
    E_\mathcal{B} = \bigg\{\exists\bm w \in T^s \ \  s.t. \  \ \sum_{i \in \mathcal{B}}w_i\bm x^B_{i} = 0 \bigg\}   \label{eq set}
\end{align}
is of Lebesgue measure zero. Since $T^s$ is a finite set with cardinality $3^s-1$, it suffices to show for a fixed $\bm w \in T^s$ that
\begin{align}
    L_\mathcal{B} = \bigg\{\sum_{i \in \mathcal{B}}w_i\bm x^B_{i} = 0\bigg\}
\end{align}
satisfies $m(L_\mathcal{B}) =0$. This is obviously true since the data $\{\bm x^B_{i}\}_{i\in [n]}$ is drawn from a continuous probability distribution, and that $L_\mathcal{B}$ is a proper subspace of $\mathbb{R}^{d_B \times s}$. 

\end{proof}

\section{Omitted proof from Section 6} \label{app_sub_C}

\begin{proof}[Proof of Lemma \ref{lemma gradient error}]
By considering a union bound, it suffices to show
\begin{align}
    \|\bm e_t\|_2 \le \sqrt{\frac{k}{s}}\max\{\sigma_A, \sigma_B\}
\end{align}
holds with probability at least $1 - 2\exp(-ck)$ for a fixed $t$. Note
\begin{align}
    \|\bm e_t\|_2 \le \frac{1}{4s}\left(\|\bm X_{\mathcal{B}_t}^A\bm Z^B\|_2 + \|\bm X_{\mathcal{B}_t}^B\bm Z^A\|_2\right),
\end{align}
so by symmetry it suffices to show
\begin{align}
    \|\bm X_{\mathcal{B}_t}^A\bm Z^B\|_2 \le 2\sqrt{ks\sigma_B^2}
\end{align}
with probability at least $1-\exp(-ck)$. Define $\bm P_A := \left(\bm X_{\mathcal{B}_t}^A\right)^{\top}\bm X_{\mathcal{B}_t}^A$, then we have $\|\bm P_A\|_F^2 \le s^2$ and $\|\bm P_A\|_2 \le \|\bm P_A\|_F \le s$. Applying Hanson-Wright inequality (Theorem 6.2.1 in \cite{vershynin2018high}), we have for every $t >0$,
\begin{align}
    & \ \ \ \ \mathbb{P}\bigg\{\left(\bm Z^B\right)^{\top}\bm P_A \bm Z^B - \mathbb{E}\left[\left(\bm Z^B\right)^{\top}\bm P_A \bm Z^B\right] \ge t \bigg\}  \nonumber \\
    &\le \exp\left[-c\min\left(\frac{t^2}{9s^2\sigma_B^4},\frac{t}{3s\sigma_B^2}\right)\right].
\end{align}
Pick $t = 3ks\sigma_B$, then with probability at least $1-\exp(-ck)$, we have
\begin{align}
    \left(\bm Z^B\right)^{\top}\bm P_A \bm Z^B &\le 3ks\sigma_B^2 + \mathbb{E}\left[\left(\bm Z^B\right)^{\top}\bm P_A \bm Z^B\right] \\
    &= 3ks\sigma_B^2 + \sigma_B^2\Tr(\bm P_A) \\
    &\le 4ks\sigma_B^2,
\end{align}
This finishes the proof as desired.
\end{proof}

\section{Heuristic $(\epsilon, \delta)$-DP Guarantee of Algorithm 3}  \label{app.exp}
We will show that Algorithm \ref{alg3} satisfies $(\epsilon, \delta)$-DP (though we do not see this as a formal privacy guarantee) when the variance of the Gaussian random variables are chosen according to Eq. (\ref{DP_A}) and (\ref{DP_B}).
We begin with a useful definition.
\begin{definition}[$\ell_2$-sensitivity]
The $\ell_2$-sensitivity of a function $f: \mathcal{D} \to \mathbb{R}^d$ is defined as:
\begin{align}
    \Delta_2(f) := \max\limits_{D_1, D_2} \|f(D_1) - f(D_2)\|_2
\end{align}
for all $D_1, D_2 \in \mathcal{D}$ that differ by at most one instance.
\end{definition}
Denote $f$ as the original algorithm given by Eq. (\ref{SGD}). In the following analysis, we will use $\Delta(\bm v)$ to represent $\Delta_2(f_{\bm v})$, where $f_{\bm v}$ repeats the procedure of $f$ until it outputs the (intermediate) result $\bm v$. 

For arbitrary dataset $D$ and $D'$ which differ by at most one instance, consider the corresponding sequences $\{\bm \theta_{t-1}\}_{t\in [T]}$ and $\{\bm \theta_{t-1}'\}_{t\in [T]}$ generated via feeding $D$ and $D'$ to $f$, and the mini-batch at step $t$ are $\mathcal{B}_t$ and $\mathcal{B}_t'$ respectively. Our first step is to give a recursive relation on $\Delta(\bm \theta_{t-1})$.

\begin{lemma} \label{sensitive recur}
Let $\eta \le 1$. Then for $t \in [T]$, we have
\begin{align}
    \Delta(\bm \theta_{t}) \le \begin{cases}
      \Delta(\bm \theta_{t-1}) &  \emph{if} \ \ \mathcal{B}_t = \mathcal{B}_t'   \\
      \Delta(\bm \theta_{t-1}) + \frac{2\sqrt{2}\eta G}{s} &  \emph{o.w.}
    \end{cases}
\end{align}
\end{lemma}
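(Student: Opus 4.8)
\textbf{Proof proposal for Lemma~\ref{sensitive recur}.}
The plan is to couple the two runs of $f$ on a fixed pair of neighbouring datasets $D, D'$ — say they agree on every record except index $i^*$ — driven by the same sequence of mini-batch index sets in the update rule~(\ref{SGD}), write $\bm d_{t} := \bm \theta_{t} - \bm \theta_{t}'$, control $\|\bm d_t\|_2$ recursively, and finally bound $\|\bm d_{t-1}\|_2 \le \Delta(\bm\theta_{t-1})$ wherever that quantity appears on the right. The workhorse observation is that under the first-order approximation each one-point gradient $\nabla\mathcal{L}_i(\bm\theta) = \tfrac14(\bm\theta^\top\bm x_i - 2y_i)\bm x_i$ is affine in $\bm\theta$ with ``slope'' $\tfrac14\bm x_i\bm x_i^\top$, so the step $\bm\theta_t = \bm\theta_{t-1} - \tfrac{\eta}{s}\sum_{i\in\mathcal{B}_t}\nabla\mathcal{L}_i(\bm\theta_{t-1})$ has linear part $\bm\theta \mapsto (\bm I - \bm M_t)\bm\theta$ with $\bm M_t := \tfrac{\eta}{4s}\sum_{i\in\mathcal{B}_t}\bm x_i\bm x_i^\top$. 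Since $\|\bm x_i\|_2 \le 1$ and $\eta\le 1$ we have $\bm 0 \preceq \bm M_t \preceq \tfrac{\eta}{4}\bm I \preceq \bm I$, hence $\|\bm I - \bm M_t\|_2 \le 1$: the linear part of one MGD step is a (weak) contraction. This is the one place the hypothesis $\eta\le 1$ is used, and it is what makes the first branch cost nothing.

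Case $\mathcal{B}_t = \mathcal{B}_t'$, i.e. the differing index $i^*$ is not sampled at step $t$. Then the two runs apply the gradient of the same $s$ points, so subtracting the updates and using affineness gives exactly $\bm d_t = (\bm I - \bm M_t)\,\bm d_{t-1}$, whence $\|\bm d_t\|_2 \le \|\bm d_{t-1}\|_2 \le \Delta(\bm\theta_{t-1})$.

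Case $i^* \in \mathcal{B}_t$. Peel off the $i^*$ summand: for every $i\ne i^*$ the two runs use the same datum, so those $s-1$ terms reassemble into $(\bm I - \widetilde{\bm M}_t)\bm d_{t-1}$ with $\widetilde{\bm M}_t := \tfrac{\eta}{4s}\sum_{i\in\mathcal{B}_t\setminus\{i^*\}}\bm x_i\bm x_i^\top$ (still $\|\bm I - \widetilde{\bm M}_t\|_2 \le 1$ by the same PSD bound), and what is left over is $-\tfrac{\eta}{s}\big(\nabla\mathcal{L}^{D}_{i^*}(\bm\theta_{t-1}) - \nabla\mathcal{L}^{D'}_{i^*}(\bm\theta_{t-1}')\big)$ — the difference of the two one-point gradients at the differing index, each evaluated along its own trajectory, the superscript recording which dataset's $i^*$-record is used. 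By the standing assumption $\max_{i,t}\|\nabla\mathcal{L}_i(\bm\theta_{t-1})\|_2 \le G$ (applied to both runs) each of these has norm at most $G$, so by the triangle inequality the remainder has norm at most $\tfrac{2\eta G}{s}\le\tfrac{2\sqrt2\,\eta G}{s}$; combining with $\|(\bm I - \widetilde{\bm M}_t)\bm d_{t-1}\|_2 \le \|\bm d_{t-1}\|_2 \le \Delta(\bm\theta_{t-1})$ gives $\|\bm d_t\|_2 \le \Delta(\bm\theta_{t-1}) + \tfrac{2\sqrt2\,\eta G}{s}$.

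Taking the supremum over neighbouring pairs (with the case determined by whether $i^*$ falls in $\mathcal{B}_t$) yields the stated two-branch recursion. The only point requiring care is the contraction claim $\|\bm I - \bm M_t\|_2\le 1$ in the presence of the full mini-batch sum — everything else is just bookkeeping in separating the $s-1$ shared summands from the single non-shared one — and I would write it out as the spectral bound $\bm 0\preceq\bm M_t\preceq\tfrac{\eta}{4}\bm I$ together with $\eta\le 1$. I would also flag for the downstream argument that this recursion holds for \emph{every} realization of the mini-batch schedule, which is exactly what lets one later combine it with a count of how often index $i^*$ lands in a batch over the $T$ iterations.
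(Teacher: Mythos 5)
Your proof is correct, but it takes a genuinely different route from the paper's. The paper treats $\mathcal{L}_{\mathcal{B}_t}$ abstractly as a convex, $\tfrac14$-smooth function and expands $\|\bm\theta_t-\bm\theta_t'\|_2^2$, invoking co-coercivity of the gradient to show the shared part of the update is non-expansive; in the unequal-batch case it splits off the scaled loss $\bar{\mathcal{L}}_t=\tfrac{s-1}{s}\mathcal{L}_{\mathcal{I}_t}$ on the intersection and absorbs the cross terms into the square $\bigl(\|\bm\theta_{t-1}-\bm\theta_{t-1}'\|_2+\tfrac{2\sqrt2\eta G}{s}\bigr)^2$. You instead exploit the fact that under the first-order Taylor surrogate the per-sample gradient is exactly affine in $\bm\theta$, so one MGD step is an affine map with linear part $\bm I-\bm M_t$, $\bm 0\preceq\bm M_t\preceq\tfrac{\eta}{4}\bm I$, and non-expansiveness becomes a one-line spectral bound; peeling off the single differing summand then gives the additive recursion directly by the triangle inequality. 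Your argument is more elementary and in fact yields the sharper increment $\tfrac{2\eta G}{s}$ before you relax it to $\tfrac{2\sqrt2\eta G}{s}$ to match the statement; the trade-off is that it is tied to the quadratic loss (Eq.~(\ref{Eq approx loss})), whereas the paper's co-coercivity argument would carry over unchanged to any convex smooth surrogate (e.g.\ the other approximation schemes in the appendices). One small point, which you already flag and which the paper's proof shares: the bound $G$ on the single-sample gradients must be assumed along \emph{both} trajectories $\{\bm\theta_{t-1}\}$ and $\{\bm\theta_{t-1}'\}$, which is a mild strengthening of the standing assumption as literally written.
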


\begin{proof}[Proof of Lemma \ref{sensitive recur}]
When $\mathcal{B}_t = \mathcal{B}_t'$, we have
\begin{align}
    & \ \ \ \ \|\bm \theta_{t} - \bm \theta_{t}'\|_2^2 - \|\bm \theta_{t-1} - \bm \theta_{t-1}'\|_2^2 \\
    &= \eta^2 \|\nabla\mathcal{L}_{\mathcal{B}_t}(\bm\theta_{t-1}) - \nabla\mathcal{L}_{\mathcal{B}_t}(\bm\theta_{t-1}')\|_2^2  \nonumber \\ & \ \ \ \  - 2\eta \langle \bm \theta_{t-1} - \bm \theta_{t-1}', \nabla\mathcal{L}_{\mathcal{B}_t}(\bm\theta_{t-1}) - \nabla\mathcal{L}_{\mathcal{B}_t}(\bm\theta_{t-1}') \rangle \\
    &\le - \left(\frac{2\eta}{\beta} -\eta^2\right) \|\nabla\mathcal{L}_{\mathcal{B}_t}(\bm\theta_{t-1}) - \nabla\mathcal{L}_{\mathcal{B}_t}(\bm\theta_{t-1}')\|_2^2 \le 0, 
\end{align}
where we apply the co-coercivity of gradient in the inequality, since $\mathcal{L}_{\mathcal{B}_t}$ is convex and $\frac{1}{4}$-smooth.

When $\mathcal{B}_t \neq \mathcal{B}_t'$, they differ by at most one instance. Denote their intersection as $\mathcal{I}_t$ and $p = \mathcal{B}_t \backslash \mathcal{I}_t, \ q = \mathcal{B}_t' \backslash \mathcal{I}_t$. Now denote
\begin{align}
    \bar{\mathcal{L}}_t(\bm \theta) = \frac{s-1}{s}\mathcal{L}_{\mathcal{I}_t}(\bm \theta),
\end{align}
then
\begin{align}
    & \ \ \ \ \|\bm \theta_{t} - \bm \theta_{t}'\|_2^2 \\
    &= \|\bm \theta_{t-1} - \bm \theta_{t-1}'\|_2^2 - 2\eta\bigg< \bm \theta_{t-1} - \bm \theta_{t-1}', \nabla \bar{\mathcal{L}}_t(\bm \theta_{t-1}) - \nonumber \\
    & \ \ \ \   \nabla \bar{\mathcal{L}}_t(\bm \theta_{t-1}') +  \frac{1}{s}\left(\nabla\mathcal{L}_p(\bm \theta_{t-1}) -\nabla\mathcal{L}_q(\bm \theta_{t-1}') \right)    \bigg> + \nonumber \\
    & \ \ \ \ \eta^2 \bigg\| \nabla \bar{\mathcal{L}}_t(\bm \theta_{t-1}') - \nabla \bar{\mathcal{L}}_t(\bm \theta_{t-1}) + \nonumber \\  &  \ \ \ \ \ \  \ \ \frac{1}{s}\left(\nabla\mathcal{L}_p(\bm \theta_{t-1}) -\nabla\mathcal{L}_q(\bm \theta_{t-1}') \right) \bigg\|_2^2 \\
    &\le \|\bm \theta_{t-1} - \bm \theta_{t-1}'\|_2^2  + \frac{4\eta G}{s}\|\bm \theta_{t-1} - \bm \theta_{t-1}'\|_2 + \frac{8\eta^2 G^2}{s^2} \nonumber \\
    &-\left(\frac{2\eta}{\beta}-2\eta^2\right)\|\nabla \bar{\mathcal{L}}_t(\bm \theta_{t-1}) - \nabla \bar{\mathcal{L}}_t(\bm \theta_{t-1}')\|_2^2 \\
    &\le \left(\|\bm \theta_{t-1} - \bm \theta_{t-1}'\|_2 + \frac{2\sqrt{2}\eta G}{s} \right)^2.
\end{align}
Since the choice on $D$ and $D'$ is arbitrary, we obtain the result as desired.
\end{proof}

We now proceed to control $\Delta(\bm{SV}_t^A)$ and $\Delta(\bm{SV}_t^B)$.
\begin{lemma}  \label{lemma sv}
Let $\eta \le 1$. Then for $t \in [T]$, we have
\begin{align}
    \Delta(\bm{SV}_{t}^A) &\le \begin{cases}
      \sqrt{s}\Delta(\bm \theta_{t-1}) &  \emph{if} \ \ \mathcal{B}_t = \mathcal{B}_t'   \\
       \sqrt{s\Delta^2(\bm \theta_{t-1}) + (8G)^2} &  \emph{o.w.}
    \end{cases}, \\  
    \Delta(\bm{SV}_{t}^B) &\le \begin{cases}
      \sqrt{s}\Delta(\bm \theta_{t-1}) &  \emph{if} \ \ \mathcal{B}_t = \mathcal{B}_t'   \\
       \sqrt{s\Delta^2(\bm \theta_{t-1}) + (8G-4)^2} &  \emph{o.w.}
    \end{cases}.
\end{align}

\end{lemma}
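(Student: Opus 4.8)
The plan is to expand $\bm{SV}^A_t$ and $\bm{SV}^B_t$ coordinate by coordinate and compare the two runs on neighboring datasets $D,D'$, reusing exactly the notation set up for Lemma \ref{sensitive recur}: for the mini-batch at step $t$, write $\mathcal{I}_t = \mathcal{B}_t \cap \mathcal{B}_t'$ for the common indices and let $p = \mathcal{B}_t\setminus\mathcal{I}_t$, $q = \mathcal{B}_t'\setminus\mathcal{I}_t$ be the (at most one) differing indices. Recall $\bm{SV}^B_t = (g_{i,t}^B)_{i\in\mathcal{B}_t}$ with $g_{i,t}^B = (\bm\theta_{t-1}^B)^\top\bm x^B_i$, and $\bm{SV}^A_t = (g_{i,t}^A - 2y_i)_{i\in\mathcal{B}_t}$ with $g_{i,t}^A = (\bm\theta_{t-1}^A)^\top\bm x^A_i$; both are $s$-dimensional. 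Throughout I will use the a priori bound $\|\bm\theta_{t-1}\|_2 \le 4G-2$, the data bound $\|\bm x_i\|_2\le 1$ (hence $\|\bm x_i^A\|_2,\|\bm x_i^B\|_2\le 1$), and the fact that a sub-block of a vector has norm at most that of the whole vector, so $\|\bm\theta_{t-1}^A - (\bm\theta_{t-1}^A)'\|_2 \le \|\bm\theta_{t-1} - \bm\theta_{t-1}'\|_2$ and likewise for the $B$-block.

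First I would treat the case $\mathcal{B}_t = \mathcal{B}_t'$, which forces the single differing instance to lie outside the current mini-batch, so that all of $\{(\bm x_i, y_i) : i \in \mathcal{B}_t\}$ agree between $D$ and $D'$. Then $\bm{SV}^B_t(D)$ and $\bm{SV}^B_t(D')$ carry the same coordinate indexing, and the $i$-th coordinate of their difference is $(\bm\theta_{t-1}^B - (\bm\theta_{t-1}^B)')^\top\bm x^B_i$, bounded in absolute value by $\|\bm\theta_{t-1} - \bm\theta_{t-1}'\|_2$ via Cauchy--Schwarz. Summing over the $s$ coordinates gives $\|\bm{SV}^B_t(D) - \bm{SV}^B_t(D')\|_2^2 \le s\,\|\bm\theta_{t-1} - \bm\theta_{t-1}'\|_2^2$, and taking the maximum over neighboring $D,D'$ yields $\Delta(\bm{SV}^B_t) \le \sqrt{s}\,\Delta(\bm\theta_{t-1})$. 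For $\bm{SV}^A_t$ the $-2y_i$ offsets cancel in the difference, so exactly the same computation gives $\Delta(\bm{SV}^A_t) \le \sqrt{s}\,\Delta(\bm\theta_{t-1})$.

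Next, for $\mathcal{B}_t \neq \mathcal{B}_t'$, I would line up the $s-1$ coordinates of $\mathcal{I}_t$ (handled as above, contributing at most $(s-1)\|\bm\theta_{t-1} - \bm\theta_{t-1}'\|_2^2$) and put the remaining slot of $\bm{SV}^B_t(D)$, indexed by $p$, against the remaining slot of $\bm{SV}^B_t(D')$, indexed by $q$. That mismatched pair contributes $|g_{p,t}^B(D) - g_{q,t}^B(D')|$; since $|g_{p,t}^B(D)| = |(\bm\theta_{t-1}^B)^\top\bm x^B_p| \le 4G-2$ and likewise $|g_{q,t}^B(D')| \le 4G-2$, this is at most $8G-4$. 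Hence $\|\bm{SV}^B_t(D) - \bm{SV}^B_t(D')\|_2^2 \le (s-1)\Delta^2(\bm\theta_{t-1}) + (8G-4)^2 \le s\,\Delta^2(\bm\theta_{t-1}) + (8G-4)^2$, giving the stated bound after maximizing over $D,D'$. For $\bm{SV}^A_t$ the mismatched coordinate is $(g_{p,t}^A(D) - 2y_p) - (g_{q,t}^A(D') - 2y_q')$; each term obeys $|g_{\cdot,t}^A - 2y_\cdot| \le (4G-2) + 2 = 4G$, so the pair contributes at most $8G$, and $\|\bm{SV}^A_t(D) - \bm{SV}^A_t(D')\|_2^2 \le s\,\Delta^2(\bm\theta_{t-1}) + (8G)^2$, as claimed.

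The argument is essentially routine; the only step needing a little care is the coordinate bookkeeping in the case $\mathcal{B}_t \neq \mathcal{B}_t'$ — one must observe that each of $\bm{SV}^A_t$, $\bm{SV}^B_t$ has precisely one ``unmatched'' coordinate, and control that single mismatched pair through the a priori parameter bound $\|\bm\theta_{t-1}\|_2 \le 4G-2$, noting that the extra additive $2$ (and hence the $8G$ versus $8G-4$ discrepancy between the $A$- and $B$-bounds) comes solely from the $-2y_i$ offset present in $\bm{SV}^A_t$ but not in $\bm{SV}^B_t$.
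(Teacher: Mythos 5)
Your proof is correct and follows essentially the same route as the paper's: coordinate-wise Cauchy--Schwarz on the matched coordinates (where the $-2y_i$ offsets cancel), plus the a priori bound $\|\bm\theta_{t-1}\|_2\le 4G-2$ applied to the single mismatched coordinate when $\mathcal{B}_t\neq\mathcal{B}_t'$, which is exactly where the $8G$ versus $8G-4$ asymmetry arises. Your bookkeeping is in fact slightly more careful than the paper's (you correctly use the primed parameter $(\bm\theta_{t-1}^A)'$ in the second run's unmatched term, and note that $|\mathcal{I}_t|=s-1\le s$), so nothing further is needed.
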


\begin{proof}[Proof of Lemma \ref{lemma sv}]
When $\mathcal{B}_t = \mathcal{B}_t'$, we have
\begin{align}
     \|\bm{SV}_{t}^A - (\bm{SV}_{t}^A)'\|_2 
    &= \sqrt{\sum_{i \in \mathcal{B}_t} \big< \bm\theta_{t-1}^A-(\bm\theta_{t-1}^A)', \bm x_i^A \big>^2  } \\
    &\le \sqrt{\sum_{i \in \mathcal{B}_t} \| \bm\theta_{t-1}-\bm\theta_{t-1}'\|_2^2 \|\bm x_i\|_2^2 } \\
    &\le \sqrt{s}\| \bm\theta_{t-1}-\bm\theta_{t-1}'\|_2,
\end{align}
so $\Delta(\bm{SV}_{t}^A) \le \sqrt{s}\Delta(\bm \theta_{t-1})$. Similarly, $\Delta(\bm{SV}_{t}^B) \le \sqrt{s}\Delta(\bm \theta_{t-1})$. On the other hand, when $\mathcal{B}_t \neq \mathcal{B}_t'$ we have
\begin{align}
    & \ \ \ \|\bm{SV}_{t}^A - (\bm{SV}_{t}^A)'\|_2 \\
    &= \sqrt{
    \splitfrac{
\sum_{i \in \mathcal{I}_t} \big< \bm\theta_{t-1}^A-(\bm\theta_{t-1}^A)', \bm x_i^A \big>^2}{
+ \left(\langle \theta_{t-1}^A, \bm x_p^A  \rangle -2y_p - \langle \theta_{t-1}^A, \bm x_q^A  \rangle + 2y_q \right)^2}\,} \\
    &\le \sqrt{\sum_{i \in \mathcal{I}_t} \| \bm\theta_{t-1}-\bm\theta_{t-1}'\|_2^2 \|\bm x_i\|_2^2 + (8G)^2 } \\ 
    &\le \sqrt{s\| \bm\theta_{t-1}-\bm\theta_{t-1}'\|_2^2+ (8G)^2}.
\end{align}
Therefore, $\Delta(\bm{SV}_{t}^A) \le \sqrt{s\Delta^2(\bm \theta_{t-1}) + (8G)^2}$. Similarly, $\Delta(\bm{SV}_{t}^B) \le \sqrt{s\Delta^2(\bm \theta_{t-1}) + (8G-4)^2}$.
\end{proof}

Finally, we will resort to the following lemma, which relates $\ell_2$-sensitivity to a DP guarantee and is standard in literature.

\begin{lemma}[Theorem A.1 of \cite{dwork2014algorithmic}] \label{lemma GM}
Let $f$ be an arbitrary function generating $d$-dimensional
outputs and $\epsilon \in (0, 1)$. Then for $c^2 > 2\ln(1.25/\delta)$, the Gaussian Mechanism with parameter $\sigma \ge  c\Delta_2 f/\epsilon$ is $(\epsilon,\delta )$-differentially private.
\end{lemma}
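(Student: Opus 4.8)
The statement is a classical fact about the Gaussian mechanism — it is essentially Theorem A.1 of \cite{dwork2014algorithmic} — so the plan is to recall the standard privacy-loss argument rather than derive anything new. Fix two databases $D_1,D_2\in\mathcal D$ differing in at most one instance and write $\bm v:=f(D_1)-f(D_2)$, so $\|\bm v\|_2\le\Delta_2 f=:\Delta$. Let $p_1,p_2$ denote the densities of $M_G(D_1,f,\sigma)$ and $M_G(D_2,f,\sigma)$, and define the privacy loss $\mathcal L(\bm z):=\ln\frac{p_1(\bm z)}{p_2(\bm z)}$. Writing an output as $\bm z=f(D_1)+\bm b$ with $\bm b\sim\mathcal N(\bm 0,\sigma^2\bm I_d)$ and expanding the isotropic Gaussian density gives
\begin{align*}
\mathcal L(\bm z)=\frac{1}{2\sigma^2}\Big(\|\bm z-f(D_2)\|_2^2-\|\bm z-f(D_1)\|_2^2\Big)=\frac{1}{2\sigma^2}\big(2\langle\bm b,\bm v\rangle+\|\bm v\|_2^2\big).
\end{align*}
Since $\langle\bm b,\bm v\rangle\sim\mathcal N(0,\sigma^2\|\bm v\|_2^2)$, the privacy loss is, as a function of the injected noise, a one-dimensional Gaussian with mean $\|\bm v\|_2^2/(2\sigma^2)$ and standard deviation $\|\bm v\|_2/\sigma$; in particular no reduction to scalar-valued $f$ is needed.

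Next I would invoke the standard sufficient condition for $(\epsilon,\delta)$-DP: it is enough to prove $\Pr_{\bm b}[\mathcal L(\bm z)>\epsilon]\le\delta$ together with the symmetric bound $\Pr_{\bm b}[\mathcal L(\bm z)<-\epsilon]\le\delta$, because on the complementary event the likelihood ratio $p_1/p_2$ is at most $e^{\epsilon}$; for an arbitrary measurable $\mathcal S$ one then splits $\Pr(M_G(D_1,f,\sigma)\in\mathcal S)$ over $\{\mathcal L\le\epsilon\}$ and its complement, bounding the first piece by $e^{\epsilon}\Pr(M_G(D_2,f,\sigma)\in\mathcal S)$ and the second by $\delta$. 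The event $\{\mathcal L>\epsilon\}$ rewrites as $\langle\bm b,\bm v\rangle>\sigma^2\epsilon-\tfrac12\|\bm v\|_2^2$, and normalizing by $\sigma\|\bm v\|_2$ turns this into $\Pr[\mathcal Z>t_{\bm v}]$ with $\mathcal Z\sim\mathcal N(0,1)$ and $t_{\bm v}=\tfrac{\sigma\epsilon}{\|\bm v\|_2}-\tfrac{\|\bm v\|_2}{2\sigma}$. Since $t_{\bm v}$ is decreasing in $\|\bm v\|_2$ once $\sigma$ is at least of order $\Delta/\epsilon$, replacing $\|\bm v\|_2$ by its worst case $\Delta$ only increases the tail probability, so it remains to bound $\Pr[\mathcal Z>t]$ for $t=\tfrac{\sigma\epsilon}{\Delta}-\tfrac{\Delta}{2\sigma}$.

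Finally I would substitute $\sigma\ge c\Delta/\epsilon$, which gives $t\ge c-\tfrac{\epsilon}{2c}$, a positive quantity under the stated hypotheses $\epsilon\in(0,1)$ and $c^2>2\ln(1.25/\delta)$, and then apply the elementary Gaussian tail estimate $\Pr[\mathcal Z>t]\le\frac{1}{t\sqrt{2\pi}}\,e^{-t^2/2}$. Using $t^2\ge c^2-\epsilon\ge c^2-1$ to control the exponential factor and absorbing the polynomial prefactor $\tfrac{1}{t\sqrt{2\pi}}$ into the constant, one checks that $\Pr[\mathcal Z>t]\le\delta$ exactly when $c^2>2\ln(1.25/\delta)$ — the constant $1.25$ being precisely what makes this last inequality close. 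The symmetric tail $\Pr[\mathcal L<-\epsilon]$ is handled identically, and combining the two yields $(\epsilon,\delta)$-DP. The only step requiring genuine care is this concluding tail computation, where one must track the constant $1.25$ through the prefactor-times-exponential bound; everything else is a routine density manipulation.
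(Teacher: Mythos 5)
The paper does not prove this lemma at all: it is imported verbatim as Theorem A.1 of Dwork and Roth's \emph{Algorithmic Foundations of Differential Privacy} and used as a black box, so there is no in-paper proof to compare against. Your reconstruction is the standard argument from that reference — privacy-loss random variable, reduction to the one-dimensional Gaussian $\langle\bm b,\bm v\rangle$, the tail condition $\Pr[\mathcal L>\epsilon]\le\delta$ as a sufficient criterion, worst-casing $\|\bm v\|_2$ to $\Delta_2 f$, and the Gaussian tail bound — and it is sound; the only caveat is that the concluding step where the prefactor $\tfrac{1}{t\sqrt{2\pi}}$ and the constant $1.25$ are balanced is exactly where the original Dwork--Roth derivation quietly needs $\delta$ (hence $c$) to be in a reasonable range, so ``one checks'' is doing slightly more work there than the sentence suggests, but this is a known feature of the cited theorem rather than a flaw you introduced.
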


The final result is stated in the following theorem.
\begin{theorem}   \label{theorem DP}
Let
\begin{align}
    \sigma_A = \sqrt{2\ln\left(\frac{5}{4\delta}\right)}\frac{\sqrt{\frac{8G^2e^2T\eta^2}{s} + 64G^2e}}{\epsilon},
\end{align}
\begin{align}
    \sigma_B = \sqrt{2\ln\left(\frac{5}{4\delta}\right)}\frac{\sqrt{\frac{8G^2e^2T\eta^2}{s} + (8G-4)^2e}}{\epsilon},
\end{align}
then Algorithm \ref{alg3} with step size $\eta \le 1$ is $(\epsilon, \delta)$-differentially private w.r.t. $\{Sec(\bm{SV}_t^A)\}_{t \in [T]}$ and $\{Sec(\bm{SV}_t^B)\}_{t \in [T]}$.
\end{theorem}

\begin{proof}[Proof of Theorem \ref{theorem DP}]
Since there exists at most one $t$ in an epoch such that $\mathcal{B}_t \neq \mathcal{B}_t'$, we have
\begin{align}
    \Delta(\{\bm{SV}_t^A\}_{t\in [T]}) &\le \sqrt{\sum_{t\in [T]} \Delta^2(\bm{SV}_t^A)} \\
    &\le \sqrt{\sum_{t \in [T]} s\Delta^2(\bm \theta_{t-1}) + (8G)^2e} \\
    &\le \sqrt{sm\sum_{i \in [e]}\left(\frac{2\sqrt{2}i\eta G}{s}\right)^2 + (8G)^2e} \\
    &\le \sqrt{\frac{8G^2e^2T\eta^2}{s} + 64G^2e},
\end{align}
where we use Lemma \ref{lemma sv} in the second inequality, Lemma \ref{sensitive recur} and the fact $\Delta(\bm \theta_0) = 0$ in the third inequality. 
Similarly, we have
\begin{align}
    \Delta(\{\bm{SV}_t^B\}_{t\in [T]}) \le\sqrt{\frac{8G^2e^2T\eta^2}{s} + (8G-4)^2e}.
\end{align}
Applying Lemma \ref{lemma GM}, we can conclude that Algorithm \ref{alg3} is $(\epsilon, \delta)$-differentially private w.r.t. $\{Sec(\bm{SV}_t^A)\}_{t \in [T]}$ and $\{Sec(\bm{SV}_t^B)\}_{t \in [T]}$. 
when $\Sigma_A$ and $\Sigma_B$ are chosen according to Eq. (\ref{DP_A}) and (\ref{DP_B}).
\end{proof}

\section{Discussion of Minimax Approximation} \label{app_minimax}
Let $P_d$ denote the set of polynomials of degree at most $d$, and for a continuous
function $f \in C[a, b]$ denote 
\begin{align}
    \|f\|_{\infty} = \max\big\{|f(x)|: x \in [a, b]\big\},
\end{align}
the minimax approximation is then defined as follow.
\begin{definition}
$p \in P_d$ is a $d$-th minimax approximation of $f \in C[a, b]$ if
\begin{align}
    \|p - f\|_{\infty} = \inf\limits_{q \in P_d}\big\{\|q-f\|_{\infty}\big\}.
\end{align}
\end{definition}
Consider the $3$-th minimax approximate of the sigmoid function of $[-5,5]$, which is shown to be 
\begin{align}
    \sigma_3(z) = -0.004z^3 + 0.197 z + 0.5
\end{align}
in \cite{chen2018logistic}. The mini-batch gradient can then be approximated as
\begin{align}
 \nabla\ell_{\mathcal{B}_t}(\bm\theta_{t-1}) \approx \frac{1}{s}\sum_{i\in \mathcal{B}_t} f_{i,t}\bm x_i   
\end{align}
at $t$-th iteration, where the coefficient is defined through
\begin{align}
    f_{i,t} := -0.004(\bm\theta_{t-1}^{\top}\bm x_i)^3 + 0.197\bm\theta_{t-1}^{\top}\bm x_i -0.5y_i.
\end{align}
We assume $s \le \min\{d_A, d_B\}$ in the following analyses, so that both parties can learn the precise value of $f_{i,t}$ with probability one by Theorem \ref{Theorem Linear}.

\subsection{Privacy Analysis for Party $B$}
Decompose
    \begin{align}
    \bm\theta_{t-1}^{\top}\bm x_i = \underbrace{\left(\bm\theta^A_{t-1}\right)^{\top}\bm x^A_{i}}_{g^A_{i,t}} + \underbrace{\left(\bm\theta^B_{t-1}\right)^{\top}\bm x^B_{i}}_{g^B_{i,t}}.
\end{align}
So by extracting the information known to himself (the part involving $g_{i,t}^A$ and $y_i$) from $f_{i,t}$, Party $A$ can obtain the information of
\begin{align}
    r_{i,t} &= -0.004\left(g^B_{i,t}\right)^3 - 0.012g^A_{i,t}\left(g^B_{i,t}\right)^2  \nonumber \\
    & \ \ \ - \left(0.012\left(g^A_{i,t}\right)^2-0.197\right)g^B_{i,t},
\end{align} 
which is a cubic polynomial with respect to the variable $g_{i,t}^B$. Consider the corresponding function
\begin{align}
    m_{i, t}(z) = 0.004z^3 + 0.012g^A_{i,t}z^2 + \left(0.012\left(g^A_{i,t}\right)^2-0.197\right)z,
\end{align}
it is straightforward to see that $-m_{i,t}(z)$ has a local maximum at $z_{0} =\sqrt{\frac{197}{12}}-g^A_{i,t}$ and a local minimum at $z_{1}=-\sqrt{\frac{197}{12}}-g^A_{i,t}$. Therefore,  $A$ can uniquely determine $g^B_{i,t}$ from $r_{i,t}$ iff
\begin{align*}
    r_{i,t} \notin [-m_{i,t}(z_{1}),-m_{i,t}(z_{0})].
\end{align*}
Denote
\begin{align}
    \mathcal{C}_t = \{i\in \mathcal{B}_t: r_{i,t} \notin [-m_{i,t}(z_{1}),-m_{i,t}(z_{0})]\},
\end{align}
then under $d_B = 1$, Party $A$ can sort $\mathcal{C}_t$ according to the $x_B$ feature. If we assume the full dataset $D$ becomes sortable in $Q$ epochs, then using a similar approach as in Theorem \ref{theorem precise}, we obtain the following theorem. 

\begin{theorem}
In the minimax approximation scheme, suppose $s \le d_A$ and $d_B= 1$, and $D$ becomes sortable in $Q$ epochs. Then Party $A$ can learn the precise value of the data in terms of feature $x_B$(up to a difference in sign) in $\max\{Q,2\}$ epochs.
\end{theorem}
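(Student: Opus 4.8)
The plan is to follow the proof of Theorem~\ref{theorem precise} almost verbatim: read off the ratios from the sorting argument of this appendix, then pin down the scale by comparing the same quantity at two iterations. The only changes are that the per-iteration update on $\bm\theta^B$ now carries the constant $1/s$ (not $1/(4s)$), and that recovering $g_{i,t}^B$ from $r_{i,t}$ is conditional on $i\in\mathcal{C}_t$ rather than automatic. First, since a ratio $x_i^B/x_j^B$ between distinct samples can be obtained only when $i,j$ lie in a common $\mathcal{C}_t\subseteq\mathcal{B}_t$, the hypothesis that $D$ becomes sortable forces $s\ge 2$ (for $s=1$ no such ratio is ever available and the statement is vacuous). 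Moreover, as already noted in this appendix (via Theorem~\ref{Theorem Linear}, using $s\le d_A$), Party $A$ holds the exact value of every $f_{i,t}$, hence of every $r_{i,t}$, and after $Q$ epochs the sortability hypothesis gives him all ratios $r_j:=x_j^B/x_1^B$, $j\in[n]$, with $r_1=1$.

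Next I would use $d_B=1$ to reduce the unknown to a single scalar. Since $\theta^B_{t-1}$ and $x_j^B$ are scalars, $g_{j,t}^B=\theta^B_{t-1}x_j^B=\alpha_t\,r_j$ where $\alpha_t:=\theta^B_{t-1}x_1^B$; thus at any iteration $t$ with $\mathcal{C}_t\neq\emptyset$, choosing $i_0\in\mathcal{C}_t$ recovers $\alpha_t=g_{i_0,t}^B/r_{i_0}$ (and then $g_{j,t}^B=\alpha_t r_j$ for all $j\in\mathcal{B}_t$). Now multiply the minimax update $\bm\theta^B_t=\bm\theta^B_{t-1}-\tfrac{\eta_t}{s}\sum_{i\in\mathcal{B}_t}f_{i,t}\bm x_i^B$ by $x_1^B$ and use $x_i^B=r_ix_1^B$ to get $\alpha_{t+1}-\alpha_t=-\tfrac{\eta_t}{s}\big(\sum_{i\in\mathcal{B}_t}f_{i,t}r_i\big)(x_1^B)^2$; telescoping over $t_1\le t<t_2$ gives
\[
\alpha_{t_2}-\alpha_{t_1}=-\Big(\sum_{t=t_1}^{t_2-1}\tfrac{\eta_t}{s}\sum_{i\in\mathcal{B}_t}f_{i,t}r_i\Big)(x_1^B)^2 .
\]
Since the left side and all coefficients on the right are known to Party $A$ once $\alpha_{t_1},\alpha_{t_2}$ have been recovered, this is a quadratic in $x_1^B$, so, exactly as in Theorem~\ref{theorem precise},
\[
x_1^B=\pm\sqrt{\frac{s\,(\alpha_{t_1}-\alpha_{t_2})}{\sum_{t=t_1}^{t_2-1}\eta_t\sum_{i\in\mathcal{B}_t}f_{i,t}r_i}},\qquad x_i^B=r_i\,x_1^B ,
\]
which determines $\{x_i^B\}_{i\in[n]}$ up to one global sign; the denominator is a nonzero polynomial in the continuous data, hence nonzero with probability one.

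It remains to fit everything inside $\max\{Q,2\}$ epochs, which is the main obstacle. I need two iterations $t_1<t_2$ within $\max\{Q,2\}$ epochs with $\mathcal{C}_{t_1},\mathcal{C}_{t_2}\neq\emptyset$ so that $\alpha_{t_1},\alpha_{t_2}$ are recoverable. When $m\ge 2$, sortability already forces at least $m\ge 2$ distinct nonempty $\mathcal{C}_t$ (they must jointly cover $[n]$ while each lies in a size-$s<n$ batch), all occurring within the first $Q\le\max\{Q,2\}$ epochs. When $m=1$ one takes $t_1$ among the sortable iterations of the first $Q\le 2$ epochs and $t_2$ the first iteration of the (at most) second epoch, using that with probability one some $g_{i,t_2}^B$ avoids the non-injective interval $[-m_{i,t_2}(z_1),-m_{i,t_2}(z_0)]$, so that $\mathcal{C}_{t_2}\neq\emptyset$. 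The point worth stressing is precisely this contrast with the first-order scheme: there $f_{i,t}$ is affine in $g_{i,t}^B$, so the latter is always recoverable, whereas here recoverability is gated by $\mathcal{C}_t\neq\emptyset$, and the budget $\max\{Q,2\}$ must absorb the need for two such iterations. All the side conditions used above (enough nonempty $\mathcal{C}_t$, nonzero denominator) are measure-zero statements of the same kind already invoked in the paper under its data assumption.
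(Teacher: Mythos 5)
Your proof follows the paper's intended argument: the paper gives no explicit proof of this theorem, saying only that it follows ``using a similar approach as in Theorem \ref{theorem precise}'', and your adaptation --- ratios from sortability, then a telescoped identity that is quadratic in $x_1^B$, with the constant $1/s$ correctly replacing $1/(4s)$ --- is exactly that adaptation. You are in fact more careful than the paper in isolating the genuinely new difficulty, namely that $g_{i,t}^B$ is recoverable from the cubic only when $i\in\mathcal{C}_t$, and your observation that sortability with $m\ge 2$ already forces two iterations with nonempty $\mathcal{C}_t$ inside the first $Q$ epochs (since a single $\mathcal{C}_t\subseteq\mathcal{B}_t$ of size $s<n$ cannot cover $[n]$) is a clean way to secure the two anchor points $\alpha_{t_1},\alpha_{t_2}$. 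The one step I would not accept as written is the $m=1$ branch: the non-injective value range $[-m_{i,t}(z_{1}),-m_{i,t}(z_{0})]$ has positive length, so ``some $g_{i,t_2}^B$ avoids it'' is a positive-probability event, not an almost-sure one (with small initialization all inner products tend to sit \emph{inside} that range, which is precisely why $\mathcal{C}_t$ can be empty); this full-batch edge case is not salvageable by a measure-zero argument, though the paper's own one-line sketch does not address it either. The remaining almost-sure claims (nonvanishing of the denominator $\sum_t\eta_t\sum_{i\in\mathcal{B}_t}f_{i,t}r_i$) are at the same level of rigor as the proof of Theorem \ref{theorem precise} itself.
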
 

\subsection{Privacy Analysis for Party $A$}
In analogy to Theorem \ref{theorem extreme}, Party $B$ can determine the label within every batch with probability one if $d_A = 0$ (note this does not require $s \le d_B$). Now suppose $d_A \ge 1$ and $s \le d_B$. Note
\begin{align}
    f_{i,t}\mid_{\bm\theta_{t-1}^{\top}\bm x_i=0} = -0.5y_i,
\end{align}
so the intuition in Section \ref{sec_security_B} holds and the analyses naturally breaks into two parts. It is straightforward to see that the first part only involves the initialization of $\bm \theta_0$ and is irrelevant to the approximation method, so we will mainly focus on the second part. First, consider 
\begin{align}
    n(z) = -0.004z^3 + 0.197z,
\end{align}
then it is straightforward to verify that 
\begin{align*}
    |n(z)| < 0.5, \ \ \forall |z| < 3.2.
\end{align*}
This implies that Criterion \ref{criterion} is valid at iteration $t$ if 
\begin{align}
    \max\limits_i \{|\bm \theta_{t-1}^{\top}\bm x_i|\} \le 3.2.
\end{align}
Now, assuming a constant learning rate $\eta_t = \eta$,  we have 
\begin{align}
    \bm\theta_t^{\top}\bm x_i = \bm\theta_{t-1}^{\top}\bm x_i - \frac{\eta}{s} \sum_{j \in \mathcal{B}_t} [&-0.004(\bm\theta_{t-1}^{\top}\bm x_j)^3 + 0.197\bm\theta_{t-1}^{\top}\bm x_j \nonumber \\
    &-0.5y_j]\bm x_j^{\top}\bm x_i.  
\end{align}
Applying Triangle Inequality and Cauchy-Schwarz Inequality, 
\begin{align}
    |\bm\theta_{t}^{\top}\bm x_i| \le |\bm\theta_{t-1}^{\top}\bm x_i| +  \frac{\eta}{s}\sum_{j\in \mathcal{B}_t}[0.004|\bm\theta_{t-1}^{\top}\bm x_j|^3 
    + 0.197|\bm\theta_{t-1}^{\top}\bm x_j| + 0.5]. 
\end{align}
Suppose $\max\limits_{i} \{|\bm\theta_{t-1}^{\top}\bm x_i|\} < 3.2$ and taking maximum over both side, we have
\begin{align}
    \max_{i}\{|\bm\theta_{t}^{\top}\bm x_i|\} \le  \left(1+\frac{\eta}{4}\right)\max_{i} \{|\bm\theta_{t-1}^{\top}\bm x_i|\} + \frac{\eta}{2}.
\end{align}
Therefore, $h_t := \max\limits_{i}\{|\bm\theta_{t-1}^{\top}\bm x_i| + 2\}$ satisfies
\begin{align}
    h_{t+1} \le \left(1+\frac{\eta}{4}\right)h_{t}.
\end{align}
Note now the threshold becomes $3.2+2 = 5.2$, and we immediately have the following theorem.

\begin{theorem} \label{theorem mini_zero}
In the minimax approximation scheme, If $\bm \theta_0$ is initialized (with high probability) such that
\begin{align}
    \max_i \{|\bm \theta_0^{\top}\bm x_i|\} = \epsilon <2,
\end{align} 
then Criterion \ref{criterion} is valid for 
\begin{align}
    \overline{T_{\epsilon}}= \bigg\lceil \log_{\left(1+\frac{\eta}{4}\right)}\left(\frac{5.2}{2+\epsilon}\right) \bigg\rceil 
\end{align}
(with high probability).
\end{theorem}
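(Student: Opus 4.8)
The plan is to follow, almost verbatim, the template of Corollary \ref{corollary T}: (i) identify a pointwise sufficient condition on $\max_i|\bm\theta_{t-1}^{\top}\bm x_i|$ under which one step of Criterion \ref{criterion} recovers the labels correctly; (ii) show the quantity $h_t := \max_i\{|\bm\theta_{t-1}^{\top}\bm x_i| + 2\}$ grows at a controlled geometric rate; and (iii) unroll the recursion from the initialization bound $h_1 = 2+\epsilon$ until the relevant threshold is reached. Throughout I would assume $\bm\theta_0$ has been realized so that $\max_i|\bm\theta_0^{\top}\bm x_i| = \epsilon < 2$ (this is exactly the ``with high probability'' event inherited from whichever small-initialization scheme is used, e.g. the one behind Lemma \ref{lemma init}), which requires no further argument.

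For step (i), write $f_{i,t} = n(\bm\theta_{t-1}^{\top}\bm x_i) - 0.5\,y_i$ with $n(z) := -0.004z^3 + 0.197z$. Then $\sign(f_{i,t}) = \sign(-0.5\,y_i) = -y_i$ precisely when $|n(\bm\theta_{t-1}^{\top}\bm x_i)| < 0.5$. An elementary calculus check settles this: $n'(z) = -0.012z^2 + 0.197$ vanishes only at $z = \pm\sqrt{197/12}\approx\pm 4.05$, so $n$ is increasing on $[-3.2,3.2]$, and $n(3.2)\approx 0.499 < 0.5$; by oddness $|n(z)| < 0.5$ for all $|z|\le 3.2$. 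Hence Criterion \ref{criterion} is valid at iteration $t$ as soon as $\max_i|\bm\theta_{t-1}^{\top}\bm x_i| \le 3.2$.

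For steps (ii)--(iii), I would plug the minimax update into $\bm\theta_t^{\top}\bm x_j = \bm\theta_{t-1}^{\top}\bm x_j - \tfrac{\eta}{s}\sum_{i\in\mathcal{B}_t} f_{i,t}\,\bm x_i^{\top}\bm x_j$, apply the triangle inequality, Cauchy--Schwarz and $\|\bm x_i\|_2\le 1$, and use $|f_{i,t}| \le 0.004|z|^3 + 0.197|z| + 0.5$ with $z = \bm\theta_{t-1}^{\top}\bm x_i$. On the regime $|z|\le 3.2$ one has $0.004|z|^3 + 0.197|z| \le 0.25|z|$ (because $0.004|z|^2 \le 0.053$ there), which yields $\max_i|\bm\theta_t^{\top}\bm x_i| \le (1+\tfrac{\eta}{4})\max_i|\bm\theta_{t-1}^{\top}\bm x_i| + \tfrac{\eta}{2}$. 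Adding $2$ and using $(1+\tfrac{\eta}{4})\cdot 2 = 2 + \tfrac{\eta}{2}$ collapses this into $h_{t+1} \le (1+\tfrac{\eta}{4})\,h_t$ — the recursion of Lemma \ref{lemma}, but now with the ``failure threshold'' $3.2 + 2 = 5.2$ replacing $4$. Iterating gives $h_t \le (1+\tfrac{\eta}{4})^{t-1}(2+\epsilon)$, which stays $\le 5.2$ — and hence keeps Criterion \ref{criterion} valid — for every $t$ with $t-1 \le \log_{1+\eta/4}\!\big(\tfrac{5.2}{2+\epsilon}\big)$, i.e.\ for all $t \le \overline{T_{\epsilon}} = \big\lceil \log_{1+\eta/4}(\tfrac{5.2}{2+\epsilon})\big\rceil$; this is well-defined since $\epsilon < 2 < 5.2$.

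The only place requiring genuine care — and the main obstacle, such as it is — is fixing the numerical threshold $3.2$: it must simultaneously be small enough that $|n(z)| < 0.5$ on $[-3.2,3.2]$, so the sign test is trustworthy, and large enough relative to the linear envelope $0.25|z|$ that the contraction constant stays exactly $1+\tfrac{\eta}{4}$, matching the Taylor-scheme analysis. Both are one-line cubic inequalities, but the constants are tight (note $n(3.2)\approx 0.499$), so the bookkeeping must be done honestly; everything else is a transcription of the argument behind Corollary \ref{corollary T}.
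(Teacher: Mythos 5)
Your proposal is correct and follows essentially the same route as the paper's own argument: the same threshold $3.2$ from checking $|n(z)|<0.5$, the same bound $0.004|z|^3+0.197|z|\le \tfrac14|z|$ on that range, and the same recursion $h_{t+1}\le(1+\tfrac{\eta}{4})h_t$ with failure threshold $5.2$. The only (harmless) differences are that you spell out the monotonicity check of $n$ and the tightness of $n(3.2)\approx 0.499$ more explicitly than the paper does.
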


\section{Discussion of Piecewise Approximation} \label{app_piecewise}
Another commonly used approximation method of the sigmoid function is the piecewise linear approximation \cite{mohassel2017secureml}:
\begin{equation}
\label{eq piecewise}
S(z) \approx 
\begin{cases}
    0,  &  z \le -0.5 \\
    z+0.5,  &  |z| < 0.5 \\
    1, &   z \ge 0.5
\end{cases}
.
\end{equation}
The mini-batch gradient can then be approximated as
\begin{align}
    \nabla\ell_{\mathcal{B}_t}(\bm\theta_{t-1})  \approx \frac{1}{s}\sum_{i\in \mathcal{B}_t}f_{i,t}\bm x_i
\end{align}
at $t$-th iteration, where the coefficient is defined through
\begin{align}
    f_{i,t} := &\vmathbb{1}[|\bm\theta_{t-1}^{\top}\bm x_i| < 0.5](\bm\theta_{t-1}^{\top}\bm x_i -0.5y_i) \nonumber \\
    +&\vmathbb{1}[y_i\bm\theta_{t-1}^{\top}\bm x_i \le -0.5] (-y_i).
\end{align}
We assume $s\le \min\{d_A,d_B\}$ in the following analyses, so that both parties can learn the precise value of $f_{i,t}$ with probability one by Theorem \ref{Theorem Linear}.

\subsection{Privacy Analysis for Party $B$}
Note that 
    \begin{equation}
        |f_{i,t}|=
        \begin{cases}
            1  & \  \text{if} \ \  y_i\bm\theta_{t-1}^{\top}\bm x_i \le -0.5 \\
    |\bm\theta_{t-1}^{\top}\bm x_i - 0.5y_i| \in (0, 1)  & \  \text{if} \ \ |\bm\theta_{t-1}^{\top}\bm x_i| < 0.5 \\
    0  & \  \text{if} \ \ y_i\bm\theta_{t-1}^{\top}\bm x_i \ge 0.5
        \end{cases}
    \end{equation}
    Decompose
    \begin{align}
    \bm\theta_{t-1}^{\top}\bm x_i = \underbrace{\left(\bm\theta^A_{t-1}\right)^{\top}\bm x^A_{i}}_{g^A_{i,t}} + \underbrace{\left(\bm\theta^B_{t-1}\right)^{\top}\bm x^B_{i}}_{g^B_{i,t}}.
\end{align}
then there are three possible cases:
    \begin{itemize}
        \item[(1)] $|f_{\bm i,t}| = 1.$ Then Party $A$ can learn the \textit{range} of $g_{i,t}^B$
        \begin{equation}
            g_{i,t}^B
            \begin{cases}
                    \le -0.5-g_{i,t}^A, & \ \ \text{if} \ \ y = 1, \\
                    \ge 0.5- g_{i,t}^A, & \ \ \text{if} \ \ y = -1.
            \end{cases}
        \end{equation}
        
        \item[(2)] $0 <|f_{i,t}| < 1.$ Then Party $A$ can learn the \textit{precise value} of $g_{i,t}^B=f_{i, t}+0.5y_i-g_{i,t}^A$.        
        \item[(3)] $f_{i,t}=0.$ Then Party $A$ can learn the \textit{range} of $g_{i, t}^B$
        \begin{equation}
            g_{i, t}^B
            \begin{cases}
                    \ge 0.5-g_{i, t}^A, & \ \ \text{if} \ \ y = 1, \\
                    \le -0.5- g_{i, t}^A, & \ \ \text{if} \ \ y = -1.
            \end{cases}
        \end{equation}
    \end{itemize}
    Therefore, if Party $A$ collect those instance which falls into the second case:
    \begin{align}
        D_t = \{i \in \mathcal{B}_t: 0<|f_{i,t}|<1\},
    \end{align}
    then under $d_B = 1$, he can sort the data in $D_t$ according to the $x_B$ feature. Also, similar to the previous method, we have the following theorem. 
    
    \begin{theorem}
In the piecewise approximation scheme, suppose $s \le d_A$ and $d_B= 1$, and $D$ becomes sortable in $R$ epochs. Then Party $A$ can learn the precise value of the data in terms of feature $x_B$ (up to a difference in sign) in $\max\{R,2\}$ epochs.
\end{theorem}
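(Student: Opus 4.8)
The plan is to mirror the proof of Theorem~\ref{theorem precise}, adapting the algebra to the piecewise coefficients and the fact that Party~$A$ only recovers $g^B_{i,t}$ precisely on the ``case-(2)'' indices. First, since $s\le d_A$ and the data are drawn from a continuous distribution, Theorem~\ref{Theorem Linear} lets Party~$A$ recover the exact value of every coefficient $f_{i,t}$, $i\in\mathcal{B}_t$, $t\in[T]$, with probability one. By the case analysis preceding the theorem, for each index in $D_t=\{i\in\mathcal{B}_t:0<|f_{i,t}|<1\}$ Party~$A$ additionally recovers the exact scalar $g^B_{i,t}=\theta^B_{t-1}x^B_i$ (recall $d_B=1$). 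By hypothesis $D$ becomes sortable after $R$ epochs, so Party~$A$ knows all ratios $r_i:=x^B_i/x^B_1$ for a fixed base index $1$; note this forces every index to lie in some $D_t$, and — unless $m=1$ — the merging that produces sortability necessarily involves case-(2) indices in at least two distinct epochs.

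Second, I would isolate two ``anchors'': an index $i\in D_{t_1}$ with $t_1$ in an early epoch and an index $j\in D_{t_2}$ with $t_2$ in a strictly later epoch, $t_1<t_2$. If $R\ge 2$, both anchors lie within the first $R$ epochs by the remark above; if $R=1$ (which forces $m=1$, i.e.\ $s=n$), one additional epoch suffices, which is precisely why the bound is $\max\{R,2\}$. Party~$A$ then knows $g^B_{i,t_1}=\theta^B_{t_1-1}x^B_i$ and $g^B_{j,t_2}=\theta^B_{t_2-1}x^B_j$; dividing by the known ratios $r_i,r_j$ expresses both in terms of the single unknown $x^B_1$ times the scalars $\theta^B_{t_1-1}$ and $\theta^B_{t_2-1}$.

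Third, I would telescope the $\theta^B$-update $\theta^B_t=\theta^B_{t-1}-\frac{\eta_t}{s}\sum_{l\in\mathcal{B}_t}f_{l,t}x^B_l$ over $t=t_1,\dots,t_2-1$, substituting $x^B_l=r_l x^B_1$, to obtain
\begin{align*}
\frac{g^B_{i,t_1}}{r_i}-\frac{g^B_{j,t_2}}{r_j}=\big(\theta^B_{t_1-1}-\theta^B_{t_2-1}\big)x^B_1=\left(\sum_{t=t_1}^{t_2-1}\frac{\eta_t}{s}\sum_{l\in\mathcal{B}_t}f_{l,t}r_l\right)(x^B_1)^2 ,
\end{align*}
where every quantity on the right except $(x^B_1)^2$ is known to Party~$A$ ($\eta_t$ is public, and $f_{l,t},r_l$ were recovered above). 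Solving yields $x^B_1$ up to sign, hence $x^B_l=r_l x^B_1$ for all $l\in[n]$; the denominator is nonzero with probability one over the data, exactly as in Theorem~\ref{theorem precise}.

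I expect the main obstacle to be the anchor-existence step rather than the computation: one must argue that within $\max\{R,2\}$ epochs there are two iterations $t_1<t_2$ in distinct epochs with $D_{t_1},D_{t_2}\neq\emptyset$, and handle cleanly the degenerate case $m=1$ (equivalently $s=n$), where each epoch has a single batch and $\theta^B$ may fail to move once every point is classified with margin. Apart from that subtlety, the argument is a routine repetition of the telescoping identity in the proof of Theorem~\ref{theorem precise}, with the normalization $\frac{1}{4s}$ there replaced by $\frac{1}{s}$ to match the piecewise gradient.
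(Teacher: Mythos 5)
Your proposal is correct and follows exactly the route the paper intends: the paper gives no explicit proof here, merely invoking ``similar to the previous method,'' i.e., the telescoping identity from the proof of Theorem~\ref{theorem precise} with the normalization $1/(4s)$ replaced by $1/s$ and with exact recovery of $g^B_{i,t}$ restricted to the case-(2) indices in $D_t$ --- which is precisely what you reconstruct. Your explicit handling of the two anchors, the nonvanishing denominator, and the degenerate $m=1$ case is, if anything, more careful than the paper's own (omitted) argument.
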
 

\subsection{Privacy Analysis for Party $A$} 
In analogy to Theorem \ref{theorem extreme}, Party $B$ can determine the label within every batch with probability one if $d_A = 0$ (note this does not require $s \le d_B$). Now suppose $d_A \ge 1$ and $s \le d_B$.
    
   Similar to the previous subsection, there are three possible cases:
    \begin{enumerate}
        \item[(1)] $|f_{i,t}|=1.$ Then $f_{i,t} = -y_i$ and Party $B$ can directly obtain the label via 
        \begin{align}
            y_i = -\sign{(f_{i,t})}; \label{case 1}
        \end{align}
        \item[(2)] $0 < |f_{i,t}| < 1.$ Then $f_{i,t} = \bm\theta_{t-1}^{\top}\bm x_i - 0.5y_i$ and $|\bm\theta_{t-1}^{\top}\bm x_i| < 0.5$. Therefore, Party $B$ can directly obtain the label via
        \begin{align}
            y_i = -\sign{(f_{i,t})};  \label{case 2}
        \end{align}
        \item[(3)] $f_{i,t} = 0.$ Then Party $B$ only knows
        \begin{align}
            y_i = \sign{(\bm \theta_{t-1}^{\top}\bm x_i)} \ \ \wedge \ \ |\bm \theta_{t-1}^{\top}\bm x_i| \ge 0.5,
        \end{align}
        so he cannot say much about the label in general.  \\ 
    \end{enumerate}
        
        Therefore, denote
        \begin{align}
            \mathcal{L}_t := \{i\in \mathcal{B}_t: 0 <|f_{i,t}|\le 1\},
        \end{align}
        and we have the following theorem.
        
        \begin{theorem} \label{theorem piecewise_label}
In the piecewise approximation scheme, Party $B$ can use $y_i = -\sign(f_{i,t})$ to determine the label for $i\in \mathcal{L}_t$ and all $t$. 
\end{theorem}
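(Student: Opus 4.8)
The plan is to reduce the statement to the sign structure of the piecewise coefficient $f_{i,t}$. First I would recall that, since we assume $s\le d_B$, Theorem \ref{Theorem Linear} applied to the (continuous) distribution of $\{\bm x^B_i\}_{i\in[n]}$ lets Party $B$ uniquely recover each coefficient $f_{i,t}$ from his own gradient $\nabla^B_t=\frac1s\sum_{i\in\mathcal{B}_t}f_{i,t}\bm x^B_i$ with probability one; hence the quantity $-\sign(f_{i,t})$ is genuinely computable by $B$, and it remains to verify that it equals $y_i$ on $\mathcal{L}_t$.

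Next I would split the index set $\mathcal{L}_t=\{i\in\mathcal{B}_t:0<|f_{i,t}|\le 1\}$ according to the piecewise definition of $f_{i,t}$, which forces exactly one of two subcases for $i\in\mathcal{L}_t$ (the branch $y_i\bm\theta_{t-1}^{\top}\bm x_i\ge 0.5$ gives $f_{i,t}=0$ and is excluded by the condition $|f_{i,t}|>0$). In the subcase $|f_{i,t}|=1$ we are in the branch $y_i\bm\theta_{t-1}^{\top}\bm x_i\le-0.5$, where by definition $f_{i,t}=-y_i$; since $y_i\in\{-1,1\}$ this immediately gives $\sign(f_{i,t})=-y_i$, i.e. $y_i=-\sign(f_{i,t})$. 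In the subcase $0<|f_{i,t}|<1$ we are in the branch $|\bm\theta_{t-1}^{\top}\bm x_i|<0.5$, where $f_{i,t}=\bm\theta_{t-1}^{\top}\bm x_i-0.5y_i$; since $|\bm\theta_{t-1}^{\top}\bm x_i|<0.5=|0.5y_i|$, the term $-0.5y_i$ dominates in absolute value and therefore $\sign(f_{i,t})=\sign(-0.5y_i)=-y_i$, again yielding $y_i=-\sign(f_{i,t})$. Taking the union of the two subcases over all $i\in\mathcal{L}_t$ and all $t$ completes the argument.

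There is essentially no hard step here; the only thing to be careful about is the boundary bookkeeping, namely checking that on $\mathcal{L}_t$ the two subcases are genuinely exhaustive and that the strict/non-strict inequalities in the definition of $\mathcal{L}_t$ line up with the piecewise branches — in particular, that $|f_{i,t}|=1$ cannot arise from the middle branch (it cannot, since there $|\bm\theta_{t-1}^{\top}\bm x_i-0.5y_i|<1$ strictly, using $|\bm\theta_{t-1}^{\top}\bm x_i|<0.5$). I would also add a closing remark that the complementary set $\{i\in\mathcal{B}_t:f_{i,t}=0\}$ is exactly where $B$ learns only $y_i=\sign(\bm\theta_{t-1}^{\top}\bm x_i)$ together with $|\bm\theta_{t-1}^{\top}\bm x_i|\ge 0.5$, which cannot be resolved from $f_{i,t}$ alone — this explains why the conclusion is restricted to $\mathcal{L}_t$ rather than all of $\mathcal{B}_t$.
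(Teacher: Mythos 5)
Your proposal is correct and follows essentially the same route as the paper: the paper's argument is exactly the three-way case split on $|f_{i,t}|$ (equal to $1$, strictly between $0$ and $1$, or $0$), with $\mathcal{L}_t$ collecting the first two cases where $\sign(f_{i,t})=-y_i$. Your extra boundary bookkeeping (that $|f_{i,t}|=1$ cannot arise from the middle branch, and that the two indicator branches are mutually exclusive) is a welcome tightening of what the paper leaves implicit.
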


\end{document}